\keywords{Addressing machines, \lam-calculus, combinatory algebras, \lam-models.}
\newcommand{\xmark}{\text{\ding{55}}}%
\newcommand{\cA}{\mathcal{A}}
\newcommand{\cC}{\mathcal{C}}
\newcommand{\cI}{\mathcal{I}}
\newcommand\cM[1][\Addrs]{\mathcal{M}_{#1}}
\newcommand\cT{\mathcal{T}}
\renewcommand\cR{\mathcal{R}}
\newcommand{\cS}{\mathcal{S}}
\newcommand{\und}{\ \land \ }
\newcommand{\imp}{\quad \Rightarrow \quad }
\newcommand{\eqbnf}{\,::=\,}
\newcommand{\st}{\mid}
\newcommand\set[1]{\{#1\}}
\newcommand{\rel}[1]{\text{\sc #1}} 
\newcommand{\nat}{\mathbb{N}}
\newcommand{\tuple}[1]{\langle #1\rangle}
\newcommand{\Lam}{\ensuremath{\Lambda}}
\newcommand{\Lamo}{\ensuremath{\Lambda^o}}
\newcommand{\lam}{\texorpdfstring{\ensuremath{\lambda}}{lambda}}
\newcommand{\Lama}{\ensuremath{\Lambda}(\Addrs)}
\newcommand{\lama}{\ensuremath{\lambda\Addrs}}
\newcommand{\comb}[1]{{\mathbf{#1}}}
\newcommand{\Var}{\mathrm{Var}}
\newcommand{\FV}[1]{\mathrm{FV}(#1)}
\newcommand{\subst}[2]{[#2/#1]}
\newcommand{\msto}[1][\beta]{\twoheadrightarrow_{#1}}
\newcommand{\invredd}[1][\beta]{{\,}_{#1}\!\!\twoheadleftarrow}
\newcommand{\be}{{\beta\eta}}
\newcommand{\Om}{\comb{\Omega}}
\newcommand{\cons}[1]{\hat #1}
\newcommand{\CL}{\mathrm{CL}}
\newcommand{\ssk}{\comb{k}}
\newcommand{\sss}{\comb{s}}
\newcommand{\ssi}{\comb{i}}
\newcommand{\sso}{\boldsymbol{\varepsilon}}
\newcommand{\am}{addressing machine}
\newcommand{\mach}{\mathsf} 
\newcommand\C{\mach C}
\newcommand\mM{\mach M}
\newcommand\mV{\mach V}
\newcommand\mZ{\mach Z}
\newcommand\mN{\mach N}
\newcommand\mK{\mach K}
\newcommand\mS{\mach S}
\newcommand{\Tapes}[1][\Addrs]{\mathbb{T}_{#1}}
\newcommand{\Addrs}{\mathbb{A}}
\newcommand{\Null}{\varnothing}
\newcommand{\Cons}[2]{#1::#2}
\newcommand{\appT}[2]{#1\,@\,#2\,}
\newcommand{\append}[2]{\appT{#1}{{[}#2{]}}} 
\newcommand{\repl}[2]{[#1:=#2]}
\newcommand{\stuck}[1]{\mathsf{stuck}(#1)}
\newcommand{\ins}[1]{\mathtt{#1}}
\newcommand{\RaS}[1]{\ins{Load}~#1}
\newcommand{\Load}[1]{\ins{Load}~#1}
 \newcommand{\Call}[1]{\ins{Call}~#1}
 \newcommand{\Apply}[3]{#3\shortleftarrow \ins{App}(#1,\,#2)}
\newcommand{\Lookinv}[1]{\#^{-1}(#1)}
\newcommand{\Lookup}{\#}
\newcommand{\App}[2]{#1\cdot #2}
\newcommand{\redrule}{(\ensuremath{\twoheadrightarrow_\Addrs})}
\newcommand{\redwerule}{(\ensuremath{\twoheadrightarrow_\Addrs^{\text{\ae}}})}
\newcommand{\extrule}{(\ensuremath{\text{\ae}})}
\newcommand{\goesto}{\Downarrow}
\newcommand{\red}[1][c]{\to_{\mach{#1}}}
\newcommand{\redd}[1][c]{\twoheadrightarrow_{\mach{#1}}}
\newcommand{\redh}{\to_{\mach{h}}}
\newcommand{\reddh}{\twoheadrightarrow_{\mach{h}}}
\newcommand{\equiva}{\equiv_\Addrs}
\newcommand{\equivea}{\equiv_\Addrs^{\text{\ae}}}
\newcommand{\sima}{\simeq_\Addrs}
\newcommand{\simea}{\simeq_\Addrs^{\text{\ae}}}
\newcommand{\eqea}{=_\Addrs^{\text{\ae}}}
\newcommand{\Val}[1]{\mathrm{Val}_{#1}}
\newcommand{\CInt}[2][\vec x]{\vert #2\vert_{#1}}
\newcommand{\Int}[2]{[\![ #1]\!]_{#2}}
\newcommand{\Th}[1]{\mathrm{Th}(#1)}
\newcommand{\blam}{\boldsymbol{\lambda}}
\newcommand{\blame}{\boldsymbol{\lambda\eta}}
\theoremstyle{plain} 
\def\bsub{\begin{enumerate}[(i)]}
\def\esub{\end{enumerate}}
\definecolor{red}{rgb}{1, 0, 0}
\begin{document}

\title[Instructions]{Addressing Machines as Models of \lam-Calculus}

\author[G.~Della Penna]{Giuseppe Della Penna\rsuper{a}}	
\address{Dep.\ of Information Engineering, Computer Science and Mathematics, University of L'Aquila, Italy}	
\email{giuseppe.dellapenna@univaq.it}  

\author[B.~Intrigila]{Benedetto Intrigila\rsuper{b}}	
\address{Dipartimento di Ingegneria dell'Impresa, University of Rome ``Tor Vergata'', Italy}	
\email{intrigil@mat.uniroma2.it}  

\author[G.~Manzonetto]{Giulio Manzonetto\lmcsorcid{0000-0003-1448-9014}\rsuper{c}}	
\address{Univ. Paris 13, Sorbonne Paris Cit\'e, LIPN, UMR 7030, CNRS, F-93430 Villetaneuse, France.}	
\email{manzonetto@univ-paris13.fr}  





\begin{abstract}
  \noindent Turing machines and register machines have been used for decades in theoretical computer science as abstract models of computation.
Also the \lam-calculus has played a central role in this domain as it allows to focus on the notion of functional computation, based on the substitution mechanism, while abstracting away from implementation details.
The present article starts from the observation that the equivalence between these formalisms is based on the Church-Turing Thesis rather than an actual encoding of \lam-terms into Turing (or register) machines. The reason is that these machines are not well-suited for modelling \lam-calculus programs.

We study a class of abstract machines that we call \emph{addressing machine} since they are only able to manipulate memory addresses of other machines. The operations performed by these machines are very elementary: load an address in a register, apply a machine to another one via their addresses, and call the address of another machine. We endow addressing machines with an operational semantics based on leftmost reduction and study their behaviour. The set of addresses of these machines can be easily turned into a combinatory algebra.
In order to obtain a model of the full untyped \lam-calculus, we need to introduce a rule that bares similarities with the $\omega$-rule and the rule $\zeta_\beta$ from combinatory logic.
\end{abstract}

\maketitle

\section*{Introduction}
In theoretical computer science several models of computation have been considered over the years, since the pioneering work of Turing~\cite{Turing36}.
Turing Machines (TMs) certainly played a crucial role in the understanding of the notion of computation, while Register Machines (RMs) are more adapted to represent programs executed in a von Neumann architecture~\cite{Rogers67}.
From a recursion-theoretic perspective, the class of partial recursive functions provides a natural description of those numeric functions that can be calculated by a mechanical device~\cite{Kleene36}.
In mathematical logic, \lam-calculus~\cite{Bare} and the related formalism --- combinatory logic~\cite{CurryF58} --- proved to be an inexhaustible source of inspiration for the development of formal systems, proof assistants and functional programming languages. As it is well-known, the basic computational mechanism of \lam-calculus is the {\em symbolic substitution of an expression for a variable}.
All these formalisms --- and many others that have been subsequently introduced --- are quite different, but they can be proved equivalent in the sense that they are capable of representing the same class of partial numerical functions, i.e.\ the class of partial recursive functions.
Despite the enormous importance of this result --- in particular as a strong evidence for the so called Turing-Church Thesis --- it is still of great interest to understand, at a deeper level, the relationships between the different computational formalisms.

In particular, the relationship between \lam-calculus and partial recursive functions was investigated by Henk Barendregt, who tried to build during his PhD a model of untyped \lam-calculus (\lam-model~\cite{Koymans82,Meyer82}) out of Kleene's partial combinatory algebra having the set of ``codes'' $\nat$ as underlying set and as application the partial operator $\{x\}(y)$, that can be interpreted as the possible result of applying the partial computable function with code $x$ to the input~$y$.
His intention was to use this binary operator $\{\cdot\}(\cdot)$ to construct a (total) combinatory algebra in such a way that Kleene's translation of \lam-calculus results would become a simple model-theoretic interpretation. It is important to observe that a direct approach cannot work, as recursive functions implicitly use the classic computational model, which requires that a function is \emph{strict} on its arguments, that is, the function is undefined whenever any of its arguments is undefined. On the other hand, both \lam-calculus and combinatory logic allow the representation of non-strict functions such as the combinator~$\comb{K}$.

Barendregt has set up several sophisticated constructions, but a definite solution is still missing.
The problem is nowadays receiving the attention of the scientific community because of the recent republication of his PhD thesis~\cite{BarendregtTh}, extended with commentaries.
On the bright side, these investigations led to the formulation of the famous $\omega$-rule because --- if such a \lam-model exists --- then it needs to satisfy this strong extensionality axiom.

Following the same line of research, but attacking the problem from a different angle, one might meaningfully wonder whether it is possible to construct a \lam-model based on appropriate abstract machines. The most obvious and canonical choice would be considering Turing Machines, but such an attempt has the same problem as the one encountered with recursive function, since TMs are strict on their arguments.
A second problem is how to represent higher-order computations: in an imperative programming language a function can take another function as argument by working with its address, but in a TM this would require to encode processes as data and then manipulate and execute such codes indirectly. This makes the simple, intuitive notion of communication through addresses  extremely difficult to realize.
To this day, no \lam-model of this kind has ever been constructed.

In this article we define a class of abstract machines, where the notions of  address and communication (through addresses) are not only crucial to model computation, but they become the unique ingredients available.
These machines are called \emph{addressing machines} and possess a finite tape from which they can read the input, some internal registers where they can store values read from the tape, and an internal program which is composed by a  list of instructions that are executed sequentially.
The input-tape and the internal registers are reminiscent of those in TMs and RMs, respectively.
Every machine is uniquely identified by its address, which is a value taken from a fixed countable set $\Addrs$. In this formalism, addresses are the only available data-type --- this means that both the input-tape and the internal registers of a machine (once initialized) contain addresses from $\Addrs$.
Programs are written in an assembly language possessing only three instructions\footnote{This choice is made on purpose, in the attempt of determining the minimum amount of operations giving rise to a Turing-complete formalism.}.
Besides reading its inputs, an \am{} can apply two addresses $a,b$ with each other and store the resulting address $a\cdot b$ in an internal register.
Intuitively, $a\cdot b$ is obtained by first taking the machine $\mach{M}$ having address $a$, then appending $b$ to its input-tape, and finally calculating the address of this new machine.
This application operation being static and manipulating addresses exclusively is \emph{total} even when the referenced machines are non-terminating once executed.
As a last step of its execution, an addressing machine can transfer the computation to another machine, possibly extending its input-tape, by retrieving its address from a register.
Although not crucial in the abstract definition of an addressing machine, it should be clear at this point that any implementation of this formalism requires the association between the machines and their addresses to be effective (see Section~\ref{sec:conclusions} for more details).

Addressing machines share with \lam-calculus the fact that there is no fundamental distinction between processes and data-types: in order to perform calculations on natural numbers a machine needs to manipulate the addresses of the corresponding numerals.
Another similarity is the fact that in both settings communication is achieved by transferring the computation from one entity to another one.
In the case of \am s, the machine currently ``in execution'' transfers the control by calling the address of another machine. In \lam-calculus, the subterm ``in charge'' is the one occupying the so-called ``head position'' and the control of the computation is transferred when the head variable is substituted by another term.
It is worth noting that process calculi such as the $\pi$-calculus also address communication using the concept of channel, where messages are exchanged~\cite{Milner99,Sangiorgi01}. This is not the kind of communication that we are going to model here: our form of communication is encoded in the notion of address, so that a machine receiving a message results in a new machine with a different address. In other words, we do not model the dynamics of the communication, but the evolution of the machine addresses actually encodes the effects of communication.
Another difference is the fact that $\pi$-calculus naturally models parallel computations as well as concurrency, while \am s are designed for representing sequential computations (one machine at a time is executed).

\paragraph{Contents.} The aim of the paper is twofold.
On the one side we want to present the class of \am s and analyze their fundamental properties. This is done in Section~\ref{sec:machines}, where we describe their  operational semantics in two different styles: as a term rewriting system (small-step semantics) and as a set of inference rules (big-step semantics). The two approaches are shown to be equivalent in case of addressing machines executing a terminating program (Proposition~\ref{prop:equivsem}).
On the other side, we wish to construct a model of the untyped \lam-calculus based on \am s, and study the interpretations of \lam-terms.
For this reason, we recall in the preliminary Section~\ref{sec:pre} the main facts about \lam-calculus, its equational theories and denotational models.
It turns out that the set $\Addrs$ of addresses, together with the operation of application previously described, is not a combinatory algebra (nor, \emph{a fortiori}, a \lam-model). In Section~\ref{sec:combalg} we show that it can be turned into a combinatory algebra by quotienting under an equivalence relation arising naturally from our small-step operational semantics. Two addresses are equivalent if the corresponding machines are interconvertible using a more liberal rewriting relation. From the confluence property enjoyed by this relation, we infer the consistency of the algebra (Proposition~\ref{prop:cAisnonextcombal}).
Unfortunately, the combinatory algebra so-obtained is not yet a model of \lam-calculus --- there are still $\beta$-convertible \lam-terms having different interpretations. 
Section~\ref{sec:consistency} is devoted to showing that a \lam-model actually arises when adding to the system a mild form of extensionality sharing similarities both with the $\omega$-rule in \lam-calculus~\cite{BarendregtTh} and with the rule $\zeta_\beta$ from combinatory logic~\cite{HindleyS86}. The consistency of the model follows from an analysis of the underlying ordinal.
Interestingly, the model itself is not extensional (Theorem~\ref{thm:Sinsonoextslm}).

\paragraph{Related works.} A preliminary version of \am s appeared in Della Penna's MSc thesis~\cite{DellaPennaTh}.
Other abstract machines having similar primitive instructions are present in the literature, but they were studied from the perspective of functional programs implementation,
see e.g.~\cite{FairbairnW87}. We do not claim that addressing machines are innovative, the originality of our work relies on the construction of a \lam-model (Section~\ref{sec:consistency}) and its analysis.
The practice of associating an address to a term is also well-established in the implementation of functional programming languages, and can be seen as the practical counterpart of explicit substitutions~\cite{LevyM99,BlancLM05,AccattoliCGC19}. The relationship between our \am s and explicit substitutions will be discussed  in Section~\ref{sec:conclusions}.

\section{Preliminaries}\label{sec:pre}

We present some notions that will be useful in the rest of the article.

\subsection{The Lambda Calculus --- Its Syntax}\label{subsec:lamcal}

For the \lam-calculus we mainly follow Barendregt's first book~\cite{Bare}.
We consider fixed a countable set $\Var$ of \emph{variables} denoted by $x,y,z,\dots$
\begin{defi}
The set $\Lam$ of \emph{\lam-terms} over $\Var$ is generated by the following simplified\footnote{
This basically means that parentheses are left implicit.} grammar (for $x\in\Var$):
\begin{equation}\tag{$\Lambda$}
M,N,P,Q \eqbnf\ x\mid \lam x.M\mid MN
\end{equation}
\end{defi}
We assume that application is left-associative and has a higher precedence than \lam-abstraction. Therefore $\lam x.\lam y.\lam z.xyz$ stands for $(\lam x.(\lam y.(\lam z.(xy)z)))$.
Moreover, we often write $\lam x_1\dots x_n.M$ for $\lam x_1\dots \lam x_n.M$.

\begin{defi} Let $M\in\Lam$.
\begin{enumerate}[(i)]
\item The set $\FV{M}$ of \emph{free variables} of $M$ is defined by induction:
\[
	\begin{array}{lll}
	\FV{x} &=& \set{x},\\
	\FV{\lam x.P} &=& \FV{P} -\set{x},\\
	\FV{PQ} &=&\FV{P}\cup\FV{Q}.\\
	\end{array}
\]
\item We say that $M$ is \emph{closed}, or \emph{a combinator}, whenever $\FV{M} = \emptyset$.
\item We let $\Lamo = \set{ M\in\Lam \st \FV{M} = \emptyset}$ be the set of all combinators.
\end{enumerate}
\end{defi}

\noindent
 The variables occurring in $M$ that are not free are called ``bound''.
From now on, \lam-terms are considered modulo \emph{$\alpha$-conversion}, namely, up to the renaming of bound variables (see~\cite[\S2.1]{Bare}).
\begin{nota} Concerning specific combinators we let:
\[
	\begin{array}{lcll}
	\comb{I}&=&\lam x.x,&\textrm{identity,}\\
	\comb{1}&=&\lam xy.xy,&\textrm{an $\eta$-expansion of the identity,}\\
	\comb{K}&=&\lam xy.x,&\textrm{first projection,}\\
	\comb{F}&=&\lam xy.y,&\textrm{second projection,}\\
	\comb{S}&=&\lam xyz.xz(yz),&\textrm{$S$-combinator from Combinatory Logic,}\\
	\comb{\Delta}&=&\lam x.xx,&\textrm{self-application,}\\
	\Om&=&\comb{\Delta\Delta},&\textrm{paradigmatic looping combinator,}\\
	\comb{Y}&=&\lam f.(\lam x.f(xx))(\lam x.f(xx)),&\textrm{Curry's fixed point combinator.}\\
	\end{array}
\]
\end{nota}
The \lam-calculus is given by the set $\Lam$ endowed with reduction relations that turn it into a higher-order term rewriting system.

We say that a relation $\rel{R}\subseteq\Lam^2$ is \emph{compatible} if it is compatible w.r.t.\ application and \lam-abstraction. This means that, for $M,N,P\in\Lam$, if $M \rel{\,R\,} N$ holds then also $MP \rel{\,R\,} NP$, $PM \rel{\,R\,}PN$ and $\lam x.M \rel{\,R\,} \lam x.N$ hold.

\begin{defi} Define the following reduction relations.
\bsub
\item The \emph{$\beta$-reduction} $\to_\beta$ is the least compatible relation closed under the rule
\begin{equation}\tag{$\beta$}
	(\lam x.M)N\to M\subst{x}{N}
\end{equation}
where $M\subst{x}{N}$ denotes the \lam-term obtained by substituting $N$ for all free occurrences of $x$ in $M$, subject to the usual proviso about renaming bound variables in $M$ to avoid capture of free variables in $N$.
\item Similarly, the \emph{$\eta$-reduction} $\to_\eta$ is the least compatible relation closed under the rule
\begin{equation}\tag{$\eta$}
	\lam x.Mx \to M, \textrm{ if }x\notin\FV{M}.
\end{equation}
\item Moreover, we define $\to_\be\ =\ \to_\beta\cup\to_\eta$.
\item
	The relations $\to_\beta,\to_\eta$ and $\to_\be$ respectively generate the notions of \emph{multi-step reduction}  $\msto,\msto[\eta],\msto[\beta\eta]$ (resp.\ \emph{conversion} $=_\beta,=_\eta,=_\be$) by taking the reflexive and transitive (and symmetric) closure.
\esub
\end{defi}

\begin{thm}[Church-Rosser]
The reduction relation $\msto[\beta(\eta)]$ is confluent:
\[
M\msto[\beta(\eta)] M_1 \und M\msto[\beta(\eta)]M_2\imp \exists N\in\Lam\,.\,
M_1\msto[\beta(\eta)] N \invredd[\beta(\eta)]M_2
\]
\end{thm}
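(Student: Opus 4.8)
The plan is to follow the Tait--Martin-L\"of technique of \emph{parallel reduction}, in the streamlined form due to Takahashi, and to reduce the $\beta(\eta)$ case to the separate $\beta$ and $\eta$ cases via a commutation argument, as in~\cite{Bare}. First I would introduce a parallel reduction relation ${\Rightarrow_\beta}\subseteq\Lam^2$, defined inductively, allowing the simultaneous contraction of any set of $\beta$-redexes already present in a term: in particular $x\Rightarrow_\beta x$, the relation is compatible, and $(\lam x.M)N\Rightarrow_\beta M'\subst{x}{N'}$ whenever $M\Rightarrow_\beta M'$ and $N\Rightarrow_\beta N'$. Two easy inductions then sandwich it between single-step and multi-step reduction, ${\to_\beta}\subseteq{\Rightarrow_\beta}\subseteq{\msto}$, so that the reflexive--transitive closure of $\Rightarrow_\beta$ coincides with $\msto$; it therefore suffices to establish the diamond property for $\Rightarrow_\beta$ and lift it to $\msto$ by the standard tiling argument.

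The core of the $\beta$ case is a \emph{substitution lemma} for parallel reduction: if $M\Rightarrow_\beta M'$ and $N\Rightarrow_\beta N'$ then $M\subst{x}{N}\Rightarrow_\beta M'\subst{x}{N'}$, proved by induction on the derivation of $M\Rightarrow_\beta M'$, the proviso on capture-avoidance being handled by $\alpha$-conversion. With this in hand I would obtain the diamond property most cleanly through Takahashi's device: define for each $M$ its \emph{complete development} $M^\ast$, obtained by contracting all the redexes of $M$ at once, and prove by induction the triangle property $M\Rightarrow_\beta M'\imp M'\Rightarrow_\beta M^\ast$. Since $M^\ast$ does not depend on the chosen $M'$, the diamond for $\Rightarrow_\beta$ is immediate: from $M\Rightarrow_\beta M_1$ and $M\Rightarrow_\beta M_2$ both $M_1$ and $M_2$ parallel-reduce to the common $M^\ast$, whence confluence of $\msto$ follows.

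For the full $\beta(\eta)$ statement I would invoke the Hindley--Rosen commutation lemma: if $\msto$ and $\msto[\eta]$ are each confluent and \emph{commute} --- i.e.\ $M_1\invredd[\beta]M\msto[\eta]M_2\imp\exists N\in\Lam\,.\,M_1\msto[\eta]N\invredd[\beta]M_2$ --- then their union $\msto\cup\msto[\eta]$, whose reflexive--transitive closure coincides with $\msto[\beta\eta]$, is confluent. Confluence of $\msto[\eta]$ alone is routine, since $\to_\eta$ is strongly normalizing and locally confluent, so Newman's lemma applies. The real work is the commutation of $\beta$ with $\eta$, which I would derive from a local commutation statement for the single-step relations and then lift, again by a parallel-reduction/complete-development analysis that now tracks both kinds of redex simultaneously.

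The step I expect to be the main obstacle is precisely this commutation of $\eta$ with $\beta$, because an $\eta$-contraction can both destroy and create $\beta$-redexes. The delicate configurations are those of the shape $(\lam x.Mx)N$ with $x\notin\FV{M}$, where an $\eta$-step and a $\beta$-step compete, and $\lam x.(\lam y.P)x$, where contracting the $\eta$-redex exposes or merges with an inner $\beta$-redex; one must check that every such critical overlap can be closed, after which the general commutation follows by induction on term structure. The remaining ingredients --- the sandwiching inclusions, the substitution lemma, and the tiling of the diamond into full confluence --- are standard inductions that I would not expect to cause difficulty.
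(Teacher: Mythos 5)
Your proposal is correct, but note that the paper itself gives no proof of this statement: it is recalled as a classical preliminary (the Church--Rosser theorem from~\cite{Bare}) and used as background. The argument you outline --- Tait--Martin-L\"of parallel reduction with Takahashi's complete developments for the $\beta$ case, confluence of $\eta$ via Newman's lemma, and the Hindley--Rosen commutation lemma to combine the two, with the critical overlaps $(\lam x.Mx)N$ and $\lam x.(\lam y.P)x$ closing as you describe --- is precisely the standard proof underlying the cited result, so there is nothing to compare it against beyond confirming it is sound.
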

The \lam-terms are classified into solvable and unsolvable, depending on their capability of interaction with the environment.
\begin{defi}
A \lam-term $M$ is called \emph{solvable} if $(\lam\vec x.M)\vec P =_\beta \comb{I}$ for some $\vec x$ and $\vec P\in\Lam$. Otherwise $M$ is called \emph{unsolvable}.
\end{defi}
We say that a \lam-term $M$ \emph{has a head normal form} (\emph{hnf}) if it reduces to a \lam-term of shape $\lambda x_{1}\ldots x_{n}.yM_{1}\cdots M_{k}$ for some $n,k\ge 0$.
As shown by Wadsworth in~\cite{Wadsworth76}, a \lam-term $M$ is solvable if and only if $M$ has a head normal form.
The typical examples of unsolvable \lam-terms are $\Om, \lam x.\Om$ and $\comb{YI}$.

\subsection{Lambda theories and lambda models}

Conservative extensions of $\beta$-conversion are known as ``\lam-theories'' and have been extensively studied in the literature, see e.g.~\cite{Bare,LusinS04,IntrigilaMP19,IntrigilaS17,ManzonettoPSS19}.

\begin{defi}\
\bsub
\item A \emph{\lam-theory} $\cT$ is any congruence on $\Lam^2$ including $\beta$-conversion $=_\beta$.
\item A \lam-theory $\cT$ is called:
\begin{itemize}
\item \emph{consistent}, if $\cT$ does not equate all \lam-terms;
\item \emph{inconsistent}, if $\cT$ is not consistent;
\item \emph{extensional}, if $\cT$ contains the $\eta$-conversion $=_\eta$ as well;
\item \emph{sensible}, if $\cT$ is consistent and equates all unsolvable \lam-terms;
\item \emph{semi-sensible}, if $\cT$ does not equate a solvable and an unsolvable.
\end{itemize}
\esub
We write $\cT\vdash M = N$, or simply $M =_\cT N$, whenever $(M,N)\in\cT$.
\end{defi}

The set of all \lam-theories, ordered by inclusion, forms a quite rich complete lattice.
We denote by $\blam$ (resp.\ $\blame$) the smallest (resp.\ extensional) \lam-theory.
Both $\blam$ and $\blame$ are consistent, semi-sensible but not sensible.
A \lam-theory can be introduced syntactically, or semantically as the theory of a model. The model theory of \lam-calculus is largely based on the notion of combinatory algebras, and its variations (see, e.g.,~\cite{Koymans82,Selinger02,Meyer82,HindleyLS72} and~\cite[Ch.~5]{Bare}).

\begin{defi}\
\bsub
\item An \emph{applicative structure} is given by $\cA = (A,\cdot\,)$ where $A$ is a set and $(\cdot)$ is a binary operation on $A$ called \emph{application}.
We represent application as juxtaposition and we assume it is left-associative, e.g., $abc = (a\cdot b)\cdot c$.
An equivalence $\simeq$ on $\cA$ is a \emph{congruence} if it is compatible w.r.t.\ application:
\[
	a \simeq a' \und b\simeq b'\imp ab\simeq a'b'
\]

\item
A \emph{combinatory algebra} $\cC = (C, \cdot, \ssk, \sss)$ is an applicative structure for a signature with two constants $\ssk,\sss$, such that $\ssk\neq\sss$ and  ($\forall x,y,z\in C$):
\[
\ssk xy=x, \textrm{ and }\sss xyz=xz(yz).
\]
We say that $\cC$ is \emph{extensional} if the following holds:
\[
	\forall x. \forall y.(\forall z .(xz = yz) \Rightarrow x = y)
\]
\item Given a combinatory algebra $\cC$ and a congruence $\simeq$ on $(C,\cdot\,)$, define:
\[
	\cC_\simeq = (C/_\simeq,\bullet_\simeq,\ssk_\simeq,\sss_\simeq)
\]
where
\[
	\begin{array}{rcl}
	{[}a{]}_\simeq\bullet_\simeq [b]_\simeq &=& [a\cdot b]_\simeq,\\ 
	\ssk_\simeq &=& [\ssk]_\simeq\\
	\sss_\simeq &=& [\sss]_\simeq.\\
	\end{array}
\]
It is easy to check that if $\ssk\not\simeq\sss$ then $\cC_\simeq$ is a combinatory algebra.
\esub
\end{defi}

\noindent
We call $\ssk$ and $\sss$ the \emph{basic combinators}; the derived combinators $\ssi$ and $\sso$ are defined by $\ssi= \sss\ssk\ssk$ and $\sso= \sss(\ssk\ssi)$.
It is not difficult to verify that every combinatory algebra satisfies the identities $\ssi x=x$ and $\sso xy=xy$.

It is well-known that combinatory algebras are models of combinatory logic. A \lam-term $M$ can be interpreted in any combinatory algebra $\cC$ by first translating $M$ into a term $X$ of combinatory logic, written $(M)_\CL = X$, and then interpreting the latter in $\cC$. However, there might be $\beta$-convertible \lam-terms $M,N$ that are interpreted as distinguished elements of~$\cC$. For this reason, not all combinatory algebras are actually models of \lam-calculus.

The axioms of an elementary subclass of combinatory algebras, called \emph{$\lambda$-models}, were expressly chosen to make coherent the definition of interpretation of $\lambda$-terms (see~\cite[Def.~5.2.1]{Bare}).
The \emph{Meyer-Scott axiom} is the most important axiom in the definition of a $\lambda$-model.
In the first-order language of combinatory algebras it becomes:
\[
\forall x. \forall y\,.\,(\forall z\,.\, (xz = yz) \Rightarrow \sso x = \sso y).
\]
The combinator $\sso$ becomes an inner choice operator, that makes coherent the interpretation of an abstraction $\lambda$-term.

\subsection{Syntactic \lam-models}

The definition of a \lam-model is difficult to handle in practice because the five Curry's axioms~\cite[Thm.~5.2.5]{Bare} are complicated to verify by hand.
To prove that a certain combinatory algebra is actually a \lam-model, it is preferable to exploit Hindley's (equivalent) notion of a syntactic \lam-model. See, e.g.,~\cite{Koymans82}.

The definition of syntactic \lam-model in~\cite{Koymans82} is general enough to interpret \lam-terms possibly containing constants $\cons{a}$ representing elements $a$ of a set $A$.
We follow that tradition and denote by $\Lam(A)$ the set of all \lam-terms possibly containing constants from $A$, and we call them \emph{$\lam A$-terms}. For instance, given $a\in A$, we have $M =\comb{I}(\lam x.x\cons a)\cons b\in\Lam(A)$.
All notions, notations and results from Subsection~\ref{subsec:lamcal} extend to $\lam A$-terms without any problem.
In particular, substitution is extended by setting $\cons a\subst{x}{N} = \cons a$, for all $a \in A$ and $N\in\Lam(A)$.
As an example, the $\lam A$-term $M$ above reduces as follows: $M \to_\beta (\lam x.x\cons a)\cons b\to_\beta \cons b\cons a\in\Lam(A)$.
Observe that substitutions of variables by constants always permute, namely $M\subst{x}{\cons a}\subst{y}{\cons b} = M\subst{y}{\cons b}\subst{x}{\cons a}$, for all $a,b\in A$.

Given a set $A$, a \emph{valuation in $A$} is any map $\rho : \Var\to A$. We write $\Val A$ for the set of all valuations in $A$. Given $\rho\in\Val A$ and $a\in A$, define:
\[
	(\rho\repl{x}{a})(y) = \begin{cases}
	a,&\textrm{if }x=y,\\
	\rho(y),&\textrm{otherwise}.
	\end{cases}
\]

\begin{defi}\label{def:syntmod}
A \emph{syntactic \lam-model} is a tuple $\cS = (A,\cdot,\Int{-}{-})$ such that $(A,\cdot)$ is an applicative structure and the \emph{interpretation function}
\[
	\Int{-}{-} : \Lam(A)\times \Val{A}\to A
\]
satisfies
\bsub
\item\label{def:syntmod1}
	$\Int{x}{\rho} = \rho(x)$, for all $x\in\Var$;
\item\label{def:syntmod2}
	$\Int{\cons a}{\rho} = a$, for all $a\in A$;
\item\label{def:syntmod3}
	$\Int{PQ}{\rho} = \Int{P}{\rho}\cdot \Int{Q}{\rho}$;
\item\label{def:syntmod4}
	$\Int{\lam x.P}{\rho}\cdot a = \Int{P}{\rho\repl{x}{a}}$, for all $a\in A$;
\item\label{def:syntmod5}
	$\forall x\in\FV{M}\,.\, \rho(x) = \rho'(x) \imp \Int{M}{\rho} = \Int{M}{\rho'}$;
\item\label{def:syntmod6}
	$\forall a\in A\,.\, \Int{M}{\rho\repl{x}{a}} = \Int{N}{\rho\repl{x}{a}} \imp \Int{\lam x.M}{\rho} = \Int{\lam x.N}{\rho}$.
\esub

\noindent
If $M\in\Lamo$, then $\Int{M}{\rho}$ is independent from the valuation $\rho$ and we simply write $\Int{M}{}$.

We write $\cS \models M = N$ if and only if $\forall \rho\in\Val A\,.\, \Int{M}{\rho} = \Int{N}{\rho}$ holds.
It is easy to check that $\blam\vdash M = N$ entails $\cS\models M = N$.
\end{defi}

The \emph{\lam-theory induced by $\cS$} is defined as follows:
\[
	\Th{\cS} = \set{ M = N \st \cS\models M = N}.
\]
The precise correspondence between \lam-models and syntactic \lam-models is described in~\cite{Bare}, Theorem~5.3.6. For our purposes, it is enough to know that if $\cS$ is a syntactic \lam-model then $\cC_\cS = (A,\cdot,\Int{\comb{K}}{},\Int{\comb{S}}{})$ is a \lam-model. We say that $\cS$ is \emph{extensional} whenever $\cC_\cS$ is extensional as a combinatory algebra.
This holds iff $\Th{\cS}$ is extensional iff $\cS \models \comb{I} = \comb{1}$.

\section{Addressing Machines}\label{sec:machines}

In this section we introduce the notion of an \emph{Addressing Machine}.
We first provide some intuitions, then we proceed with the formal description of such machines.
The general structure of an addressing machine is composed by two substructures:
\begin{itemize}
\item the \emph{internal components}, organized as follows:
	\begin{itemize}
	\item a finite number of \emph{internal registers};
	\item an \emph{internal program}.
\end{itemize}
\item the \emph{input-tape}.
\end{itemize}
As the name suggests, the addressing mechanism is central in this formalism.
Each addressing machine is associated with an address, receives a list of addresses in its input-tape and is able to transfer the computation to another machine by calling its address, possibly extending its input-tape.

\subsection{Tapes, Registers and Programs}
We consider fixed a countable set $\Addrs$ of \emph{addresses}, together with a constant $\Null\notin\Addrs$ that we call ``null'' and that corresponds to an uninitialized register.
\begin{defi} We let $\Addrs_\Null = \Addrs\cup\set{\Null}$.
\bsub
\item
	An \emph{$\Addrs$-valued tape} $T$ is a finite (possibly empty) ordered list of addresses $T = [a_1,\dots,a_n]$ with $a_i\in\Addrs$ for all $i \le n$.
	We write $\Tapes$ for the set of all $\Addrs$-valued tapes.
\item
	 Let $a\in\Addrs$ and $T,T'\in\Tapes$. We denote by $\Cons a {T}$ the tape having $a$ as first element and $T$ as tail. We write $\appT{T}{T'}$ for the concatenation of $T$ and $T'$, which is an $\Addrs$-valued tape itself.

\item
	Given an index $i\in\nat$, an $\Addrs_\Null$-valued \emph{register} $R_i$ is a memory-cell capable of storing either $\Null$ or an address $a\in\Addrs$.
 \item Given $\Addrs_\Null$-valued registers $R_0,\dots,R_{n}$ for $n\ge 0$, an address $a\in\Addrs$ and an index $i\in\nat$, we write $\vec R\repl{R_i}{a}$ for the registers $\vec R$ where the value of $R_i$ has been updated:
 \[
 R_0,\dots,R_{i-1},a,R_{i+1},\dots,R_{n}
 \]
Notice that, whenever $i > n$, we assume that $\vec R\repl{R_i}{a} = \vec R$.
\esub
\end{defi}

\noindent
Addressing machines can be seen as having a RISC architecture, since their internal program is composed by only three instructions. We describe the effects of these basic operations on a machine having $r$ internal registers $R_0,\dots,R_{r-1}$.
Therefore, when we say ``if an internal register $R_i$ exists'' we mean that the condition $0\le i< r$ is satisfied.
In the following, $i,j,k\in\nat$ correspond to indices of internal registers:
	\begin{itemize}
	\item $\Load i$: corresponds to the action of reading the first element $a$ from the input-tape $T$, and writing $a$ on the internal register $R_i$. If the input-tape is empty then the machine remains stuck waiting for an input (however, this is not considered as an error state).\\[3pt]
The \emph{precondition} to execute the operation is that the input-tape is non-empty, namely $T = \Cons a T'$; the \emph{postconditions} are that $R_i$, if it exists, contains the address $a$ and the input-tape of the machine becomes $T'$.
	If $R_i$ does not exist, i.e.\ when $i\ge r$, the content of $\vec R$ remains unchanged (i.e., the input element $a$ is read and subsequently thrown away).
	\item $\Apply i j k$: corresponds to the action of reading the contents of $R_i$ and $R_j$, calling an external \emph{application map} on the corresponding addresses $a_1,a_2$, and writing the result in the internal register $R_k$, if it exists.\\[3pt]
The \emph{precondition} is that $R_i,R_j$ exist and are initialized, i.e.\ $R_i,R_j\neq\Null$.
The \emph{postcondition} is that $R_k$, if it exists, contains the address of the machine of address $a_1$ whose input-tape has been extended with $a_2$.
Otherwise the content of $\vec R$ remains unchanged.
	\item
	$\Call i$: transfers the computation to the machine whose address is stored in $R_i$, extending its input-tape with the addresses that are left in $T$.\\[3pt]
	The \emph{precondition} is that $R_i$ exists and is initialized.
	The \emph{postcondition} is that the machine having the address stored in $R_i$ is executed on the extended input-tape.
	\end{itemize}

\noindent
We define what is a syntactically valid program of this language, and  introduce a decision procedure for verifying that the preconditions of each instruction are satisfied when it is executed.
As we will see in Lemma~\ref{lem:correction}, these properties are decidable and statically verifiable.
As a consequence, addressing machines will never give rise to an error at run-time.

\begin{defi}\label{def:progs}\
\bsub
\item\label{def:progs1}
	A \emph{program} $P$ is a finite list of instructions generated by the following grammar (where $\varepsilon$ represents the empty string, and $i,j,k\in\nat$):
	\[
	\begin{array}{lcl}
	\ins{P}&\eqbnf&\Load i;\, \ins{P}\mid \ins{A}\\
	\ins{A}&\eqbnf&\Apply ijk;\, \ins{A}\mid \ins{C}\\
	\ins{C}&\eqbnf&\Call i \mid \varepsilon
	\end{array}
	\]
	In other words a program starts with a list of $\ins{Load}$'s, continues with a list of $\ins{App}$'s and possibly ends with a $\ins{Call}$. Each of these lists may be empty, in particular the empty-program $\varepsilon$ can be generated.
\item\label{def:progs2}
	Given a program $P$, an $r\in\nat$, and a set $\cI\subseteq \set{0,\dots,r-1}$ of indices (representing initialized registers), define $\cI\models^{r} P$ as the least relation closed under the rules:
\[
	\begin{array}{ccccc}
		\infer{\cI\models^{r}\varepsilon}{}
		&&
		\infer{\cI\models^{r} \Apply ijk;\, \ins{A}}{\cI\cup\set{k}\models^{r}  \ins{A} & i,j\in \cI & k<r}
		&&
		\infer{\cI\models^{r} \Load i;\, \ins{P}}{\cI\cup\set{i}\models^{r}  \ins{P} & i< r}
		\\[3pt]
		\infer{\cI\models^{r}\Call i}{i\in \cI}&&
		\infer{\cI\models^{r} \Apply ijk;\, \ins{A}}{\cI\models^{r}  \ins{A} & i,j\in \cI & k\ge r}
		&&
		\infer{\cI\models^{r} \Load i;\, \ins{P}}{\cI\models^{r}  \ins{P} & i \ge r}
	\end{array}
\]
\item Let $r\in\nat$ and $\vec R = R_0,\dots,R_{r-1}$ be $\Addrs_\Null$-valued registers.
	We say that a program $P$ is \emph{valid with respect to $\vec R$} whenever $\cR\models^{r} P$ holds for
	\begin{equation}\label{eq:R}
		\cR = \set {i\st R_i \neq\Null \und 0\le i < r}
	\end{equation}
\esub
\end{defi}

\noindent
Notice that the notion of a valid program is independent from the tape of a machine.

\begin{exas} Consider addresses $a_1, a_2\in\Addrs $, as well as $\Addrs_\Null$-valued registers $R_0 = \Null$, $R_1 = a_1,R_2=a_2,R_3 = \Null$ (so $r = 4$).
In this example, the set $\cR$ of initialized registers as defined in~\eqref{eq:R} is $\cR = \set{1,2}$.
\[
	\begin{array}{lcc}
	P_n&\textrm{Program}&\cR\models^4 P_n\\
	\toprule
	P_0=&\Load 0;\Apply012;\Call 2&\checkmark\\
	P_1=&\Apply 120;\,\Apply 023;\,\Call 3&\checkmark\\
	P_2=&\Load 5;\, \Load 0;\,\Call 0&\checkmark\\
	P_3=&\Load 5;\, \Apply 12{5};\,\Call 2&\checkmark\\
	P_4=&\Apply 012;\,\Call 2&\xmark\\
	P_5=&\Load 0;\,\Call 3&\xmark\\
	P_6=&\Apply 123;\,\Call 5&\xmark\\
	\end{array}
\]
Above we use ``5'' as an index of an unexisting register.
Notice that a program trying to update an unexisting register remains valid (see $P_2,P_3$), the new value is simply discharged.
On the contrary, an attempt at reading the content of an uninitialized ($P_4,P_5$) or unexisting ($P_6$) register invalidates the whole program.
\end{exas}

\begin{nota}\label{nota:aboutprogs}
We use ``$-$'' to indicate an arbitrary index of an unexisting   register. E.g., the program $P_6$ will be written $\Apply 123;\,\Call -$.
We also write $\Load (i_1,\dots,i_k)$ as an abbreviation for $\Load i_1;\,\cdots;\,\Load i_k;$ . By employing all these notations, $P_2$ can be written as  $P_2= \Load (-,0);\Call 0$. 
\end{nota}

\begin{lem}\label{lem:correction}
For all $\Addrs_\Null$-valued registers $\vec R$ and program $P$ it is decidable whether $P$ is valid with respect to $\vec R$.
\end{lem}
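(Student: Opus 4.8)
The plan is to prove decidability by exhibiting a terminating recursive procedure that decides the relation $\cI\models^{r} P$ for an arbitrary finite set $\cI\subseteq\set{0,\dots,r-1}$ and program $P$, where $r$ is the number of registers in $\vec R$. Validity of $P$ with respect to $\vec R$ is then the special case $\cR\models^{r}P$, with $\cR$ as in~\eqref{eq:R}; and $\cR$ is itself computable from $\vec R$ simply by comparing each register content against $\Null$. So the whole matter reduces to deciding $\cI\models^{r}P$.

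The crucial observation, which I would establish first, is that the inference system defining $\models^{r}$ is \emph{syntax-directed}: for each pair $(\cI,P)$ at most one rule can have $\cI\models^{r}P$ as its conclusion. Indeed, the head instruction of $P$ selects the relevant family of rules (the axiom for $\varepsilon$, the $\Call$ rule, the two $\Apply$ rules, or the two $\Load$ rules), and within the $\Apply$ (resp.\ $\Load$) family the two rules are governed by the complementary and mutually exclusive conditions $k<r$ versus $k\ge r$ (resp.\ $i<r$ versus $i\ge r$). Hence the shape of $P$ together with the numeric comparisons involving $r$ uniquely determines which rule, if any, could conclude $\cI\models^{r}P$, as well as the premises that must then be verified. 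Since $\models^{r}$ is the \emph{least} relation closed under the rules, $\cI\models^{r}P$ holds exactly when a derivation tree exists, and syntax-directedness forces such a tree to be unique and its shape to be determined by $P$.

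With determinism in hand, I would define the decision procedure by recursion on $P$: return \emph{true} on $\varepsilon$; on $\Call i$ test whether $i\in\cI$; on $\Load i;\,P'$ recurse on $P'$ after replacing $\cI$ by $\cI\cup\set{i}$ if $i<r$ and leaving $\cI$ unchanged otherwise; on $\Apply ijk;\,\ins{A}$ first test $i,j\in\cI$ (failing if not) and, if so, recurse on $\ins{A}$ after replacing $\cI$ by $\cI\cup\set{k}$ if $k<r$ and leaving it unchanged otherwise. Termination is immediate, since the length of the program argument strictly decreases at each recursive call, and every elementary test performed --- membership $i\in\cI$ in a finite set, and the comparisons $i<r$, $k<r$ --- is effective. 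Correctness, i.e.\ that the procedure returns \emph{true} precisely when $\cI\models^{r}P$, then follows by induction on the length of $P$: the base cases $\varepsilon$ and $\Call i$ match the axiom and the $\Call$ rule directly, while in the inductive cases the syntax-directedness above guarantees that the single rule explored by the procedure is the only one that could derive $\cI\models^{r}P$, so the procedure accepts if and only if that rule's premises hold --- the numeric side conditions being checked directly and the premise on the strictly shorter tail by the inductive hypothesis.

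I do not expect a substantial obstacle here; the whole content of the lemma is the recognition that the rule system is deterministic and structurally recursive on a finite object. The only point demanding a little care is verifying that the two $\Apply$ rules (and likewise the two $\Load$ rules) genuinely partition the possibilities, so that no derivation is missed and none is counted twice --- this is exactly the complementarity of $k<r$ / $k\ge r$ and $i<r$ / $i\ge r$ noted above, which is what licenses the unconditional branching in the procedure.
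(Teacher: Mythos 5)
Your proposal is correct and rests on exactly the same observations as the paper's (much terser) proof: the rules defining $\models^{r}$ are syntax-directed with mutually exclusive, decidable side conditions ($i<r$ versus $i\ge r$, membership in the finite set $\cI$), and the recursion is structural on the finite program, so a terminating decision procedure exists. You have merely spelled out the procedure and its correctness argument explicitly, which the paper leaves implicit.
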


\begin{proof}
Decidability follows from the fact that the grammar in Definition~\ref{def:progs}\ref{def:progs1} is right-linear, the list of registers $\vec R$ is finite, the rules in Definition~\ref{def:progs}\ref{def:progs2} are syntax-directed and their side conditions are decidable.
\end{proof}

\subsection{Addressing machines and their operational semantics}

Everything is in place to introduce the definition of an \am.
Thanks to Lemma~\ref{lem:correction} it is reasonable to require that an \am{} has a valid internal program.

\begin{defi}\label{def:AM}
\bsub
\item
	An \emph{addressing machine $\mM$ (with $r$ registers) over $\Addrs$} is given by a tuple:
\[
	\mM = \tuple{\vec R,P,T}
\] where:
\begin{itemize}
\item
	$\vec R = R_0,\dots,R_{r-1}$ are $\Addrs_\Null$-valued registers;
\item
	$P$ is a program valid w.r.t.\ $\vec R$;
\item
	$T$ is an $\Addrs$-valued \emph{(input) tape}.
\end{itemize}
\item
	We write $\mM.r$ for the number of registers of $\mM$, $\mM.\vec R$ for the list of its registers, $\mM.R_i$ for its $i$-th register, $\mM.P$ for the associated program and finally $\mM.T$ for its input tape.
\item
	We say that an addressing machine $\mM$ as above is \emph{stuck}, in symbols $\stuck{\mM}$, whenever its program has shape $\mM.P = \Load i;\ins{P}$ but its input-tape is empty $\mM.T = []$. Otherwise, $\mM$ is \emph{not stuck}, in symbols: $\lnot\stuck{\mM}$.
\item
	The set of all addressing machines over $\Addrs$ will be denoted by $\cM$.
\esub
\end{defi}

\noindent
The machines below will be used as running examples in the next sections.
Intuitively, the \am s $\mK,\mS,\mach{I},\mach{D},\mach{O}$ mimic the behavior of the \lam-terms $\comb{K}$, $\comb{S}$, $\comb{I}$, $\comb{\Delta}$ and $\comb{\Omega}$, respectively. For writing their programs, we adopt the conventions introduced in Notation~\ref{nota:aboutprogs}.

\begin{exas}\label{ex:ilprimoesempiononsiscordamai}
The following are addressing machines.
\bsub
\item\label{ex:ilprimoesempiononsiscordamai1}
	For every $n\in\nat$, define an addressing machine with $n+1$ registers as:
\[
	\mach{x}_n = \tuple{R_0,\dots,R_n,\varepsilon,[]},\textrm{ where }\vec R := \vec \Null.
\]
We call $\mach{x}_0,\mach{x}_1,\mach{x}_2,\dots$ \emph{indeterminate machines} because they share some analogies with variables (they can be used as place holders).
\item
The addressing machine $\mK$ with 1 register $R_0$ is defined by:
\[
	\mK = \tuple{\Null,\RaS (0,-); \Call 0,[]}
\]
\item The addressing machine $\mS$ with 3 registers is defined by:
\[
	\begin{array}{lcl}
	\mS &=& \tuple{\Null,\Null,\Null,P,[]}\textrm{, where:}\\
	\mS.P &=& \RaS (0, 1,2);\,\Apply 020;\\
	&&\Apply 121;\,\Apply 012;\,\Call 2\\
	\end{array}
\]
\item Assume that $k\in\Addrs$ represents the address associated with the \am{} $\mK$.
Define the \am{} $\mach{I}$ as $\mach{I} = \tuple{\Null^3,\mS.P,[k,k]}$.
\item The addressing machine $\mach{D}$ with 1 register is given by:
\[
	\mach{D} = \tuple{\Null,\RaS 0;\,\Apply 000;\,\Call 0,[]}
\]
\item Assume that $d\in\Addrs$ represents the address of the \am{} $\mach{D}$.
Define the \am{} $\mach{O}$ by setting $\mach{O} = \tuple{\Null,\mach{D}.P,[d]}$.
\esub
\end{exas}

\noindent
We now enter into the details of the addressing mechanism which constitutes the core of this formalism.
In an implementation of \am s, it would be reasonable to pick up a fresh address from $\Addrs$ whenever a new machine is constructed and save the correspondence in some address table. See Section~\ref{sec:conclusions} for more implementation details.
To construct a \lam-model, we need a uniform way of associating machines with their addresses.

\begin{defi} Fix a bijective map $\Lookup : \cM \to  \Addrs$ from the set of all \am s over $\Addrs$ to the set $\Addrs$ of addresses.
We call the map $\Lookup(\cdot)$ an \emph{Address Table Map (ATM)}.
\bsub
\item Given $M\in\cM$, we say that $\Lookup \mM$ is the \emph{address of} $\mM$.
\item
	Given an address $a\in\Addrs$, we write $\Lookinv{a}$ for the unique machine having address $a$. In other words, we have $\Lookinv{a} = \mM\iff \Lookup\mM = a$.
\item
	Given $\mM\in\cM$ and $T'\in\Tapes$, we write $\appT{\mM}{T'}$ for the machine
	\[
		\tuple{\mM.\vec R,\mM.P,\appT{\mM.T}{T'}}
	\]
\item
	Define the \emph{application map} $(\App{}{}) : \Addrs\times\Addrs\to \Addrs$ as follows
	\[
		\App{a}{b} = \Lookup (\append{\Lookinv{a}}{b})
	\]
	That is, the \emph{application} of $a$ to $b$ is the unique address $c$ of the \am{} obtained by adding $b$ at the end of the input tape of the \am{} $\Lookinv{a}$.
\esub
\end{defi}

\noindent
Since both $\cM$ and $\Addrs$ are countable sets, there exist $2^{\aleph_0}$ possible choices for an ATM\@.
\begin{rem}\label{rem:forever} Depending on the chosen ATM $\Lookup(-)$, there might exist \am s calling each other, as in $\mM = \tuple{\Lookup\mN,\Call 0,[]}$ and $\mN = \tuple{\Lookup\mM,\Call 0,[]}$, or even countably many machines $(\mM_n)_{n\in\nat}$ satisfying $\mM_n = \tuple{\Lookup\mM_{n+1},\varepsilon,[]}$.
Therefore, in general, the process of recursively dereferencing the addresses stored in the registers (or tape) of a machine might not terminate.
This kind of behaviour is not pathological, rather intrinsic to the notions of addresses and dereference operators.
\end{rem}
In practice, one may desire to work with an ATM performing the association between \am s and their addresses in a computable way.
However, we do not require our ATMs to satisfy any effectiveness conditions since it would be peculiar to propose a model of computation depending on a pre-existing notion of ``computable''. The results presented in this paper are independent from the ATM  under consideration.

\begin{defi}[Small step operational semantics]\label{def:smallstep}
Define a reduction strategy on \am s representing one head-step of computation
\[
	\redh\ \subseteq\cM\to\cM
\]
as the least relation closed under the following rules:
\[
	\begin{array}{lcl}
	\tuple{\vec R,\RaS i;P,\Cons a{T}} &\redh& \tuple{\vec R[R_i := a],P,T},\\
	\tuple{\vec R,\Apply i j k; P,T}&\redh&\tuple{\vec R[R_k := \App{R_i}{R_j}],P,T},\\
	\tuple{\vec R,\Call i,T}&\redh&\appT{\Lookinv {R_i}}{T}.\\
	\end{array}
\]
As usual, we write $\reddh$ for the transitive-reflexive closure of $\redh$.
We say that an \am{} $\mM$ \emph{is in a final state} if there is no $\mN$ such that $\mM\redh \mN$.
We write $\mM\reddh \stuck{\mN}$ whenever $\mM\reddh \mN$ and $\stuck{\mN}$ hold.
When $\mN$ is not important, we simply write $\mM\reddh\stuck{}$. Similarly, $\mM\not\reddh\stuck{}$ means that $\mM$ never reduces to a stuck \am.
\end{defi}

\begin{rem}\label{rem:aboutstuck}\
\bsub
\item Definition~\ref{def:smallstep} is well defined since the validity of a program is preserved by ${\sf h}$-reduction: if $\mM\redh\mN$ and $\mM.P$ is valid w.r.t.\ $\mM.\vec R$ then $\mN.P$ is valid w.r.t.\ $\mN.\vec R$. This follows immediately from Definition~\ref{def:progs}\ref{def:progs2}.
In particular when executing $\RaS i$, or $\Call i$, $R_i$ must be initialized and when executing $\Apply i j k$ we must have $R_i,R_j\neq\Null$.
\item\label{rem:aboutstuck2}
Addressing machines in a final state are either of the form $\tuple{\vec R,\varepsilon,T}$ or $\tuple{\vec R,\Load i;P,[]}$, and in the latter case they are stuck.
\esub
\end{rem}

\begin{lem}\label{lem:about_red}
The reduction strategy $\redh$ enjoys the following properties:
\bsub
\item\label{lem:about_red1}
	 Determinism: $\mM\redh \mN_1 \und \mM\redh \mN_2\ \Rightarrow\ \mN_1 = \mN_2$.
\item\label{lem:about_red2}
	Closure under application: $\forall a\in\Addrs\,.\,\mM\redh \mN\ \Rightarrow\ \append{\mM}{a}\redh \append{\mN}{a}$.
\esub
\end{lem}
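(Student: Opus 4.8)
The plan is to treat both items by a single case analysis on the first instruction of $\mM.P$, exploiting the fact that the three clauses defining $\redh$ in Definition~\ref{def:smallstep} are syntax-directed.

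For item~\ref{lem:about_red1}, I would first note that a valid program is, by the grammar of Definition~\ref{def:progs}\ref{def:progs1}, either $\varepsilon$ or begins with exactly one of $\Load i$, $\Apply i j k$, $\Call i$, and that these four possibilities are pairwise exclusive. Hence at most one reduction clause can fire on a given $\mM$, so it remains only to check that each applicable clause determines its reduct uniquely. For $\Load i$ this holds once the non-empty tape is decomposed as $\Cons{a}{T'}$, since the head and tail of a list are unique; for $\Apply i j k$ it holds because $(\App{}{})$ is a total function; and for $\Call i$ it holds because $\Lookup$ is a bijection, so $\Lookinv{R_i}$ is the unique machine of address $R_i$. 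Combining the three cases gives $\mN_1 = \mN_2$.

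For item~\ref{lem:about_red2}, I would again split according to the clause witnessing $\mM\redh\mN$, recalling that $\append{\mM}{a} = \tuple{\mM.\vec R,\mM.P,\appT{\mM.T}{[a]}}$ leaves registers and program untouched and merely extends the tape (so validity, being tape-independent, is preserved). In the $\Load i$ case the hypothesis $\mM\redh\mN$ forces $\mM.T = \Cons{b}{T'}$, since otherwise $\stuck{\mM}$ and no reduct would exist; as $\appT{(\Cons{b}{T'})}{[a]} = \Cons{b}{(\appT{T'}{[a]})}$, the very same clause fires on $\append{\mM}{a}$ and produces exactly $\append{\mN}{a}$. The $\Apply i j k$ case is even simpler, as that clause does not read the tape, so its reduct carries the extended tape $\appT{\mM.T}{[a]}$ unchanged.

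The $\Call i$ case is the only one requiring genuine care, and is where I expect the (still elementary) main obstacle. Here $\mN = \appT{\Lookinv{R_i}}{\mM.T}$, while firing $\Call i$ on $\append{\mM}{a}$ yields $\appT{\Lookinv{R_i}}{(\appT{\mM.T}{[a]})}$, whereas $\append{\mN}{a} = \appT{(\appT{\Lookinv{R_i}}{\mM.T})}{[a]}$. Since $\appT{\cdot}{\cdot}$ acts on a machine only by concatenating to its tape component, both sides share the registers and program of $\Lookinv{R_i}$, so the equality reduces to the associativity of tape concatenation, namely $\appT{(\Lookinv{R_i}).T}{(\appT{\mM.T}{[a]})} = \appT{(\appT{(\Lookinv{R_i}).T}{\mM.T})}{[a]}$, which holds for finite lists. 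This establishes $\append{\mM}{a}\redh\append{\mN}{a}$ and completes the proof.
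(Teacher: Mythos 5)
Your proof is correct and follows essentially the same route as the paper's, which simply observes that the applicable rule is uniquely determined by the first instruction and the tape (for determinism) and argues by cases on the applied rule (for closure under application). You merely fill in the details the paper leaves implicit, including the associativity of tape concatenation needed in the $\Call i$ case.
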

\begin{proof} $(i)$ Since the applicable rule from Definition~\ref{def:smallstep}, if any, is uniquely determined by the first instruction on $\mM.P$ and its input-tape $\mM.T$.

$(ii)$ Easy. By cases on the rule applied for deriving $\mM\redh\mN$.
\end{proof}

\begin{exas}\label{ex:somemachines}
For brevity, we sometimes display only the first instruction of the internal program. Take $a,b,c\in\Addrs$.
\bsub
\item We show that $\mach K$ behaves as the first projection:
\[
	\begin{array}{lll}
	\append{\mK}{a,b}&=&\tuple{\Null,\RaS (0, -); \Call 0,[a,b]}\\
	&\redh&\tuple{a, \RaS -; \Call 0,[b]}
	\redh
	\tuple{a, \Call 0,[]}\redh\Lookinv a.\\
	\end{array}
\]
\item We verify that $\mach S$ behaves as the combinator $\comb{S}$ from combinatory logic:
\[
	\begin{array}{lll}
	\append{\mS}{a,b,c}&=&\tuple{\Null^3,\RaS (0,1,2); \cdots,[a,b,c]}\\
	&\reddh&\tuple{a,b,c,\Apply 020; \cdots,[]}\\
	&\redh&\tuple{\App{a}{c},b,c,\Apply 121; \cdots,[]}\\
	&\redh&\tuple{\App{a}{c},\App{b}{c},c,\Apply 012; \cdots,[]}\\
	&\redh&\tuple{\App{a}{c},\App{b}{c},\App{(\App{a}{c})}{(\App{b}{c})},\Call 2; \cdots,[]}\\
	&\redh&\Lookinv {\App{(\App{a}{c})}{(\App{b}{c})}}\\
	\end{array}
\]
\item As expected, $\mach{I}=\append{\mS}{\Lookup \mK,\Lookup \mK}$ behaves as the identity:
\[
	\begin{array}{lll}
	\append{\mach{I}}{a} &=&
	\tuple{\Null^3,\RaS (0,1,2);\cdots,[\Lookup{\mK},\Lookup{\mK},a]}\\
	&\reddh&\tuple{\Lookup{\mK},\Lookup{\mK},a,\Apply 020;\cdots,[]}\\
	&\redh&\tuple{\App{\Lookup{\mK}}{a},\Lookup{\mK},a,\Apply 121;\cdots,[]}\\
	&\redh&\tuple{\App{\Lookup{\mK}}{a},\App{\Lookup{\mK}}{a},a,\Apply 012;\cdots,[]}\\
	&\redh&\tuple{\App{\Lookup{\mK}}{a},\App{\Lookup{\mK}}{a},\App{\App{\Lookup{\mK}}{a}}{(\App{\Lookup{\mK}}{a})},\Call 2;[]}\\
	&\redh&\append{\mK}{a,\App{\Lookup{\mK}}{a}}\\
	&=&\tuple{\Null,\RaS (0,-);\cdots,[a,\App{\Lookup{\mK}}{a}]}\\
	&\reddh&\tuple{a,\App{\Lookup{\mK}}{a},\Call 0,[]}
	\redh\Lookinv{a}\\
	\end{array}
\]
\item Finally, we check that $\mach{O}$ gives rise to an infinite reduction sequence:
\[
	\begin{array}{lll}
	\mach{O} &=& \tuple{\Null,\RaS 0;\,\Apply 000;\,\Call 0,[\Lookup\mach{D}]}\\
	&\redh&\tuple{\Lookup\mach{D},\Apply 000;\,\Call 0,[]}\\
	&\redh&\tuple{\Lookup(\append{\mach{D}}{\Lookup\mach{D}}),\Call 0,[]}\redh \append{\mach{D}}{\Lookup\mach{D}}
	= \mach{O}\reddh\cdots\\
	\end{array}
\]
\esub
\end{exas}

\noindent
Similarly, we can define a big-step operational semantics relating an \am{} $\mM$ with its final result (if any).

\begin{defi}[Big-step semantics]
Define $\mM \goesto \mach{V}$, where $\mM,\mV\in\cM$ and $\mV$ is in a final state, as the least relation closed under the following rules:
\begin{gather*}
	\infer[(\textrm{Stuck})]{\mM \goesto \mM}{
		\mM.P = \RaS i;P'
		&
		\mM.T = []
		}
	\qquad\qquad
	\infer[(\textrm{End})]{\mM \goesto \mM}{
		\mM.P = \varepsilon
		}\\
		\infer[(\textrm{Load})]{\mM \goesto \mV}{
		\mM.P = \RaS i;P'
		&
		\mM.T = \Cons a{T'}
		&
		\tuple{\mM.\vec R\repl{R_i}{a},P',T'}\goesto \mV
		}
		\\
		\infer[(\textrm{App})]{
		\mM\goesto\mV}{\mM.P = \Apply i j k; P'
		&
		a = \App{\mM.R_i}{\mM.R_j}
		&
		\tuple{\mM.\vec R\repl{R_k}{a},\mM.P',\mM.T}\goesto\mV
	}
	\\
		\infer[(\textrm{Call})]{\mM\goesto \mV}{
			\mM.P = \Call i
			&
			\mM' = \Lookup^{-1}(\mM.R_i)
			&
			\append{\mM'}{\mM.T}\goesto \mV
	}
\end{gather*}

\begin{exa} Recall that $\mK.P = \Load (0,-);\,\Call 0$. Notice that we cannot prove $\append{\mK}{a,b}\goesto \Lookinv a$ for an arbitrary $a\in\Addrs$, as we need to ensure that the resulting machine is in a final state.
For this reason, we will use indeterminate machines $\mach{x}_1,\mach{x}_2$ from Example~\ref{ex:ilprimoesempiononsiscordamai}\ref{ex:ilprimoesempiononsiscordamai1}.
\[
	\infer{\append{\mK}{\Lookup \mach{x}_1,\Lookup \mach{x}_2}\goesto \mach{x}_1}{
	\mK.P = \Load 0;\,P';
	&
	\infer{\tuple{\Lookup\mach{x}_1,P',[\Lookup\mach{x}_2]}\goesto \mach{x}_1}{
		P'= \Load -;P'' &
		\infer{\tuple{\Lookup\mach{x}_1,P'',[]}\goesto \mach{x}_1}{
				P''=\Call 0
				&
				R_0 = \Lookup\mach{x}_1
				&
				\infer{\mach{x}_1\goesto \mach{x}_1}{\textrm{(End)}}
		}
	}
	&
	}
\]
\end{exa}
\end{defi}

We now show that the two operational semantics are equivalent on terminating computations.

\begin{prop}\label{prop:equivsem}
For $\mM,\mN\in\cM$, the following are equivalent:
\begin{enumerate}
\item $\mM\reddh \mN\not\redh$;
\item $\mM\goesto\mN$.
\end{enumerate}
\end{prop}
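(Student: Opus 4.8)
The statement asserts the equivalence of the small-step and big-step semantics on terminating computations: $\mM\reddh\mN\not\redh$ holds if and only if $\mM\goesto\mN$. The plan is to prove the two directions separately, both by induction. Crucially, by Remark~\ref{rem:aboutstuck}\ref{rem:aboutstuck2}, a final state $\mN$ (i.e.\ one with $\mN\not\redh$) has program either $\varepsilon$ or $\Load i;P'$ with empty tape; these are precisely the machines matching the conclusions of the base-case big-step rules $(\textrm{End})$ and $(\textrm{Stuck})$. So the two notions of ``final state'' coincide, which is what makes the equivalence plausible.

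\textbf{Direction $(2)\Rightarrow(1)$.} I would argue by induction on the derivation of $\mM\goesto\mN$. In the base cases $(\textrm{End})$ and $(\textrm{Stuck})$ we have $\mM=\mN$ with $\mM$ in a final state, so $\mM\reddh\mN\not\redh$ holds trivially via the reflexive closure. For each inductive rule, the key observation is that its first premise on the shape of $\mM.P$ (and $\mM.T$) is exactly the left-hand side of the corresponding small-step rule in Definition~\ref{def:smallstep}: the $(\textrm{Load})$ rule matches the first $\redh$-clause, $(\textrm{App})$ the second, and $(\textrm{Call})$ the third. Thus in each case $\mM\redh\mM'$ where $\mM'$ is the intermediate machine appearing in the big-step premise $\mM'\goesto\mN$; by the induction hypothesis $\mM'\reddh\mN\not\redh$, and prepending the single step gives $\mM\reddh\mN\not\redh$.

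\textbf{Direction $(1)\Rightarrow(2)$.} Here I would induct on the length of the reduction $\mM\reddh\mN$. If the length is $0$ then $\mM=\mN\not\redh$, so by Remark~\ref{rem:aboutstuck}\ref{rem:aboutstuck2} $\mN$ has program $\varepsilon$ or is stuck, and we conclude $\mM\goesto\mN$ by $(\textrm{End})$ or $(\textrm{Stuck})$ respectively. For the inductive step, write $\mM\redh\mM'\reddh\mN\not\redh$. Since $\mM\redh\mM'$, the program $\mM.P$ must begin with a $\Load$, an $\Apply$, or a $\Call$ (it cannot be $\varepsilon$, nor a stuck $\Load$ with empty tape, as then $\mM$ would be final), and $\mM'$ is exactly the machine prescribed by the matching small-step rule. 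By the induction hypothesis $\mM'\goesto\mN$, and this is precisely the third premise of the corresponding big-step rule $(\textrm{Load})$, $(\textrm{App})$, or $(\textrm{Call})$, whose remaining premises are satisfied by the shape of $\mM$; hence $\mM\goesto\mN$.

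\textbf{Main obstacle.} The argument is essentially a bookkeeping correspondence between the two rule sets, so no deep difficulty arises; the delicate point is simply to verify that the ``final state'' conditions are handled consistently across both directions---that every final state is captured by exactly one of $(\textrm{End})$/$(\textrm{Stuck})$, and conversely that the big-step base rules produce only final states. Determinism of $\redh$ (Lemma~\ref{lem:about_red}\ref{lem:about_red1}) is what guarantees the intermediate machine $\mM'$ in direction $(1)\Rightarrow(2)$ is uniquely the one named in the big-step premise, so the two derivations stay in lockstep; I would invoke it to rule out any ambiguity in the matching.
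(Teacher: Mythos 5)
Your proof is correct and follows essentially the same route as the paper: induction on the derivation of $\mM\goesto\mN$ for $(2)\Rightarrow(1)$ and induction on the length of $\mM\reddh\mN$ for $(1)\Rightarrow(2)$, matching each big-step rule with the corresponding small-step clause and handling final states via Remark~\ref{rem:aboutstuck}\ref{rem:aboutstuck2}. The appeal to determinism is a harmless extra remark not needed in the paper's version, since the intermediate machine is already determined syntactically by the matching rule.
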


\begin{proof}
(1 $\Rightarrow$ 2) By induction on the length $n$ of the reduction $\mM=\mM_1\redh\mM_2\redh\cdots\redh \mM_n=\mN\not\redh$.

Case $n = 0$. By assumption $\mN$ is in a final state. By Remark~\ref{rem:aboutstuck}\ref{rem:aboutstuck2}, it is either of the form $\mN = \tuple{\vec R,\varepsilon,T}$ or it is stuck $\mN = \tuple{\vec R,\Load i;P,[]}$. In the former case we apply (\textrm{End}), in the latter (\textrm{Stuck}).

Case $n > 1$. Since $\mM_1\redh \mM_2$, we have $\mM_1.P\neq\varepsilon$.
As the length of $\mM_2\reddh \mN$ is $n-1$, by induction hypothesis we have a derivation of $\mM_2\goesto \mN$.
Depending on the first instruction in $\mM_1.P$, we use this derivation to apply the homonymous rule (Load), (App) or (Call) and derive $\mM\goesto \mN$.

(2 $\Rightarrow$ 1) By induction on a derivation of $\mM\goesto \mN$.

Cases (Stuck) or (End). Then, $\mM\reddh\mM=\mN$ by reflexivity of $\reddh$.

Case (Load), i.e.\ $\mM.P=\Load i;P'$. In this case, we have that $\mM\redh \tuple{\mM.\vec R\repl{R_i}{a},P',\mM.T}\reddh \mN$, by induction hypothesis.

Case (App), i.e.\ $\mM.P=\Apply ijk;P'$. Let us call $a = \App{\mM.R_j}{\mM.R_k}$. Then we have $\mM\redh \tuple{\mM.\vec R\repl{R_k}{a},P',\mM.T}\reddh \mN$, by induction hypothesis.

Case (Call), i.e.\ $\mM.P = \Call i$. In this case $\mM\redh \append{\mM'}{\mM.T}$ for $\mM' = \Lookinv{\mM.R_i}$. By induction hypothesis $\append{\mM'}{\mM.T}\reddh \mN$, whence $\mM\reddh \mN$.
\end{proof}

\section{Combinatory Algebras via Evaluation Equivalence}\label{sec:combalg}

In this section we show how to construct a combinatory algebra based on the \am s formalism. 
Recall that the \am s $\mK$ and $\mS$ have been defined in Example~\ref{ex:ilprimoesempiononsiscordamai}. Consider the algebraic structure 
\[
	\cA = (\Addrs,\App{\,}{\,},\Lookup\mach{K},\Lookup\mS)
\]
Since the application $(\App{}{})$ is total, $\cA$ is an applicative structure.
However, it is \emph{not} a combinatory algebra.
For instance, the $\lama$-term $\comb{K}\cons a\cons b$ is interpreted as the address of the machine $\append{\mach{K}}{a,b}$, which is \emph{a priori} different from the address ``$a$'' because no computation is involved.
Therefore, we need to quotient the algebra $\cA$ by an equivalence relation equating at least all addresses corresponding to the same machine at different stages of the execution.

In the following, we denote by $\equiv_{\rel R}$ an arbitrary binary relation on $\cM$. The symbol ${\rel R}$ has no formal meaning, it is simply evocative of a relation.
In the next definition, we are going to associate with every $\equiv_{\rel R}$ two relations, respectively denoted $\simeq_{\rel R}\,\subseteq \Addrs^2$ and $=_{\rel R}\,\subseteq \cM^2$.

\begin{defi}\label{def:inducingequivalences}
Every binary relation $\equiv_{\rel R}\,\subseteq\cM^2$ on \am s induces a relation $\simeq_{\rel R}\,\subseteq \Addrs^2$ defined by 
\[
	a\simeq_{\rel R} b\iff \Lookinv{a}\equiv_{\rel R} \Lookinv{b}
\]
which is then extended to:
\bsub
\item $\Addrs_\Null$-valued registers:
\[
	R\simeq_{\rel R} R' \iff (R = \Null = R')\lor (R = a \simeq_{\rel R} b =R');
\]
\item Tuples:
\[
	a_1,\dots,a_n \simeq_{\rel R} b_1,\dots,b_m \iff (n = m)\land (\forall i\in\set{1,\dots,n}\,.\, a_i\simeq_{\rel R} b_i);
\]
(This also applies to tuples of $\Addrs_\Null$-valued registers $\vec R\simeq_{\rel R} \vec R'$.)
\item  $\Addrs$-valued tapes:
	\[
		[a_1,\dots,a_n]\simeq_{\rel R} [b_1,\dots,b_m]\iff \vec a \simeq_{\rel R} \vec b \textrm{ (seen as tuples).}
	\]
\esub
\noindent
In its turn, $\simeq_{\rel R}$ induces a relation $=_{\rel R}\ \subseteq\cM^2$ defined by setting (for all machines $\mM,\mN\in\cM$):
\[
	\mM =_\rel{R} \mN \iff (\mM.\vec R \simeq_{\rel R}\mN.\vec R)\land (\mM.P = \mN.P)\land(\mM.T \simeq_{\rel R}\mN.T)
\]
\end{defi}

In particular, $\mM =_\rel{R} \mN$ entails that $\mM$ and $\mN$ share the same internal program, the number of internal registers, and the length of their input tape.

\begin{lem}\label{lem:equivalence}
If the relation $\equiv_\rel{R}$ is an equivalence then so are $\simeq_\rel{R}$ and $=_\rel{R}$.
\end{lem}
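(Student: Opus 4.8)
The plan is to verify reflexivity, symmetry and transitivity for each of the three layers in turn, propagating the equivalence property upward from $\equiv_\rel{R}$ through $\simeq_\rel{R}$ to $=_\rel{R}$.

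First I would show that $\simeq_\rel{R}$ restricted to $\Addrs^2$ is an equivalence. The key observation is that, since $\Lookup$ is a bijection, the inverse map $\Lookinv{\cdot} : \Addrs \to \cM$ is a total function, and by definition $a \simeq_\rel{R} b$ holds exactly when $\Lookinv a \equiv_\rel{R} \Lookinv b$. Thus $\simeq_\rel{R}$ is precisely the pullback of $\equiv_\rel{R}$ along $\Lookinv{\cdot}$, and the pullback of an equivalence relation along any map is again an equivalence relation; so reflexivity, symmetry and transitivity of $\simeq_\rel{R}$ follow immediately from the corresponding properties of $\equiv_\rel{R}$.

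Next I would lift $\simeq_\rel{R}$ to $\Addrs_\Null$-valued registers, to tuples, and to tapes. For registers the defining clause is a disjoint-sum construction: two registers are related iff both equal $\Null$ or both hold addresses related by $\simeq_\rel{R}$; since $\simeq_\rel{R}$ is already an equivalence on $\Addrs$ and equality on $\set{\Null}$ trivially is one, their disjoint sum is an equivalence. For tuples --- and hence for tapes, which are defined via tuples --- the relation is the component-wise comparison guarded by the length condition $n = m$. This is the standard finite-product construction: reflexivity and symmetry are immediate, while for transitivity one observes that the length constraint forces all three tuples appearing in a chain to share the same arity, after which the component-wise relations compose coordinate by coordinate using transitivity of $\simeq_\rel{R}$.

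Finally, $=_\rel{R}$ is the conjunction of three relations on $\cM^2$, namely $\mM.\vec R \simeq_\rel{R} \mN.\vec R$, the plain equality $\mM.P = \mN.P$, and $\mM.T \simeq_\rel{R} \mN.T$. The first and third are the tuple and tape versions of $\simeq_\rel{R}$ just shown to be equivalences, and equality of programs is trivially one, so $=_\rel{R}$ is a finite conjunction (intersection) of equivalence relations and therefore an equivalence relation itself. There is no genuine obstacle in this proof --- the argument is entirely routine --- but the one point deserving a little care is the interaction of the arity and length conditions with transitivity in the tuple and tape layers, where one must rule out a middle element of a transitivity chain having mismatched length.
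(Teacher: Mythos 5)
Your proof is correct and follows essentially the same route as the paper's (which simply notes that $\simeq_\rel{R}$ inherits the equivalence properties via the bijection $\Lookinv{\cdot}$ and that $=_\rel{R}$ inherits them from $\simeq_\rel{R}$ and $=$); you merely spell out the routine lifting through registers, tuples and tapes in more detail. Your remark that the pullback of an equivalence along \emph{any} map is an equivalence is a small (harmless) generalization of the paper's appeal to bijectivity.
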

\begin{proof} Assume that $\equiv_\rel{R}$ is an equivalence. Then, the fact that $\simeq_\rel{R}$ is an equivalence follows from its definition since $\Lookinv{\cdot}$ is a bijection. Concerning the relation $=_\rel{R}$, reflexivity, symmetry and transitivity follow immediately from the same properties of $\simeq_{\rel{R}}$ and $=$.
\end{proof}

\begin{defi}\label{def:equiv:Addrs}
Define $\equiva\ \subseteq\cM^2$ as the least equivalence closed under:
\[
	\infer[\redrule]{\mM\equiv_\Addrs	 \mN}{\mM\reddh \mZ =_\Addrs \mN}
\]
\end{defi}
We say that $\mM,\mN$ are \emph{evaluation equivalent} whenever $\mM\equiva\mN$.

\begin{rem}\
\begin{enumerate}[(i)]
\item Reflexivity can be treated as a special case of the rule $\redrule$ since $\mM\reddh\mM=_\Addrs \mM$.
\item It follows from the definition that $=_\Addrs\,\subseteq\ \equiva$ and that $\mM\reddh \mN$ entails $\mM\equiva \mN$.
\end{enumerate}
\end{rem}

\begin{exas}\label{ex:calculs}
From the calculations in Examples~\ref{ex:somemachines}, it follows that
\[
	\begin{array}{lcl}
	\append{\mK}{\Lookup\mach{x}_1,\Lookup\mach{x}_2}&\equiva& \mach{x}_1, \\
	\append{\mS}{\Lookup\mach{x}_1,\Lookup\mach{x}_2,\Lookup\mach{x}_3}&\equiva& \append{(\append{\mach{x}_1}{\Lookup \mach{x}_3})}{\Lookup(\append{\mach{x}_2}{\Lookup \mach{x}_3})}.\\
	\end{array}
\]
\end{exas}

\begin{lem}
The relation $\sima$ is a congruence on $\cA= (\Addrs,\App{\,}{\,},\Lookup\mach{K},\Lookup\mS)$.
\end{lem}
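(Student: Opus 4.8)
The plan is to verify the two properties that define a congruence on the applicative structure $\cA$: that $\sima$ is an equivalence relation, and that it is compatible with the application map. The first is immediate. By Definition~\ref{def:equiv:Addrs}, $\equiva$ is the least \emph{equivalence} closed under $\redrule$, hence in particular an equivalence; Lemma~\ref{lem:equivalence} then guarantees that the induced relations $\sima$ and $=_\Addrs$ are equivalences as well. Thus the whole content of the statement lies in compatibility: for all $a,a',b,b'\in\Addrs$, I must establish
\[
a\sima a' \und b\sima b' \imp \App{a}{b}\sima\App{a'}{b'}.
\]
Unfolding the application map, $\Lookinv{\App{a}{b}} = \append{\Lookinv a}{b}$, so, writing $\mM=\Lookinv a$ and $\mM'=\Lookinv{a'}$, the goal becomes $\append{\mM}{b}\equiva\append{\mM'}{b'}$ under the hypotheses $\mM\equiva\mM'$ and $\Lookinv b\equiva\Lookinv{b'}$.

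I would factor this through the intermediate machine $\append{\mM}{b'}$ and prove the two halves separately, concluding by transitivity of $\equiva$:
\[
\append{\mM}{b}\equiva\append{\mM}{b'}
\qquad\text{and}\qquad
\append{\mM}{b'}\equiva\append{\mM'}{b'}.
\]
The left equivalence replaces the appended address by a $\sima$-related one on a \emph{fixed} machine. Since $\append{\mM}{b}$ and $\append{\mM}{b'}$ share the same registers $\mM.\vec R$ and program $\mM.P$, while their tapes $\appT{\mM.T}{[b]}$ and $\appT{\mM.T}{[b']}$ agree on the common prefix $\mM.T$ and end in the $\sima$-related addresses $b,b'$, Definition~\ref{def:inducingequivalences} gives $\append{\mM}{b}=_\Addrs\append{\mM}{b'}$, and hence $\append{\mM}{b}\equiva\append{\mM}{b'}$ since $=_\Addrs\ \subseteq\ \equiva$.

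The right equivalence is the substantive step, asserting closure of $\equiva$ under appending a fixed address on the right of evaluation-equivalent machines. Here I would fix $b'$ and consider the relation defined by $\mM\equiv'\mN \iff \append{\mM}{b'}\equiva\append{\mN}{b'}$. This $\equiv'$ is automatically an equivalence, being the preimage of the equivalence $\equiva$ under the map $\mM\mapsto\append{\mM}{b'}$. Because $\equiva$ is the \emph{least} equivalence closed under $\redrule$, it suffices to check that $\equiv'$ is itself closed under $\redrule$; this yields $\equiva\subseteq\equiv'$, and in particular $\mM\equiv'\mM'$, as required. So suppose $\mM\reddh\mZ=_\Addrs\mN$. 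Lemma~\ref{lem:about_red}\ref{lem:about_red2}, extended from $\redh$ to its reflexive and transitive closure $\reddh$ by induction on reduction length, pushes the reduction through the append, giving $\append{\mM}{b'}\reddh\append{\mZ}{b'}$; and appending the identical address $b'$ to the tapes of $\mZ=_\Addrs\mN$ preserves $=_\Addrs$ (the new last entries are equal, hence $\sima$-related by reflexivity), so $\append{\mZ}{b'}=_\Addrs\append{\mN}{b'}$. A single instance of $\redrule$ then gives $\append{\mM}{b'}\equiva\append{\mN}{b'}$, i.e.\ $\mM\equiv'\mN$.

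I expect the only delicate point to be the compatibility of head-reduction with appending, i.e.\ Lemma~\ref{lem:about_red}\ref{lem:about_red2} and its transfer to $\reddh$: one must check that appending an address commutes with each of the three rules, most notably with the $\Call$ rule, where the trailing address travels along to the callee via $\append{(\appT{\Lookinv{R_i}}{T})}{b'}=\appT{\Lookinv{R_i}}{(\appT{T}{[b']})}$. Everything else reduces to routine bookkeeping about the componentwise definitions of $\sima$ and $=_\Addrs$.
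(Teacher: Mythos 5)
Your proof is correct and follows essentially the same route as the paper: the key step in both is to push the append through the reduction via Lemma~\ref{lem:about_red}\ref{lem:about_red2} (lifted to $\reddh$), observe that appending $\sima$-related addresses to $=_\Addrs$-related machines preserves $=_\Addrs$, and close with one application of $\redrule$. Your splitting into $\append{\mM}{b}\equiva\append{\mM}{b'}\equiva\append{\mM'}{b'}$ and your least-equivalence formulation are just a repackaging of the paper's induction on the derivation of $\mM\equiva\mN$, which handles both arguments simultaneously.
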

\begin{proof}
By definition $\equiva$ is an equivalence, whence so is $\sima$ by Lemma~\ref{lem:equivalence}.
Let us check that $\sima$ is compatible w.r.t.\ $(\App{}{})$.
Consider $a \sima a'$ and $b\sima b'$.
Call $\mM = \Lookinv{a}$ and $\mN = \Lookinv{a'}$ and proceed by induction on a derivation of $\mM\equiva\mN$, splitting into cases depending on the last applied rule.

\redrule{} By definition, there exists $\mZ\in\cM$ such that $\mM\reddh \mach Z =_\Addrs \mN$. By Lemma~\ref{lem:about_red}\ref{lem:about_red2}, $\append{\mM}{b} \reddh \append{\mZ}{b} =_\Addrs \append{\mN}{b'}$ whence $\App{a}{b}\sima\App{a'}{b'}$.

(Transitivity) and (Symmetry) follow from the induction hypothesis.
\end{proof}

In order to prove that the congruence $\sima$ is non-trivial, we are going to characterize the equivalence $\mM\equiva\mN$ it in terms of confluent reductions.
For this purpose, we extend $\redh$ in such a way that reductions are also possible within registers and elements of the input-tape of an \am.

\begin{defi} Define the reduction relation $\red[c]\,\subseteq\cM^2$ as the least relation containing $\redh$ and closed under the following rules:
\begin{gather*}
\infer[{(\red[i]^R)}]{\tuple{R_0,\dots,R_{r-1},P,T} \red[c] \tuple{\vec R\repl{R_i}{\Lookup\mM},P,T}}{R_i = a\in\Addrs&0\le i<r& \Lookinv{a}\red \mM}\\[3pt]
\infer[{(\red[i]^T)}]{\tuple{\vec R,P,[a_0,\dots,a_n]} \red[c] \tuple{\vec R,P,[a_0,\dots,a_{i-1},\Lookup\mM,a_{i+1},\dots,a_n]}}{0\le i\le n& \Lookinv{a_i}\red \mM}
\end{gather*}
We write $\mM\red[i]\mN$ if $\mN$ is obtained from $\mM$ by directly applying one of the above rules --- this is called an \emph{inner} step of computation.
The transitive and reflexive closure of $\red$ and $\red[i]$ are denoted by $\redd$ and $\redd[i]$, respectively.
\end{defi}

\begin{lem}[Postponement of inner steps]\label{lem:standardization}~\\ For $\mM,\mN,\mN'\in\cM$, if $\mM\red[i]\mN\red[h]\mN'$ then there exists $\mM'\in\cM$ such that $\mM\red[h]\mM'\redd[i]\mN'$. In diagrammatic form:
\[
\xymatrix{
\mM\ar@{->}[r]^{\mach{i}}\ar@{-->}[d]^{\mach{h}}&\mN\ar@{->}[d]^{\mach{h}}\\
\mM'\ar@{-->>}[r]^{\mach{i}}&\mN'
}
\]
\end{lem}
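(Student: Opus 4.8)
The plan is to run a case analysis driven by one structural observation: an inner step never rewrites the program, so $\mM\red[i]\mN$ forces $\mM.P=\mN.P$. Hence the first instruction that triggers the head step $\mN\redh\mN'$ is already the first instruction of $\mM$, and the \emph{same} head rule (Load, Apply or Call) fires on $\mM$, producing the witness $\mM'$ with $\mM\redh\mM'$; by determinism (Lemma~\ref{lem:about_red}\ref{lem:about_red1}) this $\mM'$ is unique, so the whole content of the lemma reduces to checking $\mM'\redd[i]\mN'$. Throughout I would keep track of the datum defining the inner step — the address $a$ being rewritten, its location (a register $R_\ell$, or a tape cell), and the witnessing sub-reduction $\Lookinv a\red[c]\mZ$ — and then chase this datum through the head rule that fires.

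The Load and Apply cases are the routine ones. For $\mM.P=\Load i;P'$: if the inner step hit a register or a non-head tape cell, the rewritten address survives in both $\mM'$ and $\mN'$ and the same inner step (merely relocated) gives $\mM'\red[i]\mN'$, unless the loaded register coincides with the rewritten register, in which case it is overwritten and $\mM'=\mN'$; if the inner step hit the \emph{head} of the tape, the rewritten address is loaded into $R_i$, reappearing as an inner step on $R_i$ when that register exists and being discarded (so $\mM'=\mN'$) when it does not. For $\mM.P=\Apply ijk;P'$ the same bookkeeping applies, with one extra ingredient: when the inner step rewrote a source register $R_i$ or $R_j$, the stored application address changes, and closure under application gives $\Lookinv{\App{\mM.R_i}{\mM.R_j}}\red[c]\Lookinv{\App{\mN.R_i}{\mN.R_j}}$, so the modification reappears as one inner step on $R_k$ (two when the single rewritten register feeds both arguments); a rewrite of a register disjoint from $i,j,k$ simply survives, and a rewrite of $R_k$ is overwritten. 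This step needs a version of closure under application for $\red[c]$ rather than just for $\redh$, so I would first promote Lemma~\ref{lem:about_red}\ref{lem:about_red2} to $\red[c]$ (and to appending a whole tape), checking that it transports inner steps to inner steps and head steps to head steps.

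The Call case is where I expect the real difficulty. Firing $\Call i$ discards all registers and replaces the machine by $\appT{\Lookinv{\mM.R_i}}{\mM.T}$, \emph{exposing} the contents of $R_i$ as the new head. When the inner step had rewritten a register other than $R_i$ it is thrown away, so $\mM'=\mN'$; when it had rewritten a tape cell, that cell reappears in the tape of the exposed machine and the same inner step applies. The hard sub-case is when the inner step rewrote precisely the called register, through $\Lookinv{\mM.R_i}\red[c]\mZ$: then $\mM'=\appT{\Lookinv{\mM.R_i}}{\mM.T}$ while $\mN'=\appT{\mZ}{\mM.T}$, and the tape-extended closure under application transports $\Lookinv{\mM.R_i}\red[c]\mZ$ into a reduction $\mM'\red[c]\mN'$. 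The crux is that this transport preserves the inner/head classification: an inner sub-reduction reappears as exactly the inner steps the statement demands, but a \emph{head} sub-reduction surfaces \emph{at the head} of the exposed machine. This is the point on which the whole argument turns, and the only place where the ``one head step, then inner steps'' shape of the conclusion is under genuine pressure; I would scrutinise it most carefully, handling it as in the classical postponement of internal reductions, where head redexes uncovered by the exposing step are absorbed into the head-reduction part rather than left to the inner part.
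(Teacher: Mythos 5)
Your case analysis is the same as the paper's, which spells out only the $\ins{App}$ duplication sub-case and dismisses everything else as easier; your handling of $\ins{Load}$ and $\ins{App}$, including the need to promote Lemma~\ref{lem:about_red}\ref{lem:about_red2} from $\redh$ to $\red[c]$ and to whole appended tapes, is exactly what the paper's three-inner-step computation implicitly uses. Where you go beyond the paper is the $\ins{Call}$ case, and your suspicion there is fully justified: the sub-case in which the inner step rewrites the \emph{called} register via a \emph{head} step of the referenced machine does not merely put the statement ``under pressure'' --- it refutes it. Take any $\mA$ with $\mA\redh\mA'$ and $\mA.P\neq\mA'.P$, e.g.\ $\mA=\append{\mK}{\Lookup\mach{x}_0,\Lookup\mach{x}_0}$, and put $\mM=\tuple{\Lookup\mA,\Call 0,[]}$. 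Then $\mM\red[i]\tuple{\Lookup\mA',\Call 0,[]}\red[h]\mA'$, whereas the unique (by Lemma~\ref{lem:about_red}\ref{lem:about_red1}) head reduct of $\mM$ is $\appT{\mA}{[]}=\mA$; since inner steps never change the internal program, $\mA\redd[i]\mA'$ is impossible, so no $\mM'$ with $\mM\red[h]\mM'\redd[i]\mA'$ exists.

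Your proposed repair --- absorbing the head redex uncovered by the $\ins{Call}$ into the head-reduction part, as in classical postponement --- is the right instinct, but it cannot be executed inside the statement you are proving: determinism forces $\mM'$, and the conclusion grants exactly one head step. What your argument actually establishes is the weaker (and correct) postponement $\mM\red[i]\mN\red[h]\mN'\Rightarrow\mM\reddh\mM'\redd[i]\mN'$ with at most two head steps, which after a routine iteration still supports the lemma's only downstream use, namely extracting $\mM\reddh\mach{x}_n$ from $\mM\redd\mach{x}_n$ in Section~\ref{sec:consistency}. So there is a genuine gap, but it is one you share with the published proof, which declares the duplicated $\ins{App}$ redex to be ``the only interesting case'' and never examines $\ins{Call}$ at all; you have in fact isolated the one sub-case that neither your sketch nor the paper closes in the stated form, and the honest fix is to restate the lemma with $\reddh$ in place of $\red[h]$.
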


\begin{proof} By cases analysis over $\mM\red[i]\mN$.
The only interesting case is when the contracted redex is duplicated in $\mN\red[h]\mN'$, namely:

Case $\mM =\tuple{\vec R\repl{R_i}{a},P,T}$, $\mN = \tuple{\vec R\repl{R_i}{b},P,T}$ with  $\mM.P =\mN.P = \Apply ijk;P'$ and $\Lookinv a\red[c]\Lookinv b$.
Assume $i\neq k<\mM.r$ and $i = j$, the other cases being easier.
In this case $\mM' = \tuple{\vec R\repl{R_i}{a}\repl{R_k}{\App{a}{a}},P,T}$, therefore we need 3 inner steps to close the diagram:
\[
	\begin{array}{lcl}
	\mM'&\red[i]&\tuple{\vec R\repl{R_i}{b}\repl{R_k}{\App{a}{a}},P,T}\\
		&\red[i]&\tuple{\vec R\repl{R_i}{b}\repl{R_k}{\App{b}{a}},P,T}\\
			&\red[i]&\tuple{\vec R\repl{R_i}{b}\repl{R_k}{\App{b}{b}},P,T} = \mN'.
	\end{array}
\]
This concludes the proof.
\end{proof}

Morally, the term rewriting system $(\cM,\red[c])$ is orthogonal because $(i)$ the reduction rules defining $\red[c]$ are non-overlapping as $\red[h]$ is deterministic, $(\red[i]^R)$ reduces a register and $(\red[i]^T)$ reduces  one element of the tape; $(ii)$ the terms on the left-hand side of the arrow are linear, as no equality among subterms is required.
Now, it is well-known that orthogonal TRS are confluent, but one cannot apply~\cite[Thm.4.3.4]{terese} directly since we are not exactly dealing with first-order terms (because of the presence of the encoding).

\begin{prop}\label{prop:confluence}
The reduction $\red[c]$ is confluent.
\end{prop}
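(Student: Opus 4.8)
The plan is to apply the Hindley--Rosen technique, writing $\red[c]$ as the union $\redh\cup\red[i]$ of the head reduction and the inner reduction, and then establishing three facts: that $\redh$ is confluent, that $\red[i]$ is confluent, and that the two relations commute. Once these are in place, the classical Hindley--Rosen lemma (confluent and commuting relations have a confluent union) yields confluence of $\red[c]$.

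Confluence of $\redh$ is immediate: by Lemma~\ref{lem:about_red}\ref{lem:about_red1} the head reduction is deterministic, so from any machine there is a unique maximal $\redh$-sequence and any two head-reducts lie on it; hence $\redh$ is trivially confluent. Commutation of $\redh$ with $\red[i]$ is where the postponement result enters. Lemma~\ref{lem:standardization} supplies the strong commutation
\[
\red[i]\cdot\redh\ \subseteq\ \redh\cdot\redd[i],
\]
and from such a single-step swap the commutation of the full closures, $\redd[i]\cdot\reddh\ \subseteq\ \reddh\cdot\redd[i]$, follows by the standard diagram chase (induction first on the number of head steps, then on the number of inner steps).

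The technical heart is confluence of the inner reduction $\red[i]$, and this is the step I expect to be the main obstacle. Each inner step rewrites exactly one sub-machine --- the one sitting at a fixed register index or tape cell --- by a single $\red[c]$ step, so the definition is self-referential: closing a diagram for two inner steps at the \emph{same} position reduces to confluence of $\red[c]$ on that strictly inner sub-machine. I would organise the argument by the \emph{position} at which a redex is contracted, viewing a machine through the (possibly infinite) tree obtained by dereferencing the addresses in its registers and tape. Two inner steps at distinct top-level positions (different registers, or different tape cells) act independently and their diamond closes in one further step on each side, whereas two steps at the same position are handled by appeal to confluence of $\red[c]$ one level deeper. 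The point that makes this terminate is that every single reduction step is located at a \emph{finite-depth} position, so all diagrams can be closed by induction on the depth of the contracted redexes; in particular the possible cyclicity of the ATM noted in Remark~\ref{rem:forever} --- which is exactly what makes the unfolding tree infinite --- is harmless, since no individual step ever reaches infinite depth. Concretely I would package this as the diamond property of a parallel inner reduction that simultaneously contracts a set of pairwise non-overlapping redexes (legitimate by the non-overlapping, left-linear form of the rules observed just before the statement), proving the diamond by induction on position-depth; since $\red[i]$ is contained in this parallel reduction, which is in turn contained in $\redd[i]$, confluence of $\red[i]$ follows.

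With $\redh$ and $\red[i]$ confluent and commuting, Hindley--Rosen then delivers the confluence of their union $\red[c]$.
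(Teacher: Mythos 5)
Your overall strategy --- Hindley--Rosen applied to the decomposition $\red[c]\,=\,\redh\cup\red[i]$ --- breaks down because $\redh$ and $\red[i]$ do \emph{not} commute. The culprit is the $\ins{Call}$ instruction: the head step $\tuple{\vec R,\Call i,T}\redh\appT{\Lookinv{R_i}}{T}$ dereferences register $R_i$, so a redex that was \emph{inner} in $\mM$ (a step of the machine referenced by $R_i$) is promoted to a \emph{head} redex of the reduct. Concretely, let $\mach{V}=\mach{x}_0$ (a $\red[c]$-normal form), let $\mach{B}=\tuple{\Null,\Load 0;\,\Call 0,[\Lookup\mach{V}]}$, so that $\mach{B}\redh\mach{B}'=\tuple{\Lookup\mach{V},\Call 0,[]}$, and put $\mM=\tuple{\Lookup\mach{B},\Call 0,[]}$. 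Then $\mM\redh\mach{B}$, and $\mM\red[i]\mM_1=\tuple{\Lookup\mach{B}',\Call 0,[]}$ by rule $(\red[i]^R)$ with premise $\mach{B}\red[c]\mach{B}'$. Now $\mach{B}$ has no inner redexes at all (its register is $\Null$ and the machine on its tape is normal), so the only machine reachable from $\mach{B}$ by $\redd[i]$ is $\mach{B}$ itself; but $\mach{B}$ is not among the head reducts of $\mM_1$, which are $\mM_1,\mach{B}',\mach{V}$ (all with programs different from $\mach{B}$'s). Hence the peak formed by $\mM\redh\mach{B}$ and $\mM\red[i]\mM_1$ cannot be closed with $\redd[i]$ on the $\mach{B}$-side and $\reddh$ on the $\mM_1$-side, and Hindley--Rosen is inapplicable. (The peak \emph{is} $\red[c]$-joinable --- both sides reach $\mach{B}'$ --- but the closing step from $\mach{B}$ is a head step, which is exactly what your format forbids.) Note also that Lemma~\ref{lem:standardization} concerns the sequential composition $\mM\red[i]\mN\redh\mN'$ (an inner step \emph{followed by} a head step), not two coinitial steps $\mM\red[i]\mN_1$ and $\mM\redh\mN_2$; it is a postponement statement, not the commutation statement your plan needs.

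The other ingredient, confluence of $\red[i]$ in isolation, suffers from the same instability: two inner steps at the same register or tape position have premises $\Lookinv{a}\red[c]\mach{B}_1$ and $\Lookinv{a}\red[c]\mach{B}_2$ which may be a head step and an inner step of $\Lookinv{a}$ respectively, so closing that diagram throws you back onto confluence of the \emph{full} relation $\red[c]$ one level down --- there is no self-contained induction for $\red[i]$ alone. What does survive is the core of your last paragraph: a parallel reduction contracting a set of pairwise non-overlapping redexes (now necessarily including head redexes at arbitrary dereferencing depth), whose diamond property is proved by induction on the depth of the contracted redexes, each individual step having finite depth. That is precisely the paper's argument: it treats $(\cM,\red[c])$ as a morally orthogonal system and invokes the Parallel Moves Lemma for the whole of $\red[c]$, with no head/inner stratification. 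Drop the Hindley--Rosen scaffolding and run that argument directly on $\red[c]$.
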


\begin{proof}[Proof sketch] The Parallel Moves Lemma, which is the key property for proving Theorem~4.3.4 in~\cite{terese} generalizes easily. The rest of the proof follows.
\end{proof}

\begin{lem}\label{lem:onestepisfine}
Let $\mM,\mN\in\cM$.
\begin{enumerate}[(i)]
\item\label{lem:onestepisfine1}
	$\mM\red\mN$ entails $\mM\equiva\mN$.
\item\label{lem:onestepisfine2}
	$\mM\redd\mN$ entails $\mM\equiva\mN$.
\end{enumerate}
\end{lem}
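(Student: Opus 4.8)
The plan is to establish part~(\ref{lem:onestepisfine1}) and then obtain (\ref{lem:onestepisfine2}) by a routine induction on the number of steps. For the single-step case the key idea is to perform induction on the \emph{derivation} of $\mM\red\mN$ generated by the rules defining $\red[c]$, rather than on any structural measure of the machines themselves.

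If the derivation is the base case $\mM\redh\mN$, then $\mM\reddh\mN$ and the Remark following Definition~\ref{def:equiv:Addrs} (which records that $\mM\reddh\mN$ entails $\mM\equiva\mN$) closes this case at once. If the last applied rule is $(\red[i]^R)$, then $\mM=\tuple{\vec R,P,T}$ with $R_i=a$ and $\mN=\tuple{\vec R\repl{R_i}{\Lookup\mZ},P,T}$, the side premise being $\Lookinv a\red\mZ$. As this premise is a strict subderivation, the induction hypothesis yields $\Lookinv a\equiva\mZ$, i.e.\ $a\sima\Lookup\mZ$ by Definition~\ref{def:inducingequivalences} (using $\Lookinv{\Lookup\mZ}=\mZ$). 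Since all other registers coincide and $\sima$ is reflexive on the remaining data, we obtain $\vec R\sima\vec R\repl{R_i}{\Lookup\mZ}$, and $P$, $T$ are untouched, so $\mM=_\Addrs\mN$; as $=_\Addrs\,\subseteq\,\equiva$, this gives $\mM\equiva\mN$. The case $(\red[i]^T)$ is handled identically, the single $\sima$-step now occurring inside the input tape instead of a register.

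The one point deserving care — and the step I expect a reader to question — is the following apparent circularity: through the inner rules, the claim for $\mM\red\mN$ is reduced to the same claim for $\Lookinv a\red\mZ$, on a machine that may be entirely unrelated to $\mM$ and possibly even reached by a cyclic chain of addresses (cf.\ Remark~\ref{rem:forever}), so that no well-founded order on $\cM$ is available. This is resolved by observing that $\red[c]$ is an \emph{inductively defined} relation: a proof of $\mM\red\mN$ is a finite derivation tree, each inner-rule node carries exactly one $\red$-premise, and $\Lookinv a\red\mZ$ is therefore a genuine subderivation. Hence the induction is well-founded irrespective of the sizes or mutual interdependencies of the machines involved.

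Finally, part~(\ref{lem:onestepisfine2}) follows by induction on the length of $\mM\redd\mN$: the empty reduction is covered by reflexivity of $\equiva$, while a reduction $\mM\red\mM'\redd\mN$ is dispatched by applying part~(\ref{lem:onestepisfine1}) to the first step, the induction hypothesis to $\mM'\redd\mN$, and transitivity of $\equiva$ to conclude $\mM\equiva\mN$.
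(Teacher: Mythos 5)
Your proof is correct and follows essentially the same route as the paper's: induction on the derivation of $\mM\red[c]\mN$ for part~(i), with the base case discharged via the rule \redrule{} and the inner cases reduced to $\mM=_\Addrs\mN$ through the induction hypothesis and $=_\Addrs\,\subseteq\,\equiva$, then a length induction plus transitivity for part~(ii). The paragraph you add on why the induction is well-founded (a finite derivation tree rather than any measure on machines) is left implicit in the paper but is a correct and worthwhile clarification.
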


\begin{proof}
\ref{lem:onestepisfine1} By induction on a derivation of $\mM\red\mN$. 

Base case $\mM\redh \mN$. Since $\equiva$ is an equivalence then so is $=_\Addrs$, by Lemma~\ref{lem:equivalence}.
In particular $=_\Addrs$ is reflexive, whence $\mN =_\Addrs \mN$. By Definition~\ref{def:equiv:Addrs}, we obtain $\mM\equiva \mN$.

Case ${(\red[i]^R)}$. Then $\mM = \tuple{\vec R[R_i:= \Lookup \mM'],P,T}$ and $\mN = \tuple{\vec R[R_i:= \Lookup \mN'],P,T}$ for some existing register $R_i$ and $\mM',\mN'\in\cM$ such that $\mM' \red \mN'$. By induction hypothesis we get $\mM' \equiva \mN'$, equivalently $\Lookup\mM' \sima \Lookup\mN'$.
From this and reflexivity, it follows $\vec R[R_i:= \Lookup \mM'] \sima \vec R[R_i:= \Lookup \mN']$, $P \sima P$ and $T \sima T$.
Thus $\mM =_\Addrs \mN$, so we conclude because $=_\Addrs\,\subseteq\  \equiva$.

Case ${(\red[i]^T)}$. In this case, we have
\[
	\begin{array}{lcl}
	\mM &=& \tuple{\vec R,P,[a_0,\dots,a_{i-1},\Lookup \mM',a_{i+1}\dots,a_n]}\\
	 \mN &=& \tuple{\vec R,P,[a_0,\dots,a_{i-1},\Lookup \mN',a_{i+1}\dots,a_n]}
	 \end{array}
\]
with $\mM' \red \mN'$. By induction hypothesis we get $\mM' \equiva \mN'$, equivalently $\Lookup\mM' \sima \Lookup\mN'$.
This entails $\mM.T \sima \mN.T$, from which it follows $\mM =_\Addrs \mN$. Conclude as above.

\ref{lem:onestepisfine2}  By induction on the length $n$ of the reduction $\mM\redd\mN$. 

Case $n=0$. Then $\mM = \mN$, so we get $\mM\equiva\mN$ by reflexivity.

Case $n>0$. Then $\mM\red\mM'\redd\mN$. By~\ref{lem:onestepisfine1}, we get $\mM \equiva \mM'$.
Since the reduction $\mM'\redd\mN$ is strictly shorter, the induction hypothesis gives $\mM' \equiva \mN$.
Conclude by transitivity.
\end{proof}

\begin{thm}\label{thm:CR}
For $\mM,\mN\in\cM$, we have:
\[
	\mM\equiva \mN\iff \exists \mZ\in\cM\,.\, \mM\redd\mZ\invredd[\mach{c}] \mN
\]
\end{thm}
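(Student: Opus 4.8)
The plan is to read the right-hand side as a joinability relation and prove the two inclusions separately, exploiting that $\redh$ is contained in $\red$ and the confluence of $\red$ (Proposition~\ref{prop:confluence}).

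For the implication $(\Leftarrow)$, suppose there is $\mZ$ with $\mM\redd\mZ$ and $\mN\redd\mZ$. By Lemma~\ref{lem:onestepisfine}\ref{lem:onestepisfine2} both $\mM\equiva\mZ$ and $\mN\equiva\mZ$ hold, and since $\equiva$ is an equivalence I conclude $\mM\equiva\mN$ by symmetry and transitivity. This direction is immediate and is the only place where Lemma~\ref{lem:onestepisfine} is needed.

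For $(\Rightarrow)$, write $\mM\mathrel{\downarrow}\mN$ when $\mM$ and $\mN$ admit a common $\red$-reduct. I first observe that $\mathrel{\downarrow}$ is an equivalence relation: reflexivity and symmetry are immediate, while transitivity is exactly where confluence enters --- given $\mM\redd\mZ_1\invredd[\mach{c}]\mN$ and $\mN\redd\mZ_2\invredd[\mach{c}]\mV$, confluence applied to $\mZ_1\invredd[\mach{c}]\mN\redd\mZ_2$ yields a common reduct of $\mZ_1,\mZ_2$, hence of $\mM,\mV$. Since by Definition~\ref{def:equiv:Addrs} the relation $\equiva$ is the \emph{least} equivalence closed under the rule~\redrule, it then suffices to prove that $\mathrel{\downarrow}$ is closed under~\redrule{}, i.e.\ that $\mM\reddh\mZ =_\Addrs\mN$ entails $\mM\mathrel{\downarrow}\mN$. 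As $\mM\reddh\mZ$ gives $\mM\redd\mZ$, the crux is to show that $\mZ =_\Addrs\mN$ implies $\mZ\mathrel{\downarrow}\mN$.

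To establish this I unfold Definition~\ref{def:inducingequivalences}: $\mZ=_\Addrs\mN$ means they share the same program and have registers and tapes of equal length that differ only at positions carrying addresses $a,b$ with $a\sima b$, i.e.\ $\Lookinv{a}\equiva\Lookinv{b}$. The intended mechanism is to reduce each such pair $\Lookinv{a},\Lookinv{b}$ to a common machine and then to propagate these reductions \emph{inside} $\mZ$ and $\mN$ through the inner rules $(\red[i]^R)$ and $(\red[i]^T)$, thereby reaching a common $\red$-reduct of $\mZ$ and $\mN$. The main obstacle is the apparent circularity: the facts $\Lookinv{a}\equiva\Lookinv{b}$ invoked here are themselves instances of $(\Rightarrow)$ at component addresses, and addresses carry no evident well-founded order (cf.\ Remark~\ref{rem:forever}). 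I resolve this by running the entire argument as an induction on the derivation witnessing $\mM\equiva\mN$ in the inductive (least-fixed-point) generation of $\equiva$: the derivations of $\Lookinv{a}\equiva\Lookinv{b}$ used to certify the side condition $\mZ=_\Addrs\mN$ are strictly smaller, so the induction hypothesis supplies the required common reducts, which the inner rules lift to a common reduct of $\mZ$ and $\mN$. The remaining derivation steps --- reflexivity, symmetry and transitivity of the equivalence closure --- are discharged by the fact, established above, that $\mathrel{\downarrow}$ is an equivalence, with transitivity again relying on confluence.
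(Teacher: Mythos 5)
Your proposal is correct and follows essentially the same route as the paper's own proof: the $(\Leftarrow)$ direction via Lemma~\ref{lem:onestepisfine}\ref{lem:onestepisfine2} plus symmetry and transitivity, and the $(\Rightarrow)$ direction by induction on the derivation of $\mM\equiva\mN$, where in the \redrule{} case the strictly smaller sub-derivations certifying $\mZ=_\Addrs\mN$ yield common reducts of the corresponding register/tape components, lifted to a common reduct of $\mZ$ and $\mN$ by the inner rules $(\red[i]^R)$ and $(\red[i]^T)$, with transitivity discharged by confluence (Proposition~\ref{prop:confluence}). Your explicit identification of the apparent circularity in the ``least equivalence closed under the rule'' reading, and its resolution by induction on the derivation, is exactly the (implicit) structure of the paper's argument.
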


\begin{proof} $(\Rightarrow)$ By induction on a derivation of $\mM\equiva \mN$.

\redrule{} Assume that $\mM\reddh\mZ=_\Addrs \mN$. From $\mZ=_\Addrs \mN$ we get that $\mZ.r = \mN.r$, $\mZ.\vec R\sima \mN.\vec R$, $\mZ.P = \mN.P$ and $\mZ.T\sima\mN.T$. Note that $\mZ.R_i =\Null$ iff $\mN.R_i = \Null$.
Let us call $\cR$ the set of indices $i$ of, say, $\mZ$ such that $\mZ.R_i\neq\Null$.
By assumption, for every $i\in\cR$, we have $\mZ.R_i =a_i,\mN.R_i = a'_i$ for $a_i\sima a'_i$. Equivalently, $\Lookinv{a_i}\equiva\Lookinv{a'_i}$ holds and its derivation is smaller than $\mM\equiva \mN$. By induction hypothesis, they have a common reduct $\Lookinv{a_i}\redd \mach{X}_i\invredd[\mach{c}]\Lookinv{a'_i}$.
Similarly, calling $\mZ.T = [b_1,\dots,b_n]$ and $\mN.T = [b'_1,\dots,b'_m]$ we must have $m = n$ and $b_j\sima b'_j$ whence the induction hypothesis gives a common reduct $\Lookinv{b_j}\redd \mach{Y}_j \invredd[\mach{c}]\Lookinv{b'_j}$.
Putting all reductions together, we conclude:
\[
\mM\reddh \mZ\redd \tuple{\mZ.\vec R\repl{R_i}{\Lookup{\mach{X}_i}}_{i\in\cR},\mZ.P,[\Lookup{\mach{Y}_1},\dots,\Lookup{\mach{Y}_n}]} \invredd[\mach{c}]\mN
\]

(Transitivity) By induction hypothesis and confluence (Proposition~\ref{prop:confluence}).

(Symmetry) Straightforward from the induction hypothesis.

$(\Leftarrow)$ By Lemma~\ref{lem:onestepisfine}\ref{lem:onestepisfine2} we get $\mM\equiva\mZ$ and $\mN\equiva \mZ$, so we conclude by symmetry and transitivity.
\end{proof}

\begin{prop}\label{prop:cAisnonextcombal}
$\cA_{\,\sima}$ is a non-extensional combinatory algebra.
\end{prop}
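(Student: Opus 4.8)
The plan is to check the combinatory-algebra axioms directly in the quotient and then to exhibit a single failure of extensionality. Since the preceding lemma shows that $\sima$ is a congruence on $\cA$, the quotient $\cA_{\sima}=(\Addrs/{\sima},\bullet,[\Lookup\mK]_{\sima},[\Lookup\mS]_{\sima})$ is a well-defined applicative structure with two designated constants. I cannot simply invoke the remark that $\cC_{\simeq}$ is a combinatory algebra, because that fact presupposes that $\cC$ is \emph{already} a combinatory algebra, whereas $\cA$ is merely applicative. So I would verify the $\ssk$- and $\sss$-equations by hand, establish $[\Lookup\mK]_{\sima}\neq[\Lookup\mS]_{\sima}$, and finally produce two distinct elements acting identically on all arguments.

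For the axioms I would combine the implication $\mM\reddh\mN\Rightarrow\mM\equiva\mN$ with the explicit computations of Examples~\ref{ex:somemachines}, which hold for \emph{arbitrary} addresses. Unfolding the application map twice gives $\App{(\App{\Lookup\mK}{a})}{b}=\Lookup(\append{\mK}{a,b})$, and since $\append{\mK}{a,b}\reddh\Lookinv a$ we obtain $\App{(\App{\Lookup\mK}{a})}{b}\sima a$, i.e.\ $[\Lookup\mK]_{\sima}\bullet[a]_{\sima}\bullet[b]_{\sima}=[a]_{\sima}$. Likewise $\App{(\App{(\App{\Lookup\mS}{a})}{b})}{c}=\Lookup(\append{\mS}{a,b,c})$ and $\append{\mS}{a,b,c}\reddh\Lookinv{\App{(\App ac)}{(\App bc)}}$, which yields the $\sss$-equation $[\Lookup\mS]_{\sima}\bullet[a]_{\sima}\bullet[b]_{\sima}\bullet[c]_{\sima}=([a]_{\sima}\bullet[c]_{\sima})\bullet([b]_{\sima}\bullet[c]_{\sima})$.

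To see $\ssk\neq\sss$ I would argue that $\mK$ and $\mS$ are both $\redd[c]$-irreducible: each is stuck (its program begins with a $\ins{Load}$ while its input-tape is empty), all its registers hold $\Null$, and its tape is empty, so neither a head step nor an inner step $(\red[i]^R)/(\red[i]^T)$ applies. By the common-reduct characterization (Theorem~\ref{thm:CR}) together with confluence (Proposition~\ref{prop:confluence}), $\mK\equiva\mS$ would force $\mK$ and $\mS$ to share a $\redd[c]$-reduct; but irreducible machines reduce only to themselves, so this would give $\mK=\mS$, contradicting that they have different programs. Hence $[\Lookup\mK]_{\sima}\neq[\Lookup\mS]_{\sima}$ and $\cA_{\sima}$ is a (consistent) combinatory algebra.

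For non-extensionality the idea is to compare the identity machine $\mach I$ with the one-step identity $\mach J=\tuple{\Null,\RaS 0;\,\Call 0,[]}$. A direct computation gives $\append{\mach J}{c}\redh\tuple{c,\Call 0,[]}\redh\Lookinv c$, so $\App{\Lookup{\mach J}}{c}\sima c$ for every $c\in\Addrs$, while Examples~\ref{ex:somemachines} give $\append{\mach I}{c}\reddh\Lookinv c$, hence $\App{\Lookup{\mach I}}{c}\sima c$ as well. Thus $[\Lookup{\mach J}]_{\sima}\bullet z=z=[\Lookup{\mach I}]_{\sima}\bullet z$ for every $z\in\Addrs/{\sima}$, so the two elements act identically. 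Yet their $\redd[c]$-normal forms differ: $\mach J$ is already irreducible with a single register, whereas $\mach I$ reduces to the stuck three-register machine $\tuple{\Lookup\mK,\Lookup\mK,\Null,\RaS 2;\Apply 0 2 0;\Apply 1 2 1;\Apply 0 1 2;\Call 2,[]}$, whose registers hold the irreducible $\mK$ so that no further inner step is possible. Arguing as for $\ssk\neq\sss$, Theorem~\ref{thm:CR} and confluence force these two distinct normal forms to coincide if $\mach J\equiva\mach I$ held, which is impossible; hence $[\Lookup{\mach J}]_{\sima}\neq[\Lookup{\mach I}]_{\sima}$, refuting extensionality. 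The main obstacle throughout is proving \emph{distinctness} of elements in the quotient: this is exactly where the confluence of $\redd[c]$ and its common-reduct characterization are indispensable, since equality in $\cA_{\sima}$ is an interconvertibility relation that only the normal-form analysis can rule out.
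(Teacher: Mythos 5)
Your proof is correct and follows essentially the same strategy as the paper's: verify the $\ssk$- and $\sss$-equations by the computations of Examples~\ref{ex:somemachines}, and derive every needed inequality (including $\Lookup\mK\not\sima\Lookup\mS$) from Theorem~\ref{thm:CR} together with confluence by comparing $\redd[c]$-normal forms. The only divergence is the non-extensionality witness — you compare the two closed identity implementations $\mach{I}$ and $\tuple{\Null,\RaS 0;\Call 0,[]}$, while the paper compares the partial applications $\App{\Lookup\mK}{a}$ and $\App{\Lookup\mK'}{a}$ for two implementations of $\comb{K}$ — but both pairs are validated by exactly the same normal-form argument.
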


\begin{proof} From the calculations in Example~\ref{ex:calculs}, it follows that $\App{\App{\Lookup{\mach{K}}}{a}}{b}\sima a$ and
$\App{\App{\App{\Lookup{\mach{S}}}{a}}{b}}{c}\sima \App{(\App{a}{c})}{(\App{b}{c})}$ hold, for all $a,b,c\in\Addrs$.
Notice that both \am s $\mK$ and $\mS$ are stuck, and $\mK\neq_\Addrs \mS$ since, e.g., $\mK.r\neq\mS.r$.
By Theorem~\ref{thm:CR}, we get $\Lookup\mK\not\sima\Lookup \mS$, whence $\cA_{\,\sima}$ is a combinatory algebra.

To check that $\cA_{\,\sima}$ is not extensional, it is sufficient to exhibit two elements of $\Addrs$ that are extensionally equal, but distinguished modulo $\sima$.
For instance, take $\App{\Lookup\mK}{a}$ and $\App{\Lookup\mK'}{a}$, where $a\in\Addrs$ is arbitrary and $\mK'$ is a different implementation of the combinator $\comb{K}$, namely:
\[
	\begin{array}{lcl}
	\mK' &=& \tuple{\Null,\Null,\RaS (0, 1); \Call 0,[]}, \\
	\mK &=& \tuple{\Null,\RaS (0,-); \Call 0,[]}.
	\end{array}
\]
For all $a,b\in\Addrs$, easy calculations give $\App{\App{\Lookup\mK'}{a}}{b} \sima a$.
Thus, for all $b\in\Addrs$, we have \[
	\App{\App{\Lookup\mK}{a}}{b} \sima a\sima \App{\App{\Lookup\mK'}{a}}{b},
	\]
	whence the two addresses $\App{\Lookup\mK}{a}$ and $ \App{\Lookup\mK'}{a}$ are extensionally equal elements of $\cA_{\,\sima}$.
However, the corresponding \am s are both stuck and $\appT{\mK}{[a]} \neq_\Addrs \appT{\mK'}{[a]}$, because $1 = (\append{\mK}{a}).r \neq (\append{\mK'}{a}).r = 2$.
Since they cannot have a common reduct, we derive $\appT{\mK}{[a]}\not\equiva\appT{\mK'}{[a]}$ by Theorem~\ref{thm:CR}.
We conclude that $\App{\Lookup\mK}{a} \not\sima \App{\Lookup\mK'}{a}$.
\end{proof}

\begin{lem}\label{lem:notlambdalg}
The combinatory algebra $\cA_{\,\sima}$ is not a \lam-model.
\end{lem}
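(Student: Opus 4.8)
The plan is to refute the Meyer--Scott axiom, which every $\lambda$-model must satisfy, by feeding it the extensionally-equal but $\sima$-distinct pair already produced in the proof of Proposition~\ref{prop:cAisnonextcombal}. Suppose towards a contradiction that $\cA_{\,\sima}$ is a $\lambda$-model. Then, reading the Meyer--Scott axiom in its first-order form (recall $\sso = \sss(\ssk\ssi)$), it satisfies
\[
\forall x.\forall y.\bigl((\forall z.\,xz = yz)\Rightarrow \sso x = \sso y\bigr),
\]
with all equalities taken modulo $\sima$. Fix $a\in\Addrs$ and put $p = \App{\Lookup\mK}{a}$ and $q = \App{\Lookup\mK'}{a}$, where $\mK,\mK'$ are the two implementations of $\comb{K}$ from Proposition~\ref{prop:cAisnonextcombal}. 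That proposition already gives $\App{p}{c}\sima a\sima \App{q}{c}$ for every $c\in\Addrs$, so the premise $\forall z.\,pz = qz$ holds; hence the axiom forces $\App{\sso}{p}\sima\App{\sso}{q}$. The rest of the argument shows that this equivalence actually fails.

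Next I would compute the two machines involved. Writing $e = \App{\Lookup\mK}{\ssi}$, we have $\sso = \App{\Lookup\mS}{e}$ and therefore $\Lookinv{\App{\sso}{p}} = \append{\mS}{e,p}$, whose head reduction is
\[
\append{\mS}{e,p}\ \reddh\ \mV_p := \tuple{e,p,\Null,\RaS 2;\Apply 020;\cdots,[]},
\]
a stuck machine (it blocks on $\RaS 2$ with an empty tape). The analogous computation gives $\Lookinv{\App{\sso}{q}}\reddh\mV_q := \tuple{e,q,\Null,\RaS 2;\cdots,[]}$. The two final states $\mV_p,\mV_q$ share their program, their number of registers, their empty tape, and the contents $e,\Null$ of $R_0,R_2$; they differ \emph{only} in $R_1$, which holds $p$ in one case and $q$ in the other.

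Finally I would invoke Theorem~\ref{thm:CR}: $\App{\sso}{p}\sima\App{\sso}{q}$ would require $\Lookinv{\App{\sso}{p}}$ and $\Lookinv{\App{\sso}{q}}$ to have a common reduct, and confluence (Proposition~\ref{prop:confluence}) transports this down to a common reduct of $\mV_p$ and $\mV_q$. But a stuck machine admits no head step, and inner steps change neither the program, the register count, nor the tape length, while they keep each register content inside its $\sima$-class (Lemma~\ref{lem:onestepisfine}\ref{lem:onestepisfine1}). Hence every reduct of $\mV_p$ keeps $R_1\in[p]_{\sima}$ and every reduct of $\mV_q$ keeps $R_1\in[q]_{\sima}$, so a common reduct would yield $p\sima q$, contradicting $p\not\sima q$ (established in Proposition~\ref{prop:cAisnonextcombal} by a register-count argument). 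This contradiction shows $\App{\sso}{p}\not\sima\App{\sso}{q}$, refuting Meyer--Scott, so $\cA_{\,\sima}$ is not a $\lambda$-model.

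The main obstacle is this last step: one must be certain that the situation ``the two final states agree everywhere except in the $\sima$-class of a single register'' genuinely blocks a common reduct. This is precisely where the confluence characterisation of Theorem~\ref{thm:CR}, together with the stuckness and inner-step invariants, does the work, and it mirrors and reuses the endgame of Proposition~\ref{prop:cAisnonextcombal}. An alternative route, closer to the phrasing of the Introduction, is to exhibit $\beta$-convertible closed terms, e.g.\ $\lam x.x$ and $\lam x.(\lam y.y)x$, whose $\mathrm{CL}$-interpretations reduce to stuck machines distinguished by exactly the same method; I would fall back on it only if isolating $\sso$ concretely turned out to be awkward.
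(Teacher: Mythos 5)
Your proof is correct, but it takes a genuinely different route from the paper. The paper refutes $\lambda$-model-hood by exhibiting $\beta$-convertible terms with distinct interpretations: it takes $\lam z.(\lam x.x)z =_\beta \lam x.x$, translates them to $\comb{S(KI)I}$ and $\comb{SKK}$, and shows that the corresponding machines $\append{\mS}{\App{\Lookup\mK}{\Lookup\mach{I}},\Lookup\mach{I}}$ and $\append{\mS}{\Lookup\mK,\Lookup\mK}$ reduce to stuck machines whose registers $R_1$ hold $\Lookup\mach{I}$ and $\Lookup\mach{K}$ respectively, with $\Lookup\mach{I}\not\sima\Lookup\mach{K}$; Theorem~\ref{thm:CR} then separates the two addresses. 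You instead refute the Meyer--Scott axiom directly, recycling the extensionally-equal pair $p=\App{\Lookup\mK}{a}$, $q=\App{\Lookup\mK'}{a}$ from Proposition~\ref{prop:cAisnonextcombal} and showing $\App{\sso}{p}\not\sima\App{\sso}{q}$. Both are legitimate necessary conditions for being a $\lambda$-model, and the endgame is the same in both arguments: two machines stuck on $\RaS 2$ of $\mS.P$, agreeing everywhere except in the $\sima$-class of one register, hence without a common $\red[c]$-reduct by Theorem~\ref{thm:CR}. Your version is in fact slightly more careful than the paper's (which compresses the ``stuck $+$ $\neq_\Addrs$ $\Rightarrow$ no common reduct'' step into one sentence), since you spell out that inner steps preserve the program, the register count, the tape length and the $\sima$-class of each register. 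What the paper's choice buys is a directly visible failure of $\beta$-invariance of the interpretation, which motivates the repair via the rule \extrule{} in the next section; what yours buys is economy, since the witnesses and the inequality $p\not\sima q$ are already on the table from the non-extensionality proof, and it pinpoints the named axiom that fails. Your suggested fallback is, essentially, the paper's actual proof.
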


\begin{proof} We need to find $M,N\in\Lambda$ satisfying $M=_\beta N$, while $\cA_{\,\sima}\not\models M = N$.
Take $M = \lam z.(\lam x.x)z =_{\CL} \comb{S(KI)I}$ and $N=\lam x.x =_{\CL} \comb{I}$ where $\comb{I} = \comb{SKK}$.

Recall that $\mach{I} = \mach{\append{S}{\Lookup K,\Lookup K}}$.
Easy calculations give:
\[
	\begin{array}{lll}
	\mach{\append{S}{\App{\Lookup \mK}{\Lookup \mach{I}},\Lookup \mach{I}}} &=&
	\tuple{\Null,\Null,\Null,\RaS 0;\cdots,[\App{\Lookup \mK}{\Lookup \mach{I}},\Lookup\mach{I}]}\\
	&\redh&\tuple{\App{\Lookup \mK}{\Lookup \mach{I}},\Null,\Null,\RaS 1;\cdots,[\Lookup\mach{I}]}\\
	&\redh&\stuck{\tuple{\App{\Lookup \mK}{\Lookup \mach{I}},\Lookup\mach{I},\Null,\RaS 2;\cdots,[]}}
	\end{array}
\]
Similarly,
\[
	\mach{I} = \append{\mach{S}}{\Lookup\mK,\Lookup\mK} \reddh \stuck{\tuple{\Lookup\mach{K},\Lookup\mach{K},\Null,\RaS 2;\cdots,[]}}.
\]
These two machines are both stuck and different modulo $=_\Addrs$ since, e.g., the contents of their register $R_1$ are $\Lookup\mach{I}$ and $\Lookup\mach{K}$ respectively, and it is easy to check that $\Lookup\mach{I}\not\sima \Lookup\mach{K}$.
By Theorem~\ref{thm:CR}, we conclude that $ \App{\App{\Lookup \mS}{(\App{\Lookup\mK}{\Lookup\mach{I}})}}{\Lookup\mach{I}}\not\sima\Lookup\mach{I}$.
\end{proof}

\section{Lambda Models via Applicative Equivalences}\label{sec:lammod}

In the previous section we have seen that the equivalence $\sima$, thus $\equiva$, is too weak to give rise to a model of \lam-calculus (Lemma~\ref{lem:notlambdalg}).
The main problem is that a \lam-term $\lam x.M$ is represented as an \am{} performing a ``$\ins{Load}$'' (to read $x$ from the tape) before evaluating the \am{} corresponding to $M$. Since nothing is applied, the tape is empty and the machine gets stuck thus preventing the evaluation of the subterm $M$.
In order to construct a \lam-model we introduce the equivalence $\simea$ below.

\begin{defi}
Define the relation $\equivea$ as the least equivalence satisfying:
\begin{gather*}
	\infer[\redwerule]{\mM \equivea \mN}{\mM\reddh \mZ \eqea \mN}
	\\[3pt]
	\infer[\extrule]{\mM \equivea \mN}{
		\mach{M}\reddh\stuck{\mM'}&
		\mN\reddh\stuck{\mN'}&
		\forall a\in\Addrs\,.\, \append{\mM}{a} \equivea \append{\mN}{a}
	}
\end{gather*}
We say that $\mM$ and $\mN$ are \emph{applicatively equivalent} whenever $\mM\equivea \mN$. Recall that $\simea$ and $\eqea$ are defined in terms of $\equivea$ as described in Definition~\ref{def:inducingequivalences}.
Also in this case, it is easy to check that $\eqea\ \subseteq\ \equivea$ holds.
\end{defi}

\begin{rem}\label{rem:aboutordinals} The rule $\extrule$ shares similarities with the $(\omega)$-rule in \lam-calculus~\cite[Def.~4.1.10]{Bare}, although being more restricted as only applicable to \am{} that eventually become stuck. In particular, both rules have countably many premises, therefore a derivation of $\mM\equivea\mN$ is a well-founded $\omega$-branching tree (in particular, the tree is countable and there are no infinite paths).
Techniques for performing induction ``on the length of a derivation'' in this kind of systems are well-established, see e.g.~\cite{BarendregtTh,IntrigilaS06}. More details about the underlying ordinals will be given in Section~\ref{sec:consistency}.
\end{rem}

\begin{exas}\label{ex:moreexamples} Convince yourself of the following facts.
\bsub
\item\label{ex:moreexamples1}
	As seen in the proof of Lemma~\ref{lem:notlambdalg}, $\mach{I}$ and $\append{\mach{S}}{\App{\Lookup\mK}{\Lookup \mach{I}},\Lookup\mach{I}}$ both reduce to stuck machines.
	For all $a\in\Addrs$, we have that \[\append{\mach{I}}{a}\reddh\Lookinv a\invredd[\mach{h}] \append{\mach{S}}{\App{\Lookup\mK}{\Lookup\mach{I}},\Lookup\mach{I},a}.\]
	By \extrule, they are applicatively equivalent.
\item\label{ex:moreexamples2}
	Since indeterminate machines $\mach{x}_k$ are not stuck, $\mach{x}_m\equivea\mach{x}_n$ entails $m=n$.
\item\label{ex:moreexamples3}
	Let \[\mach{1} = \tuple{\Null^2,\Load (0,1);\Apply010;\Call 0,[]}.\] It is easy to check that, for all $a,b\in\Addrs$, we have $\append{\mach{1}}{a,b}\reddh \append{\Lookinv{a}}{b}\invredd[\mach{h}] \append{\mach{I}}{a,b}$. However, since   $\append{\mach{I}}{\Lookup\mach{x}_n}\reddh \mach{x}_n$ and $\lnot\stuck{\mach{x}_n}$, one cannot apply \extrule{}, whence (intuitively) they are not applicatively equivalent: $\mach{I} \not\equivea \mach{1}$.
\esub
\end{exas}

\noindent
Actually the inequalities claimed in examples~\ref{ex:moreexamples2}-\ref{ex:moreexamples3} above, i.e.\ $\mach{x}_m\not\equivea\mach{x}_n$ for $m\neq n$ and $\mach{I}\not\equivea\mach{1}$, are difficult to prove formally (see Lemma~\ref{lem:about:equivo}\ref{lem:about:equivo2}).

\begin{lem} Let $\mM,\mN\in\cM$ and $a,b\in\Addrs$.
\begin{enumerate}[(i)]
\item If $\mM\equivea \mN$ then $\append{\mM}{a}\equivea \append{\mN}{a}$.
\item The following rule is derivable:
\[
	\infer[(\mathrm{cong})]{\append{\mM}{a}\equivea \append{\mN}{b}}{\mM \equivea \mN& a \simea b}
\]
\item Therefore, $\simea$ is a congruence on $\cA = (\Addrs,\cdot,\Lookup\mK,\Lookup\mS)$.
\end{enumerate}
\end{lem}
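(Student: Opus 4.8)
The plan is to prove the three items in order, with (ii) building on (i) and (iii) following immediately from (ii) once the application map is unfolded; the only genuine induction occurs in (i).

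For (i) I would argue by induction on the derivation of $\mM\equivea\mN$, treating $\equivea$ as the least relation closed under $\redwerule$, $\extrule$ and the three equivalence axioms. The reflexivity, symmetry and transitivity cases are routine, following directly from reflexivity, symmetry and transitivity of $\equivea$ itself. The case $\extrule$ is essentially free: its conclusion $\append{\mM}{a}\equivea\append{\mN}{a}$ is literally one of the (countably many) premises of the rule, so we simply instantiate the universally quantified premise at the given $a$, without appealing to the induction hypothesis. The remaining case is $\redwerule$, where $\mM\reddh\mZ\eqea\mN$. Here I would first use closure of $\redh$ under application (Lemma~\ref{lem:about_red}\ref{lem:about_red2}, iterated to $\reddh$) to get $\append{\mM}{a}\reddh\append{\mZ}{a}$, and then check that extending both tapes by the same $a$ preserves $\eqea$, i.e.\ $\append{\mZ}{a}\eqea\append{\mN}{a}$. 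This holds because appending leaves registers and program untouched (so they remain, respectively, $\simea$-related and syntactically equal) while the tapes $\appT{\mZ.T}{[a]}$ and $\appT{\mN.T}{[a]}$ are compared element-wise: the common prefix is matched by $\mZ.T\simea\mN.T$ and the new last entries by reflexivity $a\simea a$. A single application of $\redwerule$ then yields $\append{\mM}{a}\equivea\append{\mN}{a}$.

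For (ii), given $\mM\equivea\mN$ and $a\simea b$, I would split $\append{\mM}{a}\equivea\append{\mN}{b}$ into two steps joined by transitivity. The first step, $\append{\mM}{a}\equivea\append{\mN}{a}$, is exactly (i). The second step, $\append{\mN}{a}\equivea\append{\mN}{b}$, requires no reduction at all: the two machines share the same registers and program and have tapes $\appT{\mN.T}{[a]}$ and $\appT{\mN.T}{[b]}$, so $\append{\mN}{a}\eqea\append{\mN}{b}$ follows from reflexivity of $\simea$ on the common prefix together with the hypothesis $a\simea b$ on the last entry; since $\eqea\ \subseteq\ \equivea$, this gives the desired equivalence. (Reflexivity of $\simea$ is available because $\equivea$ is an equivalence, hence so are $\simea$ and $\eqea$ by Lemma~\ref{lem:equivalence}.) Finally, (iii) follows by unfolding $\App{a}{b}=\Lookup(\append{\Lookinv{a}}{b})$: knowing already that $\simea$ is an equivalence, it remains to prove compatibility with application, so from $a\simea a'$ (i.e.\ $\Lookinv{a}\equivea\Lookinv{a'}$) and $b\simea b'$ the derived rule $(\mathrm{cong})$ of (ii), with $\mM=\Lookinv{a}$ and $\mN=\Lookinv{a'}$, gives $\append{\Lookinv{a}}{b}\equivea\append{\Lookinv{a'}}{b'}$, which is precisely $\App{a}{b}\simea\App{a'}{b'}$ read through $\Lookup$ and $\Lookinv{\cdot}$.

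I expect no serious obstacle: the statement is essentially bookkeeping over the layered definitions of $\eqea$ and $\simea$. The subtlest point is making sure the induction in (i) is legitimate despite the infinitary $\extrule$ rule; the key observation is that for this particular property the $\extrule$ case does not recurse into its infinitely many premises but merely reads off the single needed instance, so the well-founded $\omega$-branching induction of Remark~\ref{rem:aboutordinals} is never stressed. The only thing demanding care is the repeated verification that tape extension interacts correctly with the component-wise definition of $\eqea$ and $\simea$.
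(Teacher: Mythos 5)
Your proof is correct and follows essentially the same route as the paper's: induction on the derivation for (i) with the $\extrule$ case read off directly from its premises and the $\redwerule$ case handled via closure of $\reddh$ under application, then a two-step transitivity argument for (ii) and unfolding of $\App{}{}$ for (iii). The only (immaterial) difference is that in (ii) the paper first passes from $\append{\mM}{a}$ to $\append{\mM}{b}$ via $\eqea$ and then applies (i), whereas you apply (i) first and adjust the tape entry afterwards.
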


\begin{proof}
$(i)$ By induction on a proof of $\mM\equivea\mN$. Possible cases are:

Case \redwerule. If $\mM\reddh \mZ \eqea \mN$ then $\append{\mM}{a}\reddh\append{\mZ}{a} \eqea \append{\mN}{a}$, by Lemma~\ref{lem:about_red}\ref{lem:about_red2} and the definition of $\eqea$.

Case $\extrule$. Trivial, as the thesis is a premise of this rule.

(Symmetry) and (Transitivity) follow from the induction hypothesis.

$(ii)$ Assume that $\mM\equivea\mN$ and $a\simea b$. Then, we have:
\[
	\begin{array}{lcll}
	\append{\mM}{a}&\eqea&\append{\mM}{b},&\textrm{by reflexivity and }a\simea b,\\
	&\equivea&\append{\mN}{b},&\textrm{by }(i).
	\end{array}
\]
So we conclude by transitivity.

$(iii)$ By Lemma~\ref{lem:equivalence} $\sima$ is an equivalence, by $(ii)$ a congruence.
\end{proof}

We need to show that the congruence $\simea$ is non-trivial, and that the addresses of $\Lookup\mK,\Lookup\mS$ remain distinguished modulo $\simea$.

\begin{lem}\label{lem:about:equivo}
Let $\mM,\mN\in\cM$.
\begin{enumerate}[(i)]
\item\label{lem:about:equivo1}
	If $\mM\equiva \mN$ then $\mM\equivea\mN$.
\item\label{lem:about:equivo2}
	If $\mM\equivea \mN$ and $\mM\reddh\mach{x}_n$ then $\mN\reddh \mach{x}_n$.
\item\label{lem:about:equivo3}
	Hence, the equivalence relation $\simea$ is non-trivial.
\item\label{lem:about:equivo4}
		In particular, $\Lookup \mach{K} \not\simea \Lookup \mach{S}$.
\end{enumerate}
\end{lem}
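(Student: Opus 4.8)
The plan is to establish the four items in order, with item~\ref{lem:about:equivo2} carrying essentially all the difficulty and the remaining three following quickly from it.

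For \ref{lem:about:equivo1} I would show $\equiva\,\subseteq\,\equivea$ by induction on the derivation of $\mM\equiva\mN$. Since $\equiva$ is the least equivalence closed under $\redrule$, the cases are $\redrule$, symmetry and transitivity (reflexivity being a special case of $\redrule$). Symmetry and transitivity are immediate because $\equivea$ is itself an equivalence. In the $\redrule$ case we have $\mM\reddh\mZ=_\Addrs\mN$, so $\mZ$ and $\mN$ share the same program and have register-wise and tape-wise $\sima$-related contents; each such relation $a\sima b$ unfolds to $\Lookinv a\equiva\Lookinv b$ through a strictly smaller derivation (exactly as in the proof of Theorem~\ref{thm:CR}), so the induction hypothesis gives $\Lookinv a\equivea\Lookinv b$, i.e. $a\simea b$. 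Collecting these, $\mZ\eqea\mN$, and since $\mM\reddh\mZ$ we may apply $\redwerule$ to conclude $\mM\equivea\mN$.

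For \ref{lem:about:equivo2}, since $\equivea$ is symmetric it is convenient to prove the biconditional $\mM\reddh\mach{x}_n\iff\mN\reddh\mach{x}_n$ for every $n$. The elementary facts I would rely on are that $\mach{x}_n$ is in a final state and is \emph{not} stuck, that $\redh$ is deterministic (Lemma~\ref{lem:about_red}\ref{lem:about_red1}), and hence that a machine has at most one reduct in a final state. Inducting on the derivation of $\mM\equivea\mN$: reflexivity is trivial; symmetry is immediate from the biconditional form; transitivity simply chains the two biconditionals supplied by the induction hypothesis, needing no control on reduction lengths. The rule $\extrule$ is \emph{vacuous}: its premises give $\mM\reddh\stuck{\mM'}$ and $\mN\reddh\stuck{\mN'}$, so by uniqueness of final reducts neither $\mM$ nor $\mN$ can reduce to the non-stuck final machine $\mach{x}_n$, and both sides of the biconditional are false.

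The real work is the rule $\redwerule$, with premise $\mM\reddh\mZ\eqea\mN$. By determinism, if $\mM\reddh\mach{x}_n$ then $\mZ\reddh\mach{x}_n$, so everything reduces to the auxiliary claim that $\mZ\eqea\mN$ together with $\mZ\reddh\mach{x}_n$ force $\mN\reddh\mach{x}_n$. I would prove this by tracing $\mZ\reddh\mach{x}_n$ step by step alongside $\mN$: a $\ins{Load}$ or $\ins{App}$ step keeps the two machines $\eqea$-related (the value read, resp.\ the computed application, stay $\simea$-related, using that $\simea$ is a congruence), whereas a $\ins{Call}$ step only yields $\equivea$-related reducts, obtained from $R_i\simea R_i'$ and the tape relation via rule $(\mathrm{cong})$. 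The main obstacle is exactly this weakening from $\eqea$ to $\equivea$ at each $\ins{Call}$: it forces the tracing to re-enter the statement for $\equivea$, so the induction on the derivation of the equivalence and the induction on the length of the reduction to $\mach{x}_n$ are genuinely intertwined. I would organize them as a single well-founded induction — lexicographically on the length of the reduction to $\mach{x}_n$ and then on the height of the $\omega$-branching derivation tree (cf.\ Remark~\ref{rem:aboutordinals}) — checking that every $\ins{Call}$ step strictly shortens the reduction while the transitivity and symmetry cases strictly shrink the derivation without lengthening the reduction. Making this measure actually decrease in every case is the delicate point, and it matches the authors' warning that \ref{lem:about:equivo2} is hard to prove formally.

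Items \ref{lem:about:equivo3} and \ref{lem:about:equivo4} are then short. For non-triviality, note $\mach{x}_0\reddh\mach{x}_0$; were $\mach{x}_0\equivea\mach{x}_1$, item~\ref{lem:about:equivo2} would give $\mach{x}_1\reddh\mach{x}_0$, impossible since $\mach{x}_1$ is already final and distinct from $\mach{x}_0$. Hence $\Lookup\mach{x}_0\not\simea\Lookup\mach{x}_1$, so $\simea$ is non-trivial. For \ref{lem:about:equivo4}, suppose $\mK\equivea\mS$; extending the tape by $\Lookup\mach{x}_1$ and then $\Lookup\mach{x}_2$ (using the preceding congruence lemma) gives $\append{\mK}{\Lookup\mach{x}_1,\Lookup\mach{x}_2}\equivea\append{\mS}{\Lookup\mach{x}_1,\Lookup\mach{x}_2}$. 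The left-hand machine reduces to $\mach{x}_1$, so by \ref{lem:about:equivo2} the right-hand one would too; but $\append{\mS}{\Lookup\mach{x}_1,\Lookup\mach{x}_2}$ reduces to a \emph{stuck} machine (it still awaits its third argument), which is final and $\neq\mach{x}_1$, contradicting determinism. Therefore $\Lookup\mK\not\simea\Lookup\mS$.
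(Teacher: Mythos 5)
Your items \ref{lem:about:equivo1}, \ref{lem:about:equivo3} and \ref{lem:about:equivo4} are fine and essentially match the paper (the paper tests $\mK$ and $\mS$ on three arguments rather than two, but your two-argument variant works equally well given \ref{lem:about:equivo2}). The gap is in item \ref{lem:about:equivo2}, which is also where the paper puts all the weight: its proof there is literally ``this is the topic of Section~\ref{sec:consistency}'', and that section exists precisely because the induction you propose does not go through. Concretely, your well-founded measure --- lexicographic on (length of the reduction to $\mach{x}_n$, height of the derivation) --- fails at transitivity. If $\mM\equivea\mZ\equivea\mN$ and $\mM\reddh\mach{x}_n$ in $k$ steps, the inner hypothesis gives you $\mZ\reddh\mach{x}_n$, but in an arbitrary number of steps: the intermediate machine $\mZ$ is unconstrained and may need a far longer head reduction to reach $\mach{x}_n$ than $\mM$ does (take $\mM=\mach{x}_n$ itself and $\mZ$ any long-running machine evaluating to it). So the first component of your measure can strictly increase exactly where you claim it ``does not lengthen the reduction'', and the recursive call on $\mZ\equivea\mN$ is not justified. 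Reversing the lexicographic order does not help either: at a $\ins{Call}$ step the derivation of $\mZ'\equivea\mN'$ is reassembled from the sub-derivations witnessing $\eqea$ by repeated applications of the congruence rule (one per tape element plus transitivity), so its height need not be smaller than that of the original derivation.

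The paper's actual argument replaces both of your measure components by a single ordinal that counts only the nesting depth of $\extrule$: the stratified relations of Figure~\ref{fig:Pesiolino} are designed so that congruence and transitivity do \emph{not} increase this ordinal, transitivity is then eliminated separately by decomposing $\equiv_\alpha$ into a finite chain of one-hole replacements (Lemma~\ref{lem:relalphaprops}), each replacement is localised by a context-machine and tracked through the reduction by the underlined head reduction of Lemma~\ref{lem:chemmeserve}, and the key Lemma~\ref{lem:black_magic} shows that each underlined step is either simulated exactly or strictly drops the ordinal (because opening the hole consumes one application of $\extrule$). The base case $\alpha=0$ additionally needs confluence and postponement of inner steps (Theorem~\ref{thm:CR}, Lemma~\ref{lem:standardization}), which your sketch never invokes. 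Your instinct that the $\ins{Call}$ steps are where $\extrule$ re-enters is right --- a top-level $\extrule$ is indeed vacuous for reaching the non-stuck $\mach{x}_n$, but its occurrences buried inside the register and tape witnesses are not --- yet without the ordinal stratification and the transitivity-elimination step there is no well-founded measure to carry the argument, so as written the central claim remains unproved.
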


\begin{proof}
\ref{lem:about:equivo1} 
	Easy.

\ref{lem:about:equivo2} 
	This proof is the topic of Section~\ref{sec:consistency}.

\ref{lem:about:equivo3} 
By~\ref{lem:about:equivo1}, the relation is non-empty. By~\ref{lem:about:equivo2}, $\mach{x}_i \equivea \mach{x}_j $ if and only if $ i = j$, whence there are infinitely many distinguished equivalence classes.

\ref{lem:about:equivo4} 
From Example~\ref{ex:somemachines}, we get:
\[
	\begin{array}{lcl}
	\append{\mach{K}}{\Lookup\mach{K},\Lookup\mach{K},\Lookup\mach{x}_1}&\reddh&\tuple{\Lookup{\mach{x}_1,\RaS -;\Call 0,[]}};\\
	\append{\mach{S}}{\Lookup\mach{K},\Lookup\mach{K},\Lookup\mach{x}_1}&\reddh&\mach{x}_1.\\
	\end{array}
\]
For these machines to be $\equivea$-equivalent, the former machine should reduce to $\mach{x}_1$, by~\ref{lem:about:equivo2}, which is impossible since $\tuple{\Lookup{\mach{x}_1,\RaS -;\Call 0,[]}}$ is stuck.
\end{proof}

\subsection{Constructing a \lam-model}

We define an interpretation transforming a \lam-term with free variables $x_1,\dots,x_n$ into an \am{} reading the values of $\vec x$ from its tape.
The definition is inspired from the well-known categorical interpretation of \lam-calculus into a reflexive object of a cartesian closed category.
In particular, variables are interpreted as projections.
See, e.g.,~\cite{Koymans82} or~\cite{Selinger02} for more details.

\begin{defi}[Auxiliary interpretation]\label{def:categoricalint}
Let $M\in\Lama$ and $x_1,\dots,x_n$ be such that $\FV{M}\subseteq\vec x$.
Define $\CInt[\vec x]{-} : \Lama\to\cM$ by induction as follows:
\[
	\begin{array}{lcll}
	\CInt[\vec x]{x_i} &=& \mach{Pr}_i^n,\textrm{ where }1\le i \le n;\\[1ex]
	\CInt[\vec x]{\cons{a}} &=& \mach{Cons}_a^n$, for $a\in\Addrs;\\[1ex]
	\CInt[\vec x]{MN} &=& \tuple{\Null^{n},\Lookup\CInt[\vec x]{M},\Lookup\CInt[\vec x]{N},\Null,\ins{Apply}_n,[]};\\[1ex]
	\CInt[\vec x]{\lam y.M} &=& \CInt[\vec x,y]{M},&\textrm{ assuming wlog that }y\notin\vec x;\\
	\end{array}
\]
where
\[
	\begin{array}{lcl}
	\mach{Pr}_i^n &=& \tuple{\Null,(\RaS -)^{i-1};\RaS 0;(\RaS -)^{n-i-1};\Call 0,[]},\\[1ex]
	\mach{Cons}_a^n &=& \tuple{a,(\RaS -)^{n};\Call 0,[]},\\	[1ex]
	\ins{Apply}_n &=& \RaS (0,\dots,n-1);
					\Apply n 0 n;\cdots;\Apply n {n-1} n;\\
					&&\Apply {n+1} 0 {n+1};\cdots;\Apply {n+1} {n-1} {n+1};\\
                     &&\Apply n {n+1} {n+2};\Call {n+2}.\\
	\end{array}
\]
\end{defi}

\begin{rem}\label{rem:aux_int}
Let $n\in\nat$, and $T = [a_1,\dots,a_n]\in\Tapes$. We have:
\begin{enumerate}[(i)]
\item\label{rem:aux_int1}
	$\appT{\mach{Pr}_i^n}{T}\reddh \Lookinv{a_i}$, for all $i\,(1\le i\le n)$;
\item\label{rem:aux_int2}
	$\appT{\mach{Cons}_b^n}{T}\reddh \Lookinv{b}$, for all $b\in\Addrs$;
\item\label{rem:aux_int3}
	$\tuple{\Null^{n},\Lookup \mM,\Lookup \mN,\Null,\ins{Apply}_n,T} \reddh\append{(\appT{\mM}{T})}{\Lookup(\appT{\mN}{T})}$.
\end{enumerate}
\end{rem}

\noindent
From now on, whenever writing $\CInt{M}$, we assume that $\FV{M}\subseteq\vec x$.
The following are basic properties of the interpretation map defined above.

\begin{lem}\label{lemma:aboutcatint}
Let $M\in\Lam(\Addrs)$, $n\in\nat$, $\vec x = x_1,\dots,x_n$ and $\vec a = a_1,\dots,a_n\in\Addrs$.
\bsub
\item\label{lemma:aboutcatint1}
	$\CInt[\vec x]{M} = \tuple{\vec R,\RaS (i_1,\dots,i_n);P,[]}$ for some $\Addrs_\Null$-valued registers $\vec R$, program $P$ and indices $i_j\in\nat$.
\item\label{lemma:aboutcatint2}
	If $m<n$ then $\append{\CInt[\vec x]{M}}{a_1,\dots,a_m}\reddh \stuck{}$.
\item\label{lemma:aboutcatint3}
	For all $b\in\Addrs$, we have $\append{\CInt[y,\vec x]{M}}{b} \equivea \CInt[\vec x]{M\subst{y}{\cons b}}$.
\item\label{lemma:aboutcatint4}
	In particular, if $y\notin\FV{M}$ then $\append{\CInt[y,\vec x]{M}}{b} \equivea \CInt{M}$.
\item\label{lemma:aboutcatint5}
	$\append{\CInt{M}}{\vec a\,} \equivea \append{\CInt[x_{\sigma(1)},\dots,x_{\sigma(n)}]{M}}{a_{\sigma(1)},\dots,a_{\sigma(n)}}$ for all permutations~$\sigma$ of $\set{1,\dots,n}$.
\esub
\end{lem}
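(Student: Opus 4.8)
The plan is to prove the five items in order, letting the substitution statement~\ref{lemma:aboutcatint3} carry the bulk of the work and obtaining \ref{lemma:aboutcatint4} and \ref{lemma:aboutcatint5} as corollaries. For~\ref{lemma:aboutcatint1} I would run a structural induction on $M$, inspecting the four clauses of Definition~\ref{def:categoricalint}: the machines $\mach{Pr}_i^n$ and $\mach{Cons}_a^n$ open with $n$ $\ins{Load}$'s and carry an empty tape, and the application clause opens with $\ins{Apply}_n$, whose first instruction is the block $\RaS(0,\dots,n-1)$; in the abstraction clause $\CInt[\vec x]{\lam z.N}=\CInt[\vec x,z]{N}$ the induction hypothesis furnishes $n+1\ge n$ leading loads, so the displayed form still holds (the trailing program $P$ may itself begin with loads). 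Item~\ref{lemma:aboutcatint2} is then immediate: appending $m<n$ addresses produces a tape of length $m$, the first $m$ loads consume it, and the machine is left facing its $(m+1)$-th $\ins{Load}$ with an empty tape, i.e.\ stuck.

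For the crucial item~\ref{lemma:aboutcatint3} I would not argue directly but rather prove, by induction on $M$, the stronger statement
\[
\append{\CInt[y,\vec x]{M}}{b,c_1,\dots,c_k}\ \equivea\ \append{\CInt[\vec x]{M\subst{y}{\cons b}}}{c_1,\dots,c_k}\qquad(0\le k\le n),
\]
of which~\ref{lemma:aboutcatint3} is the case $k=0$. The variable and constant cases follow by direct computation: both sides reduce, via $\reddh$, to a common machine --- for instance when $M=y$ both reach $\tuple{b,(\RaS -)^{n-k};\Call 0,[]}$, and when $M=x_i$ both retain $c_i$ in $R_0$ after discarding the same prefix --- so they are $\equiva$-equivalent and hence, by Lemma~\ref{lem:about:equivo}\ref{lem:about:equivo1}, $\equivea$-equivalent. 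The abstraction case $M=\lam z.N$ unfolds both interpretations and uses $(\lam z.N)\subst{y}{\cons b}=\lam z.(N\subst{y}{\cons b})$, reducing the goal to the very same statement for $N$ in the context $\vec x,z$, which is the structural induction hypothesis.

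The application case $M=PQ$ is where I expect the genuine difficulty, and I would handle it by a secondary induction on $n-k$. When $k=n$ both machines are saturated, so by Remark~\ref{rem:aux_int}\ref{rem:aux_int3} the left-hand side reduces to $\append{(\append{\CInt[y,\vec x]{P}}{b,c_1,\dots,c_n})}{\Lookup(\append{\CInt[y,\vec x]{Q}}{b,c_1,\dots,c_n})}$ and the right-hand side to the corresponding machine built from $\CInt[\vec x]{P\subst{y}{\cons b}}$ and $\CInt[\vec x]{Q\subst{y}{\cons b}}$; applying the structural induction hypothesis to $P$ and $Q$ (at $k=n$) and then the congruence rule $(\mathrm{cong})$ closes this sub-case. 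When $k<n$ both machines are under-applied and, by the counting argument of~\ref{lemma:aboutcatint2}, reduce to stuck machines; here I would invoke rule~\extrule, whose countably many premises are precisely the instances of the displayed statement with $k+1$ in place of $k$, supplied by the secondary induction hypothesis since $n-(k+1)<n-k$. This nested use of~\extrule, which feeds the missing arguments one at a time until the machine can finally fire, is the delicate point and the reason the generalisation to arbitrary $k$ is needed in the first place.

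Finally,~\ref{lemma:aboutcatint4} is the special case of~\ref{lemma:aboutcatint3} in which $y\notin\FV{M}$, so that $M\subst{y}{\cons b}=M$. For~\ref{lemma:aboutcatint5} I would first establish, by induction on $n$ from~\ref{lemma:aboutcatint3} and preservation of $\equivea$ under application, that $\append{\CInt{M}}{\vec a}\equivea\CInt[]{M\subst{x_1}{\cons{a_1}}\cdots\subst{x_n}{\cons{a_n}}}$, peeling off one argument per step. Running the same computation on the permuted context gives $\append{\CInt[x_{\sigma(1)},\dots,x_{\sigma(n)}]{M}}{a_{\sigma(1)},\dots,a_{\sigma(n)}}\equivea\CInt[]{M\subst{x_{\sigma(1)}}{\cons{a_{\sigma(1)}}}\cdots\subst{x_{\sigma(n)}}{\cons{a_{\sigma(n)}}}}$; since substitutions of variables by constants commute, the two closed terms on the right are identical, and~\ref{lemma:aboutcatint5} follows by symmetry and transitivity of~$\equivea$.
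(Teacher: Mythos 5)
Your proof is correct and follows essentially the same route as the paper's: item~\ref{lemma:aboutcatint3} is proved by structural induction on $M$ after strengthening the statement to the fully applied form, with direct computation once the machines are saturated and nested uses of \extrule{} to relate the under-applied (stuck) machines, while items~\ref{lemma:aboutcatint4} and~\ref{lemma:aboutcatint5} are derived as corollaries via permutation of constant substitutions. The only difference is presentational: the paper states the strengthening informally (``apply an arbitrary $\vec a$ and conclude using \extrule{} $n$ times''), whereas you make it explicit through the parameter $k$ and the secondary induction on $n-k$ in the application case.
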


\begin{proof}[Proof of Lemma~\ref{lemma:aboutcatint}]
\ref{lemma:aboutcatint1} 
	By a straightforward induction on $M$.

\ref{lemma:aboutcatint2} 
	It follows from~\ref{lemma:aboutcatint1}.

\ref{lemma:aboutcatint3} 
	We proceed by structural induction on $M$.
	 By~\ref{lemma:aboutcatint2}, if $\vec x\neq\emptyset$ then both \am s reduce to stuck ones, so we can test the applicative equivalence by applying an arbitrary $\vec a$ and conclude using \extrule{} $n$-times.

Case $M = \cons c$. Then $c\subst{y}{\cons b} = c$, and we have:
	\[
	\append{\CInt[y,\vec x]{\cons c}}{b,\vec a} =
	\append{\mach{Cons}_c^{n+1}}{b,\vec a}
	\reddh \Lookinv c\invredd[\mach{h}]\append{\mach{Cons}_c^{n}}{\vec a}.
	\]

Case $M = x_i$ for some $i\,(1\le i\le n)$. Then $x_i\subst{y}{\cons b} = x_i$ and
	\[
	\append{\CInt[y,\vec x]{x_i}}{b,\vec a} = \append{\mach{Pr}_{i+1}^{n+1}}{b,\vec a}
	\reddh \Lookinv{a_i}\invredd[\mach{h}] \append{\mach{Pr}_i^n}{\vec a}
	= \append{\CInt{x_i}}{\vec a}.
	\]

Case $M = y$. Then $y\subst{y}{\cons b} = \cons b$ and we have:
	\[
	\append{\CInt[y,\vec x]{y}}{b,\vec a\,} =
	\append{\mach{Pr}_{1}^{n+1}}{b,\vec a\,}\reddh
	\Lookinv{b}
	\invredd[\mach{h}]
	\append{\mach{Cons}_b^{n}}{\vec a\,} =
	\append{\CInt{\cons b}}{\vec a\,}.
	\]

Case $M = PQ$. Then $(PQ)\subst{y}{\cons b} = (P\subst{y}{\cons b})(Q\subst{y}{\cons b})$ and we have:
\[
	\begin{array}{llll}
	\append{\CInt[y,\vec x]{PQ}}{b,\vec a\,}&=&
	\tuple{\Null^{n+1},\Lookup\CInt[y,\vec x]{P},\Lookup\CInt[y,\vec x]{Q},\Null,\ins{Apply}_{n+1},[b,\vec a\,]}&\\
	&\reddh&\append{\CInt[y,\vec x]{P}}{b,\vec a,\Lookup (\append{\CInt[y,\vec x]{Q}}{b,\vec a\,})}\\
	&\equivea&\append{\CInt{P\subst{y}{\cons b}}}{\vec a,\Lookup (\append{\CInt{Q\subst{y}{\cons b}}}{\vec a\,})},\textrm{ by IH,}\\
	&\invredd[\mach c]&\tuple{\Null^{n},\Lookup\CInt{P\subst{y}{\cons b}},\Lookup\CInt{Q\subst{y}{\cons b}},\Null,\ins{Apply}_n,[\vec a]}&\\
	&=&\CInt{(P\subst{y}{\cons b})(Q\subst{y}{\cons b})}\\
	&=& \CInt{(PQ)\subst{y}{\cons b}}\\
	\end{array}
\]

Case $M = \lam z.P$, wlog $z\notin y,\vec x$, so $(\lam z.P)\subst{y}{\cons b} = \lam z.P\subst{y}{\cons b}$.
	By~\ref{lemma:aboutcatint2} both machines reduce to stuck ones.
	So we have to apply an extra $a_{n+1}\in\Addrs$.
\[
	\begin{array}{lcll}
	\append{\CInt[y,\vec x]{\lam z.P}}{b,\vec a,a_{n+1}}&=&\append{\CInt[y,\vec x,z]{P}}{b,\vec a,a_{n+1}}&\\
	&\equivea&\append{\CInt[\vec x,z]{P\subst{x}{\cons b}}}{\vec a,a_{n+1}},&\textrm{by IH,}\\
	&=& \append{\CInt{\lam z.P\subst{y}{\cons b}}}{\vec a,a_{n+1}}
	\end{array}
\]

\ref{lemma:aboutcatint4} By~\ref{lemma:aboutcatint3}. 

\ref{lemma:aboutcatint5} By~\ref{lemma:aboutcatint4}, permuting the substitutions. 
\end{proof}

\begin{defi}\label{def:thelambdamodel}
Let $\cS = (\Addrs/_{\simea},\bullet, \Int{-}{-})$, where
\[
	\begin{array}{lcl}
	[a]_{\simea}\bullet [b]_{\simea}&=&[\App{a}{b}]_{\simea},\\[3pt]
	\Int{M}{\rho} &\simea& \Lookup(\append{\CInt{M}}{\rho(x_1),\dots,\rho(x_n)}).\\
	\end{array}
\]
By Lemma~\ref{lemma:aboutcatint}, the definition of $\Int{M}{\rho}$ is independent from the choice of $\vec x$, as long as $\FV{M}\subseteq\vec x$. This is reminiscent of the standard way for defining a syntactic interpretation from a categorical one. (Again, see Koymans's~\cite{Koymans82}.)
\end{defi}

\begin{thm}\label{thm:Sinsonoextslm}
$\cS$ is a syntactic \lam-model.
\end{thm}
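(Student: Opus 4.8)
The plan is to verify the six defining clauses of a syntactic \lam-model (Definition~\ref{def:syntmod}) for the tuple $\cS = (\Addrs/_{\simea}, \bullet, \Int{-}{-})$. First I would record that $\bullet$ is well-defined on equivalence classes: this is exactly the congruence property of $\simea$ (the preceding lemma giving $(\mathrm{cong})$), so $(\Addrs/_{\simea}, \bullet)$ is a genuine applicative structure. Next, conditions \ref{def:syntmod1}--\ref{def:syntmod3} and \ref{def:syntmod5} should follow by essentially unwinding the definition of the auxiliary interpretation $\CInt[\vec x]{-}$ together with Remark~\ref{rem:aux_int}. For \ref{def:syntmod1}, the variable clause uses $\CInt[\vec x]{x_i} = \mach{Pr}_i^n$ and Remark~\ref{rem:aux_int}\ref{rem:aux_int1}; for \ref{def:syntmod2} the constant clause uses $\mach{Cons}_a^n$ and Remark~\ref{rem:aux_int}\ref{rem:aux_int2}; for \ref{def:syntmod3} the application clause uses the $\ins{Apply}_n$ machine and Remark~\ref{rem:aux_int}\ref{rem:aux_int3}. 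Condition \ref{def:syntmod5} (interpretation depends only on the values of free variables) is handled by Lemma~\ref{lemma:aboutcatint}\ref{lemma:aboutcatint4}, which lets one discard tape-entries for variables not free in $M$, combined with \ref{lemma:aboutcatint5} to permute so that the relevant variables can be read off.

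The genuinely substantive clauses are \ref{def:syntmod4} (the $\beta$-like equation $\Int{\lam x.P}{\rho}\cdot a = \Int{P}{\rho\repl{x}{a}}$) and \ref{def:syntmod6} (the Meyer--Scott / weak-extensionality condition). For \ref{def:syntmod4}, I would compute directly: since $\CInt[\vec x]{\lam y.P} = \CInt[\vec x, y]{P}$ by definition, applying the class $[a]_{\simea}$ amounts to appending $a$ to the tape, giving $\append{\CInt[\vec x,y]{P}}{\ldots, a}$; by Lemma~\ref{lemma:aboutcatint}\ref{lemma:aboutcatint3} this is $\simea$-equivalent to $\CInt[\vec x]{P\subst{y}{\cons a}}$, whose interpretation is exactly $\Int{P}{\rho\repl{y}{a}}$ once one matches $\cons a$ against the valuation $\rho\repl{y}{a}$. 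So clause \ref{def:syntmod4} is really a repackaging of the substitution lemma \ref{lemma:aboutcatint3}, which is the workhorse already proved.

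The main obstacle is clause \ref{def:syntmod6}: from $\forall a\,.\,\Int{M}{\rho\repl{x}{a}} = \Int{N}{\rho\repl{x}{a}}$ I must derive $\Int{\lam x.M}{\rho} = \Int{\lam x.N}{\rho}$. Unwinding, the hypothesis says $\append{\CInt[\vec x, x]{M}}{\rho(\vec x), a} \simea \append{\CInt[\vec x, x]{N}}{\rho(\vec x), a}$ for every $a\in\Addrs$, and I want $\CInt[\vec x, x]{M} \simea \CInt[\vec x, x]{N}$ after closing off $\vec x$ via $\rho$ (note $\CInt{\lam x.M} = \CInt[\vec x, x]{M}$). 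This is precisely where the extra rule \extrule{} does its work: by Lemma~\ref{lemma:aboutcatint}\ref{lemma:aboutcatint1}, each of these interpretations is a machine beginning with a $\ins{Load}$ block, so after feeding $\rho(\vec x)$ they reduce to stuck machines (Lemma~\ref{lemma:aboutcatint}\ref{lemma:aboutcatint2}); the hypothesis supplies applicative equivalence of all their one-address extensions, which is exactly the premise of \extrule{}. Applying that rule yields the desired equivalence of the abstraction interpretations. The delicate point to check carefully is that the two machines genuinely become stuck (rather than terminating in an $\varepsilon$-final state) so that \extrule{} is applicable, and that the valuation $\rho$ supplies concrete addresses $\rho(x_1),\dots,\rho(x_n)$ making the reductions in \ref{lemma:aboutcatint2} go through; once these are in place, clauses \ref{def:syntmod4} and \ref{def:syntmod6} together close the proof.
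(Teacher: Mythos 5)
Your proposal is correct and follows essentially the same route as the paper: clauses \ref{def:syntmod1}--\ref{def:syntmod3} and \ref{def:syntmod5} by unfolding the auxiliary interpretation together with Remark~\ref{rem:aux_int} and Lemma~\ref{lemma:aboutcatint}, and clause \ref{def:syntmod6} by noting that both abstraction machines reduce to stuck ones (Lemma~\ref{lemma:aboutcatint}\ref{lemma:aboutcatint2}) and invoking \extrule{} on the hypothesised family of applicative equivalences. The only cosmetic difference is clause \ref{def:syntmod4}, which the paper gets by directly unfolding Definition~\ref{def:thelambdamodel} with the updated valuation rather than detouring through the substitution lemma \ref{lemma:aboutcatint}\ref{lemma:aboutcatint3}; both work.
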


\begin{proof} We need to check that the conditions~\ref{def:syntmod1}--\ref{def:syntmod6} from Definition~\ref{def:syntmod} are satisfied by the interpretation function given in Definition~\ref{def:thelambdamodel}.

Take $\vec x = x_1,\dots,x_n$, and write $\rho(\vec x)$ for $\rho(x_1),\dots,\rho(x_n)$.

\ref{def:syntmod1} $\Int{x_i}{\rho} \simea \Lookup(\append{\mach{Pr}_i^n}{\rho(\vec x)})\simea \rho(x_i)$, by Remark~\ref{rem:aux_int}\ref{rem:aux_int1}. 

\ref{def:syntmod2} $\Int{\cons a}{\rho} \simea \Lookup(\append{\mach{Cons}_a^n}{\rho(\vec x)})\simea a$, by Remark~\ref{rem:aux_int}\ref{rem:aux_int2}. 

\ref{def:syntmod3} In the application case, we have: 
\[
	\begin{array}{llll}
	\Int{PQ}{\rho} &\simea&\Lookup(\append{\CInt{PQ}}{\rho(\vec x)})\\
	&=& \tuple{\Null^{n},\Lookup \CInt[\vec x]{P},\Lookup \CInt[\vec x]{Q},\Null,\ins{Apply}_n,[\rho(\vec x)]},&\textrm{by Def.~\ref{def:categoricalint},}\\
	&\simea&\App{\Lookup(\append{\mM}{\rho(\vec x)})}{\Lookup(\append{\mN}{\rho(\vec x)})},&\textrm{by Rem.~\ref{rem:aux_int}\ref{rem:aux_int3},}\\
	&=&\Int{P}{\rho}\bullet \Int{Q}{\rho}
	\end{array}
\]

\ref{def:syntmod4} In the \lam-abstraction case we have, for all $a\in\Addrs$: 
\[
	\App{\Int{\lam y.P}{\rho}}{a} \simea \append{\CInt{\lam y.P}}{\rho(\vec x),a}
  	 \simea \append{\CInt[\vec x,y]{P}}{\rho(\vec x),a}\\
	 \simea\Int{P}{\rho\repl{y}{a}}.
\]

\ref{def:syntmod5} This follows from Lemma~\ref{lemma:aboutcatint}\ref{lemma:aboutcatint4}. 

\ref{def:syntmod6} By definition $\Int{\lam y.M}{\rho} \simea \Lookup(\append{\CInt[\vec x,y]{M}}{\rho(\vec x)})$ and, by Lemma~\ref{lemma:aboutcatint}\ref{lemma:aboutcatint2}, $\append{\CInt[\vec x,y]{M}}{\rho(\vec x)}$ reduces to a stuck \am. 
Similarly, for $\Int{\lam x.N}{\rho}$. We conclude by applying the rule \extrule.
\end{proof}

\begin{rem}
\bsub
\item For closed \lam-terms $M\in\Lamo$, we have $\Int{M}{} = \CInt[]{M}$.
\item It is easy to check that $\Int{\comb{K}}{}\simea\Lookup\mach{K}$ and
$\Int{\comb{S}}{}\simea\Lookup\mach{S}$.
\item More generally, all \am s behaving as the combinator $\comb{K}$ (resp.\ $\comb{S}$) are equated in the model.
\esub
\end{rem}

\begin{lem}\label{lem:Snotext}
The syntactic \lam-model $\cS$ is not extensional.
\end{lem}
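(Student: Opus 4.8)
The plan is to invoke the characterization stated just after Definition~\ref{def:syntmod}: $\cS$ is extensional if and only if $\cS\models\comb{I}=\comb{1}$. Hence it suffices to refute this one equation, i.e.\ to show $\Int{\comb{I}}{}\not\simea\Int{\comb{1}}{}$. Both terms being closed, I would first unfold their interpretations through Definitions~\ref{def:thelambdamodel} and~\ref{def:categoricalint}: $\Int{\comb{I}}{}\simea\Lookup(\CInt[]{\comb{I}})$ with $\CInt[]{\comb{I}}=\CInt[x]{x}=\mach{Pr}_1^1$, and $\Int{\comb{1}}{}\simea\Lookup(\CInt[]{\comb{1}})$ with $\CInt[]{\comb{1}}=\CInt[x,y]{xy}=\tuple{\Null^2,\Lookup\mach{Pr}_1^2,\Lookup\mach{Pr}_2^2,\Null,\ins{Apply}_2,[]}$. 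A short $\redh$-computation together with rule $\extrule$ (both machines first reaching a stuck state) identifies these with the running examples, $\CInt[]{\comb{I}}\equivea\mach{I}$ and $\CInt[]{\comb{1}}\equivea\mach{1}$, so the task reduces to separating $\mach{I}$ from $\mach{1}$ --- precisely the inequality announced informally in Example~\ref{ex:moreexamples}\ref{ex:moreexamples3}.

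For the separation I would argue by contradiction. Suppose $\mach{I}\equivea\mach{1}$. By the closure of $\equivea$ under application established just before Lemma~\ref{lem:about:equivo}, this yields $\append{\mach{I}}{\Lookup\mach{x}_0}\equivea\append{\mach{1}}{\Lookup\mach{x}_0}$. Now $\append{\mach{I}}{\Lookup\mach{x}_0}\reddh\mach{x}_0$ (as in Example~\ref{ex:somemachines}), so rule $\redwerule$ gives $\append{\mach{I}}{\Lookup\mach{x}_0}\equivea\mach{x}_0$, and by symmetry and transitivity $\mach{x}_0\equivea\append{\mach{1}}{\Lookup\mach{x}_0}$. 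However $\append{\mach{1}}{\Lookup\mach{x}_0}\redh\tuple{\Lookup\mach{x}_0,\Null,\Load 1;\Apply010;\Call 0,[]}$, which is stuck and thus can never reduce to the indeterminate $\mach{x}_0$ (whose program is $\varepsilon$). Since $\mach{x}_0\reddh\mach{x}_0$ holds trivially, Lemma~\ref{lem:about:equivo}\ref{lem:about:equivo2} applied to $\mach{x}_0\equivea\append{\mach{1}}{\Lookup\mach{x}_0}$ would force $\append{\mach{1}}{\Lookup\mach{x}_0}\reddh\mach{x}_0$ --- a contradiction. This gives $\mach{I}\not\equivea\mach{1}$, hence $\Lookup\mach{I}\not\simea\Lookup\mach{1}$ and $\cS\not\models\comb{I}=\comb{1}$, so $\cS$ is not extensional.

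The only genuinely hard ingredient is the appeal to Lemma~\ref{lem:about:equivo}\ref{lem:about:equivo2}: all the difficulty of showing that $\equivea$ does \emph{not} collapse a machine reaching an indeterminate with one that merely gets stuck is concentrated in that statement. Because rule $\extrule$ has countably many premises, a derivation of $\mM\equivea\mN$ is a well-founded $\omega$-branching tree and the non-collapse cannot be obtained by a naive induction; its proof is exactly the ordinal analysis deferred to Section~\ref{sec:consistency}. Granting that result, everything else here --- the two interpretation unfoldings and the explicit $\redh$-reduction sequences --- is routine, following directly from Definitions~\ref{def:categoricalint} and~\ref{def:thelambdamodel}, Remark~\ref{rem:aux_int}, and the calculations in Example~\ref{ex:somemachines}.
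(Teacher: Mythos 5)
Your proof is correct and follows essentially the same route as the paper: reduce the claim to refuting $\cS\models\comb{I}=\comb{1}$, apply an indeterminate machine so that one interpretation reduces to $\mach{x}_0$ while the other gets stuck, and conclude via Lemma~\ref{lem:about:equivo}\ref{lem:about:equivo2} (together with determinism of $\redh$). The only difference is your harmless but unnecessary detour identifying $\CInt[]{\comb{I}}$ and $\CInt[]{\comb{1}}$ with the running examples $\mach{I}$ and $\mach{1}$ up to $\equivea$; the paper runs the same argument directly on $\Int{\comb{I}}{}$ and $\Int{\comb{1}}{}$, which spares you the (slightly nontrivial) verification that $\CInt[]{\comb{1}}\equivea\mach{1}$.
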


\begin{proof} It is enough to check that $\cS\not\models \comb{1} = \comb{I}$. Now, we have:
\[
	\begin{array}{lll}
	\Int{\,\comb{1}\,}{} &=& \tuple{\Null^2,\Lookup\mach{Pr}^2_1,\Lookup\mach{Pr}^2_2,\Null,\ins{Apply}_2,[]};\\
	\Int{\,\comb{I}\ }{} &=& \tuple{\Null,\Load 0;\Call 0,[]}.\\
	\end{array}
\]
By applying an indeterminate machine $\mach{x}_n$, the former reduces to a stuck machine, while the latter reduces to $\mach{x}_n$. By Lemma~\ref{lem:about:equivo}\ref{lem:about:equivo2}, they must be different modulo $\equivea$.
\end{proof}

A difficult problem that arises naturally is the characterization of the \lam-theory induced by the \lam-model $\cS$ defined above.

\begin{prop}\label{prop:aboutThS}
The \lam-theory $\Th{\cS}$ is neither extensional nor sensible.
\end{prop}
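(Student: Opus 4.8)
The statement splits into two independent claims. For non-extensionality I would simply invoke Lemma~\ref{lem:Snotext}: since $\comb{1} =_\eta \comb{I}$ while that lemma gives $\cS\not\models\comb{1}=\comb{I}$, the pair $(\comb{1},\comb{I})$ lies in $=_\eta$ but not in $\Th{\cS}$. Hence $\Th{\cS}$ does not contain $=_\eta$ and is not extensional. No further work is needed here.

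For non-sensibility the plan is to exhibit two \emph{unsolvable} $\lambda$-terms receiving different interpretations in $\cS$; the natural choice is $\Om = \comb{\Delta\Delta}$ and $\lam x.\Om$, both of which are unsolvable. I would show $\cS\not\models\Om = \lam x.\Om$, equivalently $\CInt[]{\Om}\not\equivea\CInt[x]{\Om}$, since $\Int{\Om}{}\simea\Lookup\CInt[]{\Om}$ and $\Int{\lam x.\Om}{}\simea\Lookup\CInt[x]{\Om}$ by Definition~\ref{def:thelambdamodel}. Note that once this single inequality is established, $\Th{\cS}$ fails to equate two unsolvable terms, hence is not sensible (and is, incidentally, consistent, since it does not identify all $\lambda$-terms). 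First I would analyse the two interpretations. By the abstraction clause of Definition~\ref{def:categoricalint} we have $\CInt[]{\lam x.\Om} = \CInt[x]{\Om}$, and by Lemma~\ref{lemma:aboutcatint}\ref{lemma:aboutcatint1} this machine begins with a $\RaS{}$ instruction on an empty tape; taking $m=0<n=1$ in Lemma~\ref{lemma:aboutcatint}\ref{lemma:aboutcatint2} it is already stuck, so $\CInt[x]{\Om}\reddh\stuck{}$. On the other hand $\CInt[]{\Om}$ has no free variables, and unfolding Definition~\ref{def:categoricalint} it equals $\tuple{\Lookup\CInt[]{\comb{\Delta}},\Lookup\CInt[]{\comb{\Delta}},\Null,\ins{Apply}_0,[]}$, whose program starts with an $\ins{App}$ (not a $\RaS{}$); by a computation analogous to the one showing that $\mach{O}$ loops in Examples~\ref{ex:somemachines}, it admits an infinite head-reduction that never passes through a stuck machine, so $\CInt[]{\Om}\not\reddh\stuck{}$.

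The core step is then to prove that the property ``$\reddh\stuck{}$'' is invariant under $\equivea$: if $\mM\equivea\mN$ and $\mM\reddh\stuck{}$ then $\mN\reddh\stuck{}$. Granting this, from $\CInt[x]{\Om}\reddh\stuck{}$ and $\CInt[]{\Om}\not\reddh\stuck{}$ one gets immediately $\CInt[x]{\Om}\not\equivea\CInt[]{\Om}$, which is the desired inequality. I would prove the invariant by induction on the well-founded, $\omega$-branching derivation of $\mM\equivea\mN$, exactly in the style flagged in Remark~\ref{rem:aboutordinals} and used for Lemma~\ref{lem:about:equivo}\ref{lem:about:equivo2}: the rule $\extrule$ is harmless because it already requires both sides to reduce to stuck machines, while for $\redwerule$ one uses that $\eqea$ preserves the internal program and the tape length, and that along a head-reduction the actual contents of the registers only matter at a $\Call$, where $\simea$-equivalent addresses point to $\equivea$-equivalent machines handled by the induction hypothesis.

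The hard part will be precisely this reduction-invariance lemma: it rests on the ordinal analysis of Section~\ref{sec:consistency} and is of exactly the same delicate nature as Lemma~\ref{lem:about:equivo}\ref{lem:about:equivo2} (whose proof is explicitly deferred to that section). Everything else — the non-extensionality, the choice of the two unsolvable witnesses, and the explicit computation that one interpretation is stuck while the other runs forever — is routine. I therefore expect the invariance argument to be the single substantial obstacle, and I would either establish it directly or reduce it to the machinery already developed for Lemma~\ref{lem:about:equivo}\ref{lem:about:equivo2}.
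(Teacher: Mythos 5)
Your proposal is correct and follows the paper's own decomposition almost exactly: non-extensionality is delegated to Lemma~\ref{lem:Snotext}, and non-sensibility is reduced to $\cS\not\models \lam x.\Om = \Om$ via the same witnesses, i.e.\ to showing $\CInt[]{\Om}\not\equivea\CInt[x]{\Om}$. The one genuine divergence is the invariant on which the induction over $\equivea$-derivations is run. You propose that the property ``$\reddh\stuck{}$'' is invariant under $\equivea$, and separate the two machines by noting that $\CInt[x]{\Om}$ is already stuck while $\CInt[]{\Om}$ has an infinite head reduction that never passes through a stuck state. The paper instead tracks a positive property of the looping side: if $\mM\equivea\mN$ and $\mM\reddh\append{\mach{D}_1}{\Lookup\mach{D}_2}$ with $\mach{D}_1\simea\mach{D}_2\simea\CInt[]{\comb{\Delta}}$, then $\mN$ must reach a configuration of the same shape --- which the stuck machine $\CInt[x]{\Om}$ cannot do. Both invariants hit the same obstacle, which you correctly identify: a naive induction on the $\equivea$-derivation breaks at the $\ins{Call}$ step inside $\redwerule$, because the congruence used there does not yield a subderivation, so either version ultimately rests on the ordinal-indexed machinery of Section~\ref{sec:consistency}, exactly as for Lemma~\ref{lem:about:equivo}\ref{lem:about:equivo2}; the paper's proof is no more explicit than yours on this point. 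Your invariant is arguably the more natural and reusable one (it is the property around which the rule \extrule{} is built), while the paper's is more ad hoc but ties the contradiction directly to the concrete reduct of $\CInt[]{\Om}$. One small point of care on your side: the head reduction of $\CInt[]{\Om}$ is not literally periodic the way $\mach{O}$'s is (the address fed to $\CInt[]{\comb{\Delta}}$ changes at each round), but every $\ins{Load}$ encountered does see a non-empty tape, so the ``never stuck'' claim does stand.
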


\begin{proof} $\Th{\cS}$ is not extensional by Lemma~\ref{lem:Snotext}. To show that it is not sensible, it is enough to check that $\cS\not\models \lam x.\Om = \Om$. Notice that
\[
	\begin{array}{llll}
	\CInt[]{\Om}&=&\tuple{\Lookup\CInt[]{\comb{\Delta}},\Lookup\CInt[]{\comb{\Delta}},\Null,\Apply 012;\Call 2,[]},\\
	&\redh&\append{\CInt[]{\comb{\Delta}}}{\Lookup\CInt[]{\comb{\Delta}}},&\textrm{where:}\\
	\CInt[]{\comb{\Delta}}&=&\tuple{\Null,\Lookup\mach{Pr}^1_1,\Lookup\mach{Pr}^1_1,\Null,\ins{Apply}_1,[]}.\\
	\end{array}
\]
By induction on a derivation of $\mM\equivea \mN$, one checks that $\mM\equivea \mN$ and $\mM\reddh \append{\mach{D}_1}{\mach{D}_2}$ with $\mach{D}_1\simea \mach{D}_2\simea\CInt[]{\comb{\Delta}}$entails $\mN\reddh\append{\mach{D}'_1}{\Lookup\mach{D}'_2}$ for some $\mach{D}'_1\simea\mach{D}'_2\simea\CInt[]{\comb{\Delta}}$. We conclude because the machine $\CInt[]{\lam x.\Om}$ is stuck.
\end{proof}


\section{Consistency Proof via Ordinal Analysis}\label{sec:consistency}
\renewcommand\hole[1]{\llparenthesis#1\rrparenthesis}
\newcommand{\Count}{\omega_1}
\newcommand{\occ}[1]{\mathrm{occ}_\xi(#1)}
\newcommand{\pesc}[1][\alpha]{\approx_{#1}}
\newcommand{\svirg}[1][\alpha]{\sim_{#1}}
\newcommand{\eq}[1][\alpha]{\equiv_{#1}}
\newcommand{\ured}[1][M]{\redh^{\mach{#1}}}
\newcommand{\uredd}[1][M]{\reddh^{\mach{#1}}}
\newcommand{\XX}{\mathbb{X}}
\newcommand{\X}{\mach{X}}
\newcommand{\BB}{\mathbb{B}}
\newcommand{\cMX}{\cM[\XX]^\xi}
\newcommand{\Lup}{\underline{\#}}
\newcommand{\Luinv}[1]{\underline{\#}^{-1}(#1)}

In this section we adapt Barendregt's proof of consistency of $\blam\omega$ (the least \lam-theory closed under the $(\omega)$-rule) to prove Lemma~\ref{lem:about:equivo}\ref{lem:about:equivo2}, which entails the consistency of our system.
First, we need to introduce in our setting the notion of \emph{context} and \emph{underlined reduction}, that are omnipresent techniques in the area of term rewriting systems.

\subsection{Contexts and Underlined Head Reductions}

In \lam-calculus a context is a \lam-term possibly containing occurrences of an algebraic variable, called \emph{hole}, that can be substituted by any \lam-term possibly with capture of free variables.
We will define a \emph{context-machine} similarly, namely as an \am{} possibly having a ``hole'' denoted by $\xi$. Formally, we introduce a new machine having no registers or program, only an empty tape (therefore distinguished from all machines populating $\cM$):
\[
	\xi = \tuple{[]}
\]
We then extend our formalism to include machines working either directly or indirectly with one, or more, occurrences of $\xi$. We wish to ensure the invariant that a machine $\mM$ with no occurrences of $\xi$ maintain as address $\Lookup\mM$ --- for this reason we need to extend the range of addresses in a conservative way.

Consider a countable set $\BB$ of addresses such that $\Addrs\cap\BB = \emptyset$, and write $\XX = \Addrs\cup\BB$ for the set of \emph{extended addresses}. As usual, we set \[\XX_\Null = \XX\cup\set{\Null}.\]

\begin{defi}\label{def:context-machine}
\begin{enumerate}[(i)]
\item
	An \emph{extended machine $\X$} is either of the form
	\begin{itemize}
	\item $\appT{\xi}{T}$ or
	\item $\tuple{\vec R,P,T}$
	\end{itemize}
	where $\vec R$ are $\XX_\Null$-valued registers, $P$ is a valid program, $T\in\Tapes[\XX]$ is an $\XX$-valued tape. We write $\cMX$ for the set of all extended machines.
\item Fix a bijective map $\Lup : \cMX \to \XX$ satisfying $\Lup(\mM)=\Lookup\mM$ for all \am{} $\mM\in\cM$. Write $\Luinv{\cdot} : \XX\to\cMX$ for its inverse.
\item The \emph{number of occurrences} of $\xi$ in $\X\in\cMX$ (resp.\ $R_i$, resp.\ $T$), written $\occ{\X}\in\nat\cup\set{\infty}$ ($\occ{R_i}$, $\occ{T}\in\nat\cup\set{\infty}$), is defined as follows:
\[
	\begin{array}{lcl}
	\occ{\appT{\xi}{T}} &=& 1 + \occ{T};\\[3pt]
	\occ{\tuple{\vec R,P,T}} &=& \occ{T} + \sum_{i=0}^{r-1}\occ{R_i};\\[3pt]
	\occ{[a_1,\dots,a_n]} &=& \occ{\Luinv{a_1}}+\cdots+\occ{\Luinv{a_n}};\\[3pt]
	\occ{R_i}&=&\begin{cases}
	0,&\textrm{if }R_i = \Null,\\
	\occ{\Luinv{a}},&\textrm{if }R_i = a\in\XX.\\
	\end{cases}
	\end{array}
\]
Notice that $\occ{\mM}\in\nat$ entails that $\occ{\mM.R_i},\occ{\mM.T}\in\nat$.
\end{enumerate}
\end{defi}

\noindent
The number of occurrences of $\xi$ in an extended machine $\X$ has been defined to handle the fact that recursively dereferencing all the addresses contained in an extended \am{} might result in a non-terminating process (see Remark~\ref{rem:forever}).

\begin{exas}\label{ex:weird}
The following are examples of extended machines:
\begin{enumerate}[(i)]
\item $\xi$, with $\occ{\xi} = 1$;
\item $\append{\mK}{\Lup{\xi},\Lup{(\append{\xi}{\Lup\xi})}}$, with $\occ{\append{\mK}{\Lup{\xi},\Lup{(\append{\xi}{\Lup\xi})}}} = 3$;
\item\label{ex:weird3} for all $n\in\nat$, $\X_n = \tuple{\Lup\xi,\varepsilon,[\Lup{\X_{n+1}}]}$. In this case, $\occ{\X_0} = \infty$.
\end{enumerate}
\end{exas}

\noindent
As previously mentioned, a key property of contexts in \lam-calculus is that one can plug a \lam-term into the hole and obtain a regular \lam-term.
Similarly, given $\mM\in\cMX$ and $\X\in\cMX$, we can define the \am{} $\X\hole{\mM}$ obtained from $\X$ by recursively substituting (even in the registers/tapes) each occurrence of $\xi$ by $\mM$. However, this operation is well-defined only when $\occ{\X}$ is finite, so we focus on extended machines enjoying this property.

\begin{defi}\
\begin{enumerate}[(i)]
\item A \emph{context-machine} is any $\C\in\cMX$ satisfying $\occ{\C}\in\nat$.
\item Given a context-machine $\C$ and $\mM\in\cM$, define the \am{} $\C\hole{\mM}$ as follows:
\[
	\C\hole{\mM} =\begin{cases}
	\appT{\mM}{T\hole{\mM}},&\textrm{if }\C=\appT{\xi}{T},\\
	\tuple{\vec R\hole{\mM},P,T\hole{\mM}},&\textrm{if }\C=\tuple{\vec R,P,T};\\
	\end{cases}
\]
where (assuming $a\in\XX,T = [a_1,\dots,a_n]\in\Tapes[\XX]$ with $\occ{\Cons a T}\in\nat$):
\[
	\begin{array}{lcl}
	a\hole{\mM} &=& \Lup(\Luinv{a}\hole{\mM});\\[3pt]
	R_i\hole{\mM}&=& \begin{cases}
	\Null&\textrm{if }R_i = \Null,\\
	a\hole{\mM}&\textrm{if }R_i = a;\\[3pt]
	\end{cases}~\\
	T\hole{\mM} &=& [a_1\hole{\mM},\dots,a_n\hole{\mM}].\\[3pt]
	\end{array}
\]
\end{enumerate}
\end{defi}

In the following, when writing $\C\hole{\mM}$ (resp.\ $a\hole{\mM}$, $R_i\hole{\mM}$, $T\hole{\mM}$) we silently assume that the number of occurrences of $\xi$ in $\C$ (resp.\ $a,R_i,T$) is finite.
Let us introduce a notion of reduction for context-machines that allows to mimic the underlined reduction from~\cite{BarendregtTh}. The idea is to decompose a machine $\mN$ as $\mN = \C\hole{\underline{\mM}}$ where $\C$ is a context-machine and $\mM$ the underlined sub-machine.
It is now possible to reduce $\C$ independently from $\mM$ until either the machine reaches a final-state or $\xi$ reaches the head-position. In the latter case, we substitute the head occurrence of $\xi$ by $\mM$, and continue the computation.

\begin{defi}\label{def:weirdreds}\
\bsub
\item\label{def:weirdreds1}
	The head reduction $\redh$ is generalized to extended machines in the obvious way, using $\Lup(\cdot)$ rather than $\Lookup{(\cdot)}$ to compute the addresses.
In particular, the machine $\appT{\xi}{T}\not\redh$ is in final state, but it is not stuck.
\item\label{def:weirdreds2}
	Given $\mM\in\cM$ and $\C\in\cM^\xi$, the \emph{$\mM$-underlined (head-)reduction} $\ured$ is defined by adding to~\ref{def:weirdreds1} the rule
\[
	\appT{\xi}{T}\ured\appT{\mM}{T}.
\]
\esub
\end{defi}

\begin{exas} Let $\C = \append{\mS}{\Lup \xi,\Lup\xi,\Lup \mach{x}_n}$. Then $\C\hole{\mach{K}} = \append{\mS}{\Lookup \mach{K},\Lookup\mach{K},\Lookup \mach{x}_n}$.
\bsub
\item $\C\reddh \append{\xi}{\Lup\mach{x}_n,\Lup{(\append{\xi}{\Lup\mach{x}_n})}}$.
\item$\C\uredd[K] \append{\xi}{\Lup\mach{x}_n,\Lup{(\append{\xi}{\Lup\mach{x}_n})}}
\ured[K]\append{\mach{K}}{\Lup\mach{x}_n,\Lup{(\append{\xi}{\Lup\mach{x}_n})}}\uredd[K] \mach{x}_n$.
\esub
\end{exas}

\begin{lem}\label{lem:chemmeserve}
	For $\C,\C'\in\cMX$ and $\mM,\mN\in\cM$, the following are equivalent:
	\begin{enumerate}
	\item\label{lem:chemmeserve1} $\C\hole{\mM}\reddh \mN$;
	\item\label{lem:chemmeserve2} $\C\reddh^\mM\C'$ and $\C'\hole{\mM} = \mN$.
	\end{enumerate}
\end{lem}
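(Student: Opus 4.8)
The plan is to prove the equivalence by reducing each implication to a single-step correspondence between $\redh$ on the plugged machine $\C\hole{\mM}$ and the underlined reduction $\ured$ on the context-machine $\C$, and then to lift these correspondences to the transitive closures by induction on the number of $\redh$-steps. Before anything else I would record a \emph{commutation lemma} stating that plugging $\mM$ commutes with the operations used by head reduction: $\xi\hole{\mM}=\mM$, $(\appT{\X}{T'})\hole{\mM}=\appT{\X\hole{\mM}}{T'\hole{\mM}}$, $(\App{a}{b})\hole{\mM}=\App{a\hole{\mM}}{b\hole{\mM}}$, $(\vec R\repl{R_i}{a})\hole{\mM}=(\vec R\hole{\mM})\repl{R_i}{a\hole{\mM}}$, and $\Luinv{(a\hole{\mM})}=\Luinv{a}\hole{\mM}$. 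Each is immediate by unfolding the definition of $\hole{\mM}$ together with $\Lup(\Luinv{a})=a$; the only point worth spelling out is the compatibility with $\App{}{}$, which holds because appending an address and then plugging yields the same machine as plugging and then appending. Since we always work under $\occ{\C}\in\nat$, the plugged object $\C\hole{\mM}$ is a genuine machine of $\cM$.

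For the direction \ref{lem:chemmeserve2}$\Rightarrow$\ref{lem:chemmeserve1} I would first establish the one-step statement: if $\C\ured\C'$ then $\C\hole{\mM}\reddh\C'\hole{\mM}$, where the reduction is exactly one $\redh$-step unless the rule $\appT{\xi}{T}\ured\appT{\mM}{T}$ fires, in which case $\C\hole{\mM}=\C'\hole{\mM}$ and no step is performed. This is a case analysis on the rule producing $\C\ured\C'$: for $\RaS i$, $\Apply ijk$ and $\Call i$ the commutation lemma shows that performing the instruction on $\C$ and then plugging equals plugging and then performing it on $\C\hole{\mM}$; for the $\xi$-rule one uses $(\appT{\mM}{T})\hole{\mM}=\appT{\mM}{T\hole{\mM}}=(\appT{\xi}{T})\hole{\mM}$, which holds because $\mM$ has no holes in its registers or (proper) program. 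Iterating along the length of $\C\uredd\C'$ then gives $\C\hole{\mM}\reddh\C'\hole{\mM}=\mN$, which is \ref{lem:chemmeserve1}.

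For \ref{lem:chemmeserve1}$\Rightarrow$\ref{lem:chemmeserve2} I would induct on the length $n$ of $\C\hole{\mM}\reddh\mN$, the base case $n=0$ taking $\C'=\C$. The inductive step rests on the \emph{projection} of one $\redh$-step: if $\C\hole{\mM}\redh\mN_1$ then there is $\C_1$ with $\C\uredd\C_1$ (a nonempty reduction) and $\C_1\hole{\mM}=\mN_1$; applying the induction hypothesis to $\mN_1\reddh\mN$ and composing yields the claim. When $\C=\tuple{\vec R,P,T}$ the head instruction of $\C\hole{\mM}$ is literally the head instruction of $\C$; preconditions match because $\hole{\mM}$ preserves non-nullness of registers and the length of the tape, and the commutation lemma lets $\C$ mirror the step in a single $\ured$-step.

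The genuinely delicate case, and the main obstacle, is $\C=\appT{\xi}{T}$, i.e.\ the hole in head position: then $\C\hole{\mM}=\appT{\mM}{T\hole{\mM}}$ performs a step governed by $\mM$'s program, while the matching move $\appT{\xi}{T}\ured\appT{\mM}{T}$ is \emph{silent} after plugging (its image is again $\C\hole{\mM}$), so the step counts cannot be paired one-to-one. I would resolve this by first firing the $\xi$-rule to reach $\appT{\mM}{T}=\tuple{\mM.\vec R,\mM.P,\appT{\mM.T}{T}}$, which is now of the form $\tuple{\vec R,P,T}$ with hole-free registers (so its head instruction cannot itself be the $\xi$-rule), and then invoking the previous case to reproduce the single $\redh$-step with one further $\ured$-step; note $\mM.P\neq\varepsilon$ here since $\C\hole{\mM}$ reduces. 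Thus each $\redh$-step is simulated by \emph{one or two} $\ured$-steps, and because the induction is run on the number of $\redh$-steps (which strictly decreases) rather than on the length of the $\ured$-reduction, the argument closes cleanly; determinism of both $\redh$ (Lemma~\ref{lem:about_red}\ref{lem:about_red1}) and $\ured$ further guarantees that the witnessing $\C'$ is unique, keeping the two reductions in lockstep.
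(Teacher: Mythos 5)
Your proposal is correct and follows essentially the same route as the paper: induction on the length of the reduction in each direction, case analysis on whether the hole is in head position, with the $\xi$-rule acting as a silent step under plugging (so that a head step of $\C\hole{\mM}$ may require two $\ured$-steps, and conversely the $\xi$-step contributes zero $\redh$-steps). The only difference is cosmetic: you isolate the commutation of plugging with $\ins{Load}$/$\ins{App}$/$\ins{Call}$ as an explicit auxiliary lemma, which the paper leaves implicit in its ``follow easily from the induction hypothesis'' cases.
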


\begin{proof} (\ref{lem:chemmeserve1} $\Rightarrow$~\ref{lem:chemmeserve2})
By induction on the length $n$ of the reduction $\C\hole{\mM}\reddh\mN$.

Case $n = 0$. Trivial, take $\C'=\C$.

Case $n > 0$. Let $\C\hole{\mM}\redh\mN'\reddh\mN$. Split into cases depending on $\C$.

Subcase $\C = \appT{\xi}{T}$, therefore $\C\hole{\mM} = \appT{\mM}{T\hole{\mM}}\redh\mN'$. There are two possibilities:
\begin{itemize}
\item $\mM$ is stuck and $T\neq[]$, say, $T=[a_0,\dots,a_n]$. In this case $\C\hole{\mM} = \tuple{\vec R,\Load i;P,[]}$ and $\mN' = \tuple{\vec R\repl{R_i}{a_0\hole{\mM}},\Load i;P,[a_1\hole{\mM},\dots,a_n\hole{\mM}]}$.
On the other side, $\C\ured \appT{\mM}{T}\ured \C''$ for
\[
	\C'' = \tuple{\vec R\repl{R_i}{a_0},\Load i;P,[a_1,\dots,a_n]}
\]
satisfying $\C''\hole{\mM} = \mN'\reddh\mN$. We conclude by induction hypothesis.
\item $\mM\redh\mM'$. In this case $\mN' = \appT{\mM'}{T\hole{\mM}}$ and $\C\ured \appT{\mM}{T}\ured \C''$ for $\C'' = \appT{\mM'}{T}$ satisfying $\C''\hole{\mM} = \mN'\reddh\mN$. We conclude by induction hypothesis.

Subcase $\C = \tuple{\vec R,P,T}$. By case analysis on $P$. All cases follow easily from the induction hypothesis.
\end{itemize}

(\ref{lem:chemmeserve2} $\Rightarrow$~\ref{lem:chemmeserve1})
By induction on the length $n$ of the reduction $\C\reddh^\mM\C'$.

Case $n=0$. Trivial, take $\mN=\C\hole{\mM}$.

Case $n>0$, i.e.\ $\C\ured\C''\uredd\C'$, where the latter reduction is shorter.

Proceed by case analysis on the shape of $\C$.

Subcase $\C = \appT{\xi}{T}$ and $\C''=\appT{\mM}{T}$.
Then $\mN = \C''\hole{\mM} = \appT{\mM}{T\hole{\mM}} = \C\hole{\mM}$.
Conclude by induction hypothesis.

Subcase $\C = \tuple{\vec R,P,T}$. By case analysis on $P$. All cases follow easily from the induction hypothesis.
\end{proof}

\subsection{Ordinal analysis}

As mentioned in Remark~\ref{rem:aboutordinals}, a derivation of $\mM\equivea\mN$ has the structure of a well-founded $\omega$-branching tree.
Unfortunately, this makes it difficult to prove even simple properties like Lemma~\ref{lem:about:equivo}\ref{lem:about:equivo2}.
We need a more refined system exposing the underlying ordinal and handling the applications of the (Transitivity) rule separately.

\begin{defi}
\begin{enumerate}[(i)]
\item Let $\Count$ be the set of all countable ordinals.
\item If $\pi$ is a derivation of $\mM\equivea\mN$, we define its \emph{length} $\ell(\pi)\in\omega_1$ in the usual inductive way for the rules \redwerule, (Refl.), (Symm.), (Trans.). Concerning the rule $\extrule$ having countably many premises, we set:
\[
	\ell\left(
	\begin{array}{c}
	\infer{\mM \equivea \mN}{
		\mach{M},\mach{N}\reddh\stuck{}&
		\forall a\in\Addrs\,.\, \infer{\append{\mM}{a} \equivea \append{\mN}{a}}{\pi_a}
	}
	\end{array}
	\right) = \sup_{a\in\Addrs}(\ell(\pi_a)+1)
\]
It is easy to check that, if a derivation $\pi$ has premises $(\pi_i)_{i\in \cI}$ for some countable set $\cI$ then $\ell(\pi) > \ell(\pi_i)$ for every $i\in\cI$.
\item For all $\alpha\in\Count$, define $\eq,\svirg,\pesc\,\subseteq\cM^2$ as the least reflexive and symmetric relations closed under the rules of Figure~\ref{fig:Pesiolino}.
\end{enumerate}
\begin{figure}
\begin{gather*}
\infer[(\approx_0)]{\mM\pesc[0] \mN}{\mM\equiva \mN}
\qquad
\infer[(\subseteq^{\approx}_\alpha)]{\mM\svirg\mN}{\mM\pesc\mN}
\qquad
\infer[(\subseteq^{\sim}_\alpha)]{\mM\eq\mN}{\mM\svirg\mN}\\[3pt]
\infer[(\approx_\alpha)]{\mM\pesc\mN}{\mM,\mN\reddh\stuck{}&\forall a\in\Addrs,\,\exists\gamma < \alpha\,.\,\append{\mM}{a} \eq[\gamma] \append{\mN}{a}}\\[3pt]
\begin{array}{ccc}
	\infer[(R_\alpha^\sim)]{\mM\repl{R_i}{a}\svirg\mM\repl{R_i}{b}}{\Lookinv a\svirg \Lookinv b}
	&\quad&
	\infer[(@_\alpha^\sim)]{\append{\mM}{a}\svirg\append{\mM}{b}}{\Lookinv a\svirg \Lookinv b}\\[3pt]
	\infer[(T_\alpha^\sim)]{\appT{\mM}{T}\svirg\appT{\mN}{T}}{\mM\svirg \mN&T\in\Tapes}
	&&
	\infer[(T_\alpha)]{\appT{\mM}{T}\eq\appT{\mN}{T}}{\mM\eq \mN&T\in\Tapes}\\[3pt]
	\infer[(R_\alpha)]{\mM\repl{R_i}{a}\eq\mM\repl{R_i}{b}}{\Lookinv a\eq \Lookinv b}
	&&
	\infer[(@_\alpha)]{\append{\mM}{a}\eq\append{\mM}{b}}{\Lookinv a\eq \Lookinv b}\\[3pt]
\end{array}~\\
\infer[(\le^\approx_\alpha)]{\mM \pesc \mN}{\mM\pesc[\gamma]\mN&\gamma \le \alpha}
\quad
\infer[(\le^\sim_\alpha)]{\mM \svirg \mN}{\mM\svirg[\gamma]\mN&\gamma \le \alpha}
\quad
\infer[(\le_\alpha)]{\mM \eq \mN}{\mM\eq[\gamma]\mN&\gamma \le \alpha}
\\
\infer[(\mathrm{Tr}_\alpha)]{\mM \eq \mN}{\mM\eq\mZ&\mZ\eq\mN}\\[-5ex]
\end{gather*}
\caption{Rules satisfied by $\pesc$, $\svirg$ and $\eq$, beyond reflexivity and symmetry.}\label{fig:Pesiolino}
\end{figure}
\end{defi}
The intuitive meanings of the relations $\eq,\svirg,\pesc$ are the following:
\begin{itemize}
\item $\mM\eq\mN\iff\mM\equivea\mN$ is derivable using the rule $\extrule$ at most $\alpha$ times;
\item $\mM\svirg\mN\iff\mM\eq\mN$ is derivable without using transitivity;
\item $\mM\pesc\mN\iff\mM\equivea\mN$ in case $\alpha = 0$. Otherwise, if $\alpha>0$ then
\item $\mM\pesc\mN\iff\mM\svirg\mN$ follows directly from the rule $\extrule$.
\end{itemize}

\noindent
More precisely, the rules $(\approx_0)$, $(\subseteq^{\approx}_\alpha)$, $(\subseteq^{\sim}_\alpha)$ express the fact that $\equiva\,\subseteq\,\pesc\,\subseteq\,\svirg\,\subseteq\,\eq$.
The rule $(\approx_\alpha)$ allows to prove $\mM \pesc \mN$, provided that both machines eventually get stuck and that $\appT{\mM}{[a]} \eq[\gamma_a] \appT{\mN}{[a]}$ is provable for every address $a$, using a smaller ordinal $\gamma_a < \alpha$.
The rules $(R_\alpha)$, $(@_\alpha)$ and $(T_\alpha)$ (resp.\ $(R_\alpha^\sim)$, $(@_\alpha^\sim)$ and $(T_\alpha^\sim)$) represent the contextuality of the relation $\eq$ (resp.\ $\svirg$).
The rules $(\le^\approx_\alpha)$, $(\le^\sim_\alpha)$ and $(\le_\alpha)$ specify that incrementing the ordinal (from top to bottom) is always allowed.
Finally, $(\mathrm{Tr}_\alpha)$ gives the transitivity of $\eq$.

The following lemma describes formally the intuitive meaning discussed above.
\begin{lem}\label{lem:relalphaprops}
Let $\mM,\mN\in\cM$
\begin{enumerate}[(i)]
\item\label{lem:relalphaprops1}\
	 $\mM\equivea\mN\iff\exists\alpha\in\Count\,.\,\mM\eq\mN$.
\item\label{lem:relalphaprops2}\
	$\mM\eq[0]\mN\iff\mM\equiv_\Addrs\mN$.
\item\label{lem:relalphaprops3}\
	$\mM\eq \mN\iff\exists n\ge0, \mZ_1,\dots,\mZ_n\in\cM\,.\, \mM\svirg\mZ_1\svirg\cdots\svirg\mZ_n =\mN$.
\item\label{lem:relalphaprops4}~\\[-3ex]
$
	\begin{array}{ll}
		\mM\svirg\mN\iff&\exists \mach{C}\in\cMX,\mach{M}',\mach{N}'\in\cM\,.\,\\
		&\mM=\mach{C}\hole{\mM'}\land\mN = \mach{C}\hole{\mN'} \land \mM'\pesc\mN'.\\
		\end{array}
	$
\item\label{lem:relalphaprops5}~\\[-2.7ex]
$
	\begin{array}{lcl}
		\mM\pesc\mN\land \alpha\neq 0&\iff&\mM,\mN\reddh\stuck{}\ \land\\
		&&\forall a\in\Addrs,\exists\gamma<\alpha\,.\, \append{\mM}{a}\eq[\gamma]\append{\mN}{a}.\\
		\end{array}
	$
\end{enumerate}
\end{lem}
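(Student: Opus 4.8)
The plan is to establish the five items roughly in the stated order, by transfinite induction on $\alpha\in\Count$ interleaved with rule induction on the derivations of Figure~\ref{fig:Pesiolino}, exploiting the fact that the system is designed to isolate each feature into its own layer: plain evaluation equivalence at level $0$, the single $\extrule$-step inside $\pesc$, contextuality inside $\svirg$, and transitivity inside $\eq$. In every item the ``$\Leftarrow$'' direction is a one-line application of the relevant rule, so the real content lies in the ``$\Rightarrow$'' (rule-inversion) directions, where one must show that these least relations contain nothing beyond the announced closed form.

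For~\ref{lem:relalphaprops1} I would prove the two inclusions separately. To get $\eq\,\subseteq\,\equivea$ for every $\alpha$, a simultaneous induction on $\alpha$ and on the derivation shows $\pesc,\svirg,\eq\,\subseteq\,\equivea$: the base rule $(\approx_0)$ injects $\equiva\subseteq\equivea$ by Lemma~\ref{lem:about:equivo}\ref{lem:about:equivo1}, the rule $(\approx_\alpha)$ is a literal instance of $\extrule$, the contextuality rules hold because $\equivea$ is a congruence on $\cA$, and the remaining rules are absorbed by the fact that $\equivea$ is an equivalence. For the converse I would assign to each derivation $\pi$ of $\mM\equivea\mN$ its length $\ell(\pi)\in\Count$ and prove $\mM\eq[\ell(\pi)]\mN$ by induction on $\pi$, translating each $\redwerule$-step via head reduction (which lands in $\equiva\subseteq\eq[0]$) followed by the contextuality rules applied to the component equivalences $\Lookinv a\equivea\Lookinv b$ hidden inside $\eqea$ (available from the induction hypothesis), and each $\extrule$-step via $(\approx_{\ell(\pi)})$; here countability of $\ell(\pi)$ is exactly the $\omega$-branching well-foundedness of the derivation trees noted in Remark~\ref{rem:aboutordinals}. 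Item~\ref{lem:relalphaprops2} is the case $\alpha=0$: since $(\approx_\alpha)$ cannot fire there (it would require $\gamma<0$) and the level-$0$ contextuality rules only relate machines that are already $=_\Addrs$-equal --- because $a\sima b$ forces $\mM\repl{R_i}{a}=_\Addrs\mM\repl{R_i}{b}$ and $=_\Addrs\,\subseteq\,\equiva$ --- a simultaneous induction collapses $\eq[0]=\svirg[0]=\pesc[0]=\equiva$.

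Item~\ref{lem:relalphaprops3} isolates transitivity. ``$\Leftarrow$'' follows by applying $(\subseteq^{\sim}_{\alpha})$ to each link of the chain and then $(\mathrm{Tr}_\alpha)$; ``$\Rightarrow$'' is a rule induction reorganising an arbitrary $\eq$-derivation into a $\svirg$-chain --- concatenating the chains obtained for the two premises of $(\mathrm{Tr}_\alpha)$, reversing a chain for symmetry, and simulating each $\eq$-contextuality step $(R_\alpha),(@_\alpha),(T_\alpha)$ by applying its $\svirg$-counterpart $(R_\alpha^{\sim}),(@_\alpha^{\sim}),(T_\alpha^{\sim})$ to every link of the chain supplied by the induction hypothesis. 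For item~\ref{lem:relalphaprops4} the key observation is that a $\svirg$-derivation never uses transitivity, so it is a \emph{path} whose unique leaf is either reflexivity or an application of $(\subseteq^{\approx}_{\alpha})$ to some $\mM'\pesc\mN'$, and every contextuality step merely wraps the hole, hence preserves a single occurrence of $\xi$; threading a context-machine $\C\in\cMX$ (with its substitution $\C\hole{\cdot}$ and the connection established in Lemma~\ref{lem:chemmeserve}) along this path yields ``$\Rightarrow$'', while ``$\Leftarrow$'' lifts $\mM'\svirg\mN'$ up through $\C$ by structural induction on the one-hole context. Finally, item~\ref{lem:relalphaprops5} unfolds $\pesc$ for $\alpha\neq0$: ``$\Leftarrow$'' is precisely the rule $(\approx_\alpha)$, and ``$\Rightarrow$'' is a rule induction in which the only genuinely generative rule is $(\approx_\alpha)$, whose premises are exactly the right-hand side, the monotonicity rule $(\le^{\approx}_{\alpha})$ being dispatched by weakening a witness $\gamma<\gamma'\le\alpha$ and symmetry by the symmetry of both conjuncts on the right.

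The main obstacle I expect is the pair of inversion directions in~\ref{lem:relalphaprops4} and~\ref{lem:relalphaprops5}: one must keep the bookkeeping of the three mutually recursive relations and their ordinal indices perfectly aligned while arguing that the least relations acquire no spurious pairs, and the whole decomposition in~\ref{lem:relalphaprops4} rests on the deliberate absence of transitivity in $\svirg$, which is precisely what guarantees the single-hole/path structure that makes the context-machine argument go through. A secondary but essential point, already needed in~\ref{lem:relalphaprops1}, is verifying that the ordinal extracted from an $\equivea$-derivation stays countable, which is underwritten by the $\omega$-branching well-foundedness of such derivations.
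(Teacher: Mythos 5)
Your plan matches the paper's proof in structure and substance: each item is handled by the same per-item rule induction (induction on the length of the $\equivea$-derivation for~\ref{lem:relalphaprops1}, inversion by rule induction for~\ref{lem:relalphaprops2}--\ref{lem:relalphaprops5}), with the same key moves --- translating $\redwerule$ via $\equiva\subseteq\eq[0]$ plus contextuality and transitivity on the components of $\eqea$, turning an $\eq$-derivation into a $\svirg$-chain by pushing the contextual rules onto each link, and threading a one-hole context-machine along the transitivity-free $\svirg$-path for~\ref{lem:relalphaprops4}. The proposal is correct and essentially identical to the paper's argument, merely spelling out a few points (countability of $\ell(\pi)$, the collapse at level $0$) that the paper leaves terse.
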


\begin{proof}\ref{lem:relalphaprops1} $(\Leftarrow)$ Easy.

$(\Rightarrow)$ By induction on the length of a derivation of $\mM\equivea\mN$.

Case \redwerule. I.e., there exists $\mZ\in\cM$ such that $\mM\reddh\mZ\eqea\mN$.
By Theorem~\ref{thm:CR}, we have $\mM\equiva\mZ$ whence $\mM\eq[0]\mZ$ by $(\pesc[0])$, which implies $\mM\eq\mZ$ for all $\alpha\in\Count$ using the rule $(\le_\alpha)$. Now, consider the set
\[
	\cR = \set{ i \st \mZ.R_i \neq\Null} = \set{ i \st \mN.R_i \neq\Null}
\]
Note that $\cR= \set{i_1,\dots,i_k}$ for some $k<\mZ.r_0 (=\mN.r_0)$. For every $i\in\cR$, let $\mZ.R_i = a_i$ and $\mN.R_i = a'_i$. Also, let $\mZ.T = [b_1,\dots,b_m]$ and $\mN.T = [b'_1,\dots,b'_m]$. By assumption, $a_i\simea a'_i$ and $b_j\simea b'_j$ for every $i\in\cR$, and $j\,(1\le j\le m)$.
By induction hypothesis, $\Lookinv{a_i} \eq[\gamma_i] \Lookinv{a'_i}$ and $\Lookinv{b_j} \eq[\delta_j] \Lookinv{b'_j}$. Using the rule $(<_\alpha)$, the same holds for $\eq[\alpha]$ setting $\alpha = \sup_{i\in\cR,1\le j\le m} \set{\gamma_i,\delta_j}$. Putting everything together, we obtain:
\[
	\begin{array}{lcll}
	\mM&\eq&\mZ = \tuple{\mZ.\vec R,P,[b_1,\dots,b_m]}\\
	&\eq&\tuple{\mZ.\vec R\repl{R_{i_1}}{a'_{i_1}},P,[b_1,\dots,b_m]},&\textrm{by $(R_\alpha)$,}\\
	&\eq&\cdots&\qquad\vdots\\
	&\eq&\tuple{\mZ.\vec R[R_i:= a'_i]_{i\in\cR},P,[b_1,\dots,b_m]},&\textrm{by $(R_\alpha)$,}\\
	&=&\tuple{\mN.\vec R,P,[b_1,\dots,b_m]},&\textrm{by definition,}\\
	&\eq&\tuple{\mN.\vec R,P,[b'_1,b_2,\dots,b_m]},&\textrm{by $(T_\alpha)$,}\\
	&\eq&\cdots&\qquad\vdots\\
	&\eq&\tuple{\mN.\vec R,P,[b'_1,\dots,b'_m]},&\textrm{by $(T_\alpha)$,}\\
	&=&\mN,&\textrm{by definition.}\\
	\end{array}
\]
We conclude by applying the transitivity rule $(\mathrm{Tr}_\alpha)$ that $\mM \eq \mN$.

Case \extrule. By induction hypothesis, for every $a\in\Addrs$, there exists $\gamma_a\in\Count$ such that $\append{\mM}{a} \eq[\gamma_a]\append{\mN}{a}$.
For $\gamma = \sup_{a\in\Addrs}\gamma_a$, we get $\append{\mM}{a} \eq[\gamma]\append{\mN}{a}$ by $(\le_\alpha)$. By $(\pesc)$ we get $\mM\pesc\mN$ for $\alpha=\gamma+1\in\Count$, conclude by $(\subseteq^\approx_\alpha)$ and $(\subseteq^\sim_\alpha)$.

(Reflexivity), (Symmetry) and (Transitivity) follow from the respective property of $\eq$.

Concerning items~\ref{lem:relalphaprops2}--\ref{lem:relalphaprops5} the implication $(\Leftarrow)$ is trivial. We analyze $(\Rightarrow)$.

\ref{lem:relalphaprops2} 
	By induction on a derivation of $\mM\eq[0]\mN$, using Theorem~\ref{thm:CR}.

\ref{lem:relalphaprops3} 
    By induction on a derivation of $\mM\eq\mN$.

    Case $(\subseteq^\sim_\alpha)$. Trivial.

    Case $(R_\alpha)$. I.e., $\mM = \mZ\repl{R_i}{a}$, $\mN = \mZ\repl{R_i}{b}$ and $\Lookinv{a} \eq \Lookinv{b}$. By induction hypothesis, there exist $c_1,\dots,c_k\in\Addrs$ such that
    \[
    	\Lookinv{a}\svirg\Lookinv{c_1}\svirg\cdots\svirg\Lookinv{c_k}=\Lookinv{b}.
    \]
    The case follows by applying the rule $(R^\sim_\alpha)$.

    Case $(@_\alpha)$. Analogous, by applying $(@^\sim_\alpha)$.

    Case $(T_\alpha)$. Analogous, by applying $(T^\sim_\alpha)$.

    Case $(\mathrm{Tr}_\alpha)$. Straightforward from the IH\@.

    Case $(\le_\alpha)$. By IH and $(\le^\sim_\alpha)$.

	Cases (Reflexivity), (Symmetry). Straightforward from the IH\@.

\ref{lem:relalphaprops4} 
	By induction on a derivation of $\mM\svirg\mN$.

	Case $(\subseteq^\approx_\alpha)$. Take $\C = \xi$.

	Case $(R^\sim_\alpha)$. I.e., $\mM = \mZ\repl{R_i}{a}$, $\mN = \mZ\repl{R_i}{b}$ and $\Lookinv{a} \svirg \Lookinv{b}$. By induction hypothesis, there exist $\C'\in\cMX$ having address $c = \Lup{\C'}\in\XX$, $\mM',\mN'\in\cM$ such that $\C'\hole{\mM'} = \Lookinv{a}$, $\C'\hole{\mN'} = \Lookinv{b}$ and $\mM' \pesc\mN'$. We conclude by taking $\C = \mZ\repl{R_i}{c}$.

	Case $(@^\sim_\alpha)$. Analogous.

	Case $(T^\sim_\alpha)$. Take $\C = \appT{\C'}{T}$, where $\C'$ is obtained from the IH\@.

	Case $(\le^\sim_\alpha)$. It follows from the IH, by applying $(\le^\sim_\alpha)$ and $(\le^\approx_\alpha)$.

	Cases (Reflexivity), (Symmetry). Straightforward from the IH\@.

\ref{lem:relalphaprops5} Immediate. 
\end{proof}

Consider now a scenario where $\C\hole \mM\reddh \C'\hole{\mM}$.
Assuming $\mM\pesc\mN$, one might expect that also $\C\hole \mN\reddh \C'\hole{\mN}$ holds.
In general, this is not the case because $\mM$ and $\mN$ might reach the head position and get control of the computation.
Using the underlined (head-)reduction from Definition~\ref{def:weirdreds}\ref{def:weirdreds2} we can substitute $\mN$ for $\mM$ along the reduction (when it comes in head position) and construct a proof of $\C\hole \mN \eq[\gamma] \C'\hole\mN$ having a lower ordinal $\gamma < \alpha$.

\begin{lem}\label{lem:black_magic}
Let $\alpha > 0$, $\C\in\cMX$, $\mM,\mN\in\cM$ such that $\mM\pesc\mN$.
If $\C\redh^\mM\C'$ and $\C'\hole{\mM}\not\reddh\stuck{}$, then there exists $\gamma < \alpha$ such that $\C\hole \mN \eq[\gamma] \C'\hole\mN$.
\end{lem}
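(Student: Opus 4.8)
The plan is to analyze the single step $\C\ured\C'$ according to Definition~\ref{def:weirdreds}\ref{def:weirdreds2}: it is either an ordinary (generalized) head-reduction step $\C\redh\C'$, or the special substitution rule, which applies only when $\C=\appT{\xi}{T}$ and then yields $\C'=\appT{\mM}{T}$. These two possibilities are mutually exclusive and exhaustive, since $\appT{\xi}{T}$ is in final state for $\redh$. I would treat them separately, producing $\gamma=0$ in the first and extracting $\gamma$ from $\mM\pesc\mN$ in the second.

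In the first case the step is an ordinary rule (Load, Apply, or Call) applied to the top-level structure of $\C$, rather than the $\xi$-substitution rule. The key routine observation is that filling the hole commutes with such a step: since hole-filling commutes both with the application map, because $(\App{a}{b})\hole{\mN}=\App{a\hole{\mN}}{b\hole{\mN}}$, and with dereferencing, because $a\hole{\mN}=\Lup(\Luinv a\hole{\mN})$ gives $\Luinv{(a\hole{\mN})}=\Luinv a\hole{\mN}$, one checks rule by rule that $\C\hole{\mN}\redh\C'\hole{\mN}$. Hence $\C\hole{\mN}\equiva\C'\hole{\mN}$, so $\C\hole{\mN}\eq[0]\C'\hole{\mN}$ by Lemma~\ref{lem:relalphaprops}\ref{lem:relalphaprops2}, and since $\alpha>0$ the choice $\gamma=0<\alpha$ works.

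In the second case we have $\C\hole{\mN}=\appT{\mN}{T\hole{\mN}}$ and $\C'\hole{\mN}=\appT{\mM}{T\hole{\mN}}$. Here I would first use the hypothesis $\C'\hole{\mM}\not\reddh\stuck{}$ to exclude $T=[]$: indeed $\C'\hole{\mM}=\appT{\mM}{T\hole{\mM}}$, so if $T$ were empty this would be $\mM$ itself, which reduces to a stuck machine because $\mM\pesc\mN$ with $\alpha>0$ forces $\mM\reddh\stuck{}$ by Lemma~\ref{lem:relalphaprops}\ref{lem:relalphaprops5}, a contradiction. Hence $T\hole{\mN}=\Cons{b}{T''}$ for some address $b$ and tape $T''$. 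Applying Lemma~\ref{lem:relalphaprops}\ref{lem:relalphaprops5} to $\mM\pesc\mN$ at the address $b$ produces some $\gamma<\alpha$ with $\append{\mM}{b}\eq[\gamma]\append{\mN}{b}$; extending the tape by $T''$ through rule $(T_\gamma)$ of Figure~\ref{fig:Pesiolino} yields $\appT{\mM}{T\hole{\mN}}\eq[\gamma]\appT{\mN}{T\hole{\mN}}$, that is $\C\hole{\mN}\eq[\gamma]\C'\hole{\mN}$ after using symmetry of $\eq[\gamma]$.

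I expect the main obstacle to be the second case, and specifically the recognition that the hypothesis $\C'\hole{\mM}\not\reddh\stuck{}$ is precisely what rules out the empty-tape situation: without it we would be left needing $\mM\eq[\gamma]\mN$ for some $\gamma<\alpha$, whereas $\mM\pesc\mN$ only supplies $\mM\eq[\alpha]\mN$, so the ordinal could not be strictly decreased. Once a non-empty tape is guaranteed, the decrease comes for free from the quantifier over addresses built into $\approx_\alpha$. The commutation of hole-filling with head reduction in the first case is conceptually straightforward but should be verified for each of the three instruction rules.
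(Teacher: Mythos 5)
Your proposal is correct and follows essentially the same route as the paper's proof: the only interesting case is the $\xi$-substitution step $\appT{\xi}{T}\ured\appT{\mM}{T}$, where the hypothesis $\C'\hole{\mM}\not\reddh\stuck{}$ forces $T\neq[]$ (otherwise $\C'\hole{\mM}=\mM$, which gets stuck by Lemma~\ref{lem:relalphaprops}\ref{lem:relalphaprops5}), and the strictly smaller ordinal $\gamma<\alpha$ is then extracted from $(\approx_\alpha)$ at the first address of the tape and propagated by $(T_\gamma)$. The ordinary head-step case commuting with hole-filling to give $\eq[0]$ is likewise exactly the paper's remaining case.
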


\begin{proof} By cases on the shape of $\C$.

Case $\C = \appT{\xi}{T}$ for some $T\in\Tapes[\XX]$ and $\C' = \appT{\mM}{T}$.
From  $\mM\pesc\mN$ and Lemma~\ref{lem:relalphaprops}\ref{lem:relalphaprops5}, we get that $\mM\reddh{\stuck{\mM'}}$ for some $\mM'\in\cM$.
Since $\C'\hole{\mM} = \appT{\mM}{(T\hole{\mM})}$ cannot reduce to a stuck \am, we must have $T\hole{\mM}\neq[]$.
In other words, $T = [a_0,\dots,a_n]$ for some $n\ge 0$.
Notice that, for all $a_i\in\Tapes[\XX]$, we have $a_i\hole{\mN}\in\Addrs$ (by construction).
By Lemma~\ref{lem:relalphaprops}\ref{lem:relalphaprops5}, there exists $\gamma<\alpha$ such that $\append{\mN}{a_0\hole{\mN}} \eq[\gamma] \append{\mM}{a_0\hole{\mN}}$. By definition:
\[
	\C\hole{\mN} = \appT{\mN}{T\hole{\mN}},\textrm{ and }
	\C'\hole{\mN}=\appT{\mM}{T\hole{\mN}}.
\]
So we construct the proof:
\[
	\infer[(T_\gamma)]{\append{\mN}{a_0\hole{\mN},\dots,a_n\hole{\mN}} \eq[\gamma] \append{\mM}{a_0\hole{\mN},\dots,a_n\hole{\mN}}}{
	\append{\mN}{a_0\hole{\mN}}\eq[\gamma] \append{\mM}{a_0\hole{\mN}}
	}
\]

In all the other cases, $\C\hole{\mN}\to_h\C'\hole{\mN}$, therefore $\C\hole{\mN} \eq[0]\C\hole{\mN}$.
\end{proof}

\begin{cor}\label{cor:black_magic}
Let $n\in\nat$, $\alpha > 0$, $\C\in\cMX$, $\mM,\mN\in\cM$.
If $\C\hole{\mM}\reddh\mach{x}_n$ and $\mM\pesc\mN$ then there exists $\gamma < \alpha$ such that $\C\hole{\mN} \eq[\gamma] \mach{x}_n$.
\end{cor}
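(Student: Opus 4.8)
The plan is to reduce this multi-step statement to the single-step Lemma~\ref{lem:black_magic} by first replaying the head reduction $\C\hole{\mM}\reddh\mach{x}_n$ as an $\mM$-underlined reduction of the context $\C$. Keeping $\alpha>0$ and $\mM\pesc\mN$ fixed throughout, I would use Lemma~\ref{lem:chemmeserve} to turn the hypothesis $\C\hole{\mM}\reddh\mach{x}_n$ into an underlined reduction $\C\reddh^\mM\C'$ with $\C'\hole{\mM}=\mach{x}_n$, and then prove by induction on the length $k$ of this underlined reduction the claim: whenever $\C\reddh^\mM\C'$ in $k$ steps with $\C'\hole{\mM}=\mach{x}_n$, there is $\gamma<\alpha$ such that $\C\hole{\mN}\eq[\gamma]\mach{x}_n$. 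The Corollary is exactly the instance extracted via Lemma~\ref{lem:chemmeserve}.

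For the base case $k=0$ I have $\C\hole{\mM}=\mach{x}_n$. Since $\mach{x}_n=\tuple{\vec\Null,\varepsilon,[]}$ has empty program, empty tape and only uninitialised registers, and since plugging $\mM$ can neither turn a $\Null$ register into a non-$\Null$ one nor create tape entries, the only possibilities are that $\C$ contains no hole (so $\C=\mach{x}_n=\C\hole{\mN}$) or that $\C=\xi$ with $\mM=\mach{x}_n$ (so $\C\hole{\mN}=\mN$). In the latter case $\mach{x}_n=\mM\pesc\mN$ while $\mach{x}_n$ does not reduce to a stuck machine; inspecting the rules that can produce a $\pesc$-pair (beyond reflexivity and symmetry, only $(\approx_0)$ and $(\approx_\alpha)$ are generative, and $(\approx_\alpha)$ forces \emph{both} machines to get stuck) shows that this pair can only arise from $\mach{x}_n\equiva\mN$, whence $\mN\eq[0]\mach{x}_n$. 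Either way $\C\hole{\mN}\eq[0]\mach{x}_n$ with $0<\alpha$.

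For the inductive step $k>0$ I would split the reduction as $\C\redh^\mM\C''\reddh^\mM\C'$. To fire Lemma~\ref{lem:black_magic} on the first step I must discharge its side condition $\C''\hole{\mM}\not\reddh\stuck{}$: by Lemma~\ref{lem:chemmeserve} the tail gives $\C''\hole{\mM}\reddh\C'\hole{\mM}=\mach{x}_n$, and since $\redh$ is deterministic (Lemma~\ref{lem:about_red}\ref{lem:about_red1}), stuck machines are final states (Remark~\ref{rem:aboutstuck}\ref{rem:aboutstuck2}), and $\mach{x}_n$ is itself a \emph{non-stuck} final state, the unique reduction out of $\C''\hole{\mM}$ terminates at $\mach{x}_n$ and never meets a stuck machine. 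Lemma~\ref{lem:black_magic} then yields $\gamma_1<\alpha$ with $\C\hole{\mN}\eq[\gamma_1]\C''\hole{\mN}$, while the induction hypothesis applied to the shorter reduction $\C''\reddh^\mM\C'$ yields $\gamma_2<\alpha$ with $\C''\hole{\mN}\eq[\gamma_2]\mach{x}_n$. Setting $\gamma=\max(\gamma_1,\gamma_2)<\alpha$, the rule $(\le_\alpha)$ lifts both equivalences to level $\gamma$ and transitivity $(\mathrm{Tr}_\gamma)$ gives $\C\hole{\mN}\eq[\gamma]\mach{x}_n$, closing the induction.

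The bulk of the real work sits inside Lemma~\ref{lem:black_magic}, so the remaining difficulty is essentially bookkeeping. The step I expect to be most delicate is discharging the hypothesis $\C''\hole{\mM}\not\reddh\stuck{}$ at every stage: this is precisely where one exploits that the target $\mach{x}_n$ is in a non-stuck final state together with determinism of head reduction, and it explains why the statement is phrased for the indeterminate machines $\mach{x}_n$ rather than for arbitrary final states. The ordinal accounting — finitely many applications, each producing an ordinal strictly below $\alpha$, so their maximum remains below $\alpha$ — is routine.
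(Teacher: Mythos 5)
Your proposal is correct and follows essentially the same route as the paper: convert the hypothesis into an $\mM$-underlined reduction via Lemma~\ref{lem:chemmeserve}, apply Lemma~\ref{lem:black_magic} to each single step (discharging its side condition $\C''\hole{\mM}\not\reddh\stuck{}$ from determinism of $\redh$ and the fact that $\mach{x}_n$ is a non-stuck final state), and combine the resulting ordinals $\gamma_i<\alpha$ by $(\le_\alpha)$ and $(\mathrm{Tr}_\alpha)$. Your treatment is in fact slightly more careful than the paper's, which silently identifies the end context $\C'$ with $\mach{x}_n$, whereas you explicitly analyse the base case $\C'\hole{\mM}=\mach{x}_n$ (including the degenerate possibility $\C'=\xi$, $\mM=\mach{x}_n$).
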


\begin{proof} Assume $\C\hole{\mM}\reddh\mach{x}_n$. Equivalently, by Lemma~\ref{lem:chemmeserve}, we have $\C\reddh^\mM\mach{x}_n$.
By definition, there exists $\C_1,\dots,\C_k\in\cMX$ such that
\[
	\C = \C_1\to_h^\mM\cdots\to_h^\mM \C_k = \mach{x}_n
\]
Notice that $\C_i\hole{\mM}\reddh \mach{x}_n$ and, since $\lnot\stuck{\mach{x}_n}$, we have $\C_i\hole{\mN}\not\reddh\stuck{}$.
By Lemma~\ref{lem:black_magic}, there exists $\gamma_1,\dots,\gamma_k<\alpha$ such that $\C_i\hole{\mN}\eq[\gamma_i]\C_{i+1}\hole{\mN}$.
By transitivity $(\mathrm{Tr}_\alpha)$ and $(\le_\alpha)$ we obtain $\mM\eq\mach{x}_n$ for $\alpha = \sup_i{\gamma_i}$.
\end{proof}

\begin{prop} Let $\mM,\mN\in\cM$, $\alpha\in\Count$ and $n\in\nat$.
If $\mM\eq \mN$ and $\mN\reddh\mach{x}_n$ then $\mM\reddh\mach{x}_n$.
\end{prop}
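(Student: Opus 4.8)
The plan is to prove, by transfinite induction on $\alpha$, the statement $P(\alpha)$ that for all $\mM,\mN\in\cM$, if $\mM\eq\mN$ and $\mN\reddh\mach{x}_n$ then $\mM\reddh\mach{x}_n$. Since by Lemma~\ref{lem:relalphaprops}\ref{lem:relalphaprops1} one has $\mM\equivea\mN$ iff $\mM\eq\mN$ for some $\alpha\in\Count$, establishing $P(\alpha)$ for every $\alpha$ immediately yields the still-open Lemma~\ref{lem:about:equivo}\ref{lem:about:equivo2} (apply $P(\alpha)$ to $\mN\eq\mM$, using the symmetry of $\eq$), and hence the consistency of the whole system. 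Throughout I assume, as transfinite induction hypothesis, that $P(\gamma)$ holds for all $\gamma<\alpha$.

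For the base case $\alpha=0$, Lemma~\ref{lem:relalphaprops}\ref{lem:relalphaprops2} turns $\mM\eq[0]\mN$ into $\mM\equiva\mN$; combining this with $\mN\reddh\mach{x}_n$ (which entails $\mN\equiva\mach{x}_n$) and transitivity gives $\mM\equiva\mach{x}_n$. By Theorem~\ref{thm:CR} there is $\mZ$ with $\mM\redd\mZ\invredd[\mach c]\mach{x}_n$; but $\mach{x}_n=\tuple{\vec\Null,\varepsilon,[]}$ is a $\red[c]$-normal form --- its empty program blocks every head step and its null registers and empty tape block every inner step --- so the reduction $\mach{x}_n\redd\mZ$ is empty, forcing $\mZ=\mach{x}_n$ and $\mM\redd\mach{x}_n$. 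It then remains to replace this arbitrary reduction by a head reduction: iterating the postponement of inner steps (Lemma~\ref{lem:standardization}) rewrites $\mM\redd\mach{x}_n$ as $\mM\reddh\mach W\redd[i]\mach{x}_n$, and since inner steps preserve the program, the number of registers, the pattern of null registers and the length of the tape, the only machine reducing to $\mach{x}_n$ by inner steps is $\mach{x}_n$ itself; hence $\mach W=\mach{x}_n$ and $\mM\reddh\mach{x}_n$.

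For the inductive case $\alpha>0$, I first record the single-step claim: if $\mach A\svirg\mach B$ and $\mach B\reddh\mach{x}_n$ then $\mach A\reddh\mach{x}_n$. By Lemma~\ref{lem:relalphaprops}\ref{lem:relalphaprops4} the step factors through a context-machine, $\mach A=\C\hole{\mM'}$ and $\mach B=\C\hole{\mN'}$ with $\mM'\pesc\mN'$; since $\pesc$ is symmetric we also have $\mN'\pesc\mM'$, so applying Corollary~\ref{cor:black_magic} to $\C\hole{\mN'}\reddh\mach{x}_n$ yields some $\gamma<\alpha$ with $\C\hole{\mM'}\eq[\gamma]\mach{x}_n$. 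As $\mach{x}_n\reddh\mach{x}_n$ holds trivially, the induction hypothesis $P(\gamma)$ gives $\mach A=\C\hole{\mM'}\reddh\mach{x}_n$, proving the claim. Now given $\mM\eq\mN$, Lemma~\ref{lem:relalphaprops}\ref{lem:relalphaprops3} exhibits a finite chain $\mM\svirg\mZ_1\svirg\cdots\svirg\mZ_k=\mN$; writing $\mZ_0=\mM$ and starting from $\mZ_k=\mN\reddh\mach{x}_n$, I apply the single-step claim to $\mZ_i\svirg\mZ_{i+1}$ for $i=k-1,\dots,0$, propagating the property $\reddh\mach{x}_n$ leftwards along the chain until it reaches $\mM$, which completes $P(\alpha)$.

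The essential difficulty is packaged entirely in Corollary~\ref{cor:black_magic}: it is what allows the ordinal to strictly decrease across each $\svirg$-step, and once it is available the induction above is routine. The only other delicate point is the base case, where one must not lose sight of the fact that the conclusion demands a head reduction $\reddh\mach{x}_n$ rather than an arbitrary $\red[c]$-reduction; bridging this gap is precisely the role played together by Church--Rosser for $\red[c]$ (Theorem~\ref{thm:CR}), the rigidity of the normal form $\mach{x}_n$, and the postponement of inner steps (Lemma~\ref{lem:standardization}).
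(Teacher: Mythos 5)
Your proof is correct and follows essentially the same route as the paper's: transfinite induction on $\alpha$, with the base case settled by Theorem~\ref{thm:CR} together with postponement of inner steps (Lemma~\ref{lem:standardization}), and the inductive case by extracting the chain from Lemma~\ref{lem:relalphaprops}\ref{lem:relalphaprops3}, factoring each $\svirg$-step through a context-machine via Lemma~\ref{lem:relalphaprops}\ref{lem:relalphaprops4}, and invoking Corollary~\ref{cor:black_magic} to decrease the ordinal before applying the induction hypothesis. Your write-up is in fact somewhat more explicit than the paper's, spelling out the symmetry of $\pesc$ needed to apply the corollary and the normal-form argument showing that the common $\red[c]$-reduct must be $\mach{x}_n$ itself.
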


\begin{proof} We proceed by induction on $\alpha$. Since we perform a double induction, the induction hypothesis with respect to this induction is called the $\alpha$-IH ($\alpha$-inductive hypothesis).

Case $\alpha = 0$. By Lemma~\ref{lem:relalphaprops}\ref{lem:relalphaprops2}, we get $\mM\equivea \mN\reddh\mach{x}_n$, so we conclude $\mM\reddh\mach{x}_n$ by confluence (Theorem~\ref{thm:CR}) and $\red[i]$-postponement (Lemma~\ref{lem:standardization}).

Case $\alpha > 0$. By Lemma~\ref{lem:relalphaprops}\ref{lem:relalphaprops3}, there exist $\mZ_1,\dots,\mZ_k$ such that
\begin{equation}\label{eq:svirg}
	\mM \svirg\mZ_1\svirg\cdots\svirg\mZ_k=\mN\reddh\mach{x}_n
\end{equation}
By induction on $k$, we prove that~\eqref{eq:svirg} implies $\mM\reddh\mach{x}_n$.
We call this $k$-IH\@.

Subcase $k =0$. Then $\mM =\mN\reddh\mach{x}_n$ and we are done.

Subcase $k >0$. From the $k$-IH we derive $\mZ_1\reddh\mach{x}_n$.
From $\mM\svirg\mZ_1$ and Lemma~\ref{lem:relalphaprops}\ref{lem:relalphaprops4}, there is a context-machine $\C$ such that $\mM = \C[\mM']$ and $\mZ_1 = \C[\mN']$ with $\mM'\pesc\mN'$ and $\C[\mN']\reddh\mach{x}_n$.
By applying Lemma~\ref{lem:black_magic} we obtain $\C[\mM'] \eq[\gamma] \mach{x}_n$ for some $\gamma<\alpha$.
We conclude by applying the $\alpha$-IH\@.
\end{proof}

From this proposition, Lemma~\ref{lem:about:equivo}\ref{lem:about:equivo2} follows by applying Lemma~\ref{lem:relalphaprops}\ref{lem:relalphaprops1}.

\section{Conclusions and Further Works}\label{sec:conclusions}

In this paper, we have shown that it is possible to obtain a model of the untyped \lam-calculus based on a kind of computational machines that operate exclusively on ``addresses'', without any reference to some basic data-type.
The result only depends on the assumption that every machine has a unique address (and \emph{vice versa} every address identifies a machine) and is completely independent from the specific nature of the addresses themselves.

A natural question that can be raised is whether \am s can be seen as a representation of Combinatory Logic's operational semantics in disguise, since their instructions essentially incorporate the contents of the rewriting rules of the basic combinators. To correct this simplistic point of view, observe that the address table map is an arbitrary bijection, whence there are uncountably many possible choices. In particular, address table maps may have arbitrary computational complexity. On the contrary, the operational semantics is constrained to work with the subterms of the current term, i.e.\ it uses a very ``narrow'' address table map. We plan to investigate in future works what possibilities arise from the extra degree of freedom given by the arbitrary nature of this map.

We would like to explore whether the theory of the \lam-model $\cS$ defined in Section~\ref{sec:consistency} depends on the specific nature of the bijection $\Lookup(\cdot) : \Addrs \to\cM$. As discussed in Remark~\ref{rem:forever}, certain ATMs display some peculiarities, since they may create infinite chains of references morally representing infinitary objects. In fact, given an ATM $\#(-)$ and an  injection $f:\nat\to\Addrs$, a simple application of Hilbert's Hotel allows to define a new ATM $\#'(-)$ where machines $(\mM^f_n)_{n\in\nat}$ satisfying $\mM^f_n = \tuple{f(n),\varepsilon,[\Lookup'(\mM^f_{n+1})]}$ exist.
However, these machines are not \lam-definable, whence they should simply constitute non-definable ``junk'' from the model-theoretic perspective. Therefore, we conjecture that $\Th{\cS}$ is actually independent from the choice of the lookup function $\Lookup(\cdot)$.
In case of a positive answer, it would be interesting to provide a complete characterization of the associated \lam-theory.

In Section~\ref{sec:consistency} we have shown that $\Th{\cS}$ is neither extensional nor sensible. This is due to the fact that we kept our construction tight: at each step --- from applicative structure, to combinatory algebra, and finally to \lam-model --- we added the minimal quotient resolving the issue. In order to obtain an extensional model, it would be sufficient to replace the rule $\extrule$ with a form of extensionality non-restricted to machines that become stuck once executed. Similarly, a sensible model can be obtained by collapsing all the addresses of those machines exhibiting a non-terminating behaviour when executed on a number of indeterminates large enough.
These quotients are not difficult to define, but the non-trivial problem becomes to prove that the resulting \lam-model is non-trivial. This is left for further works.

A different line of research, more in the direction of functional programming, is to expand the computational capabilities of \am s by adding simple data-types and the associated basic operations. In fact, although data-types are unnecessary to achieve Turing-completeness, they are desirable to perform arithmetical operations and conditionals.
Preliminary investigations~\cite{IntrigilaMM21} show that extending \am s with numerals, conditional branching, natural numbers basic arithmetic instructions opens the way for representing Plotkin's {\tt PCF}~\cite{Plotkin77}. These investigations show the precise simulation existing between \am 's head reduction and the corresponding evaluation strategy defined on {\tt PCF} extended with explicit substitutions~\cite{LevyM99}. 
We will check if results of this kind extend to the call-by-value untyped setting. To begin with, we plan to study whether \am s can be used to represent the crumbling abstract machines from~\cite{AccattoliCGC19}.


To perform some tests on \am s, we have implemented the formalism both in functional and imperative style. Even if the sources remain for internal use only, some technical choices deserve a discussion. Although not explicitly required by the  definition, any implementation must rely on a computable association between \am s and the corresponding addresses. To implement such a bijection, one could try to use as addresses the actual pointers to the structures representing the machines, but the referenced data might change without affecting the address.
A naive solution consists in defining an association list $\ell$ of type $\Addrs \times \cM$ and an incremental approach. The list $\ell$ is initialized as the empty-list. When a new machine $\mM$ is created, one checks whether $\mM$ belongs to $\pi_2(\ell)$: in the affirmative case there is nothing to do as the machine is already known; otherwise, a new address $a$ is generated and the pair $(a,\mM)$ is added to the list $\ell$. This guarantees that an address uniquely identifies a machine and that, when an address is used, the corresponding machine has already been introduced. For a more optimized solution one should employ the hash-consing technique, allowing to implement the same concept in a more efficient way.
\smallskip

\paragraph{Acknowledgements} This work is partly supported by ANR Project PPS, ANR-19-CE48-0014. We would like to thank Henk Barendregt for interesting discussions concerning the problem of finding a model of \lam-calculus based on recursive functions, as well as the role of the $(\omega)$-rule. 
We are grateful to the anonymous reviewers for the careful reading and insightful suggestions.
We also thank Nicolas M\"unnik for his comments on the paper.

\bibliographystyle{alphaurl}
\bibliography{biblio}


\end{document}